    \newcommand{\href}[2]{#2}
\theoremstyle{definition}
\newtheorem{theorem}{Theorem}[section]
\newtheorem{lemma}[theorem]{Lemma}
\newtheorem{definition}[theorem]{Definition}
\begin{document}
\title{Strict Self-Assembly of Fractals using Multiple Hands}
\author{
Cameron T. Chalk\footnotemark[1]\\
\and
Dominic A. Fernandez\footnotemark[1]\\
\and
Alejandro Huerta\footnotemark[1]\\
\and
Mario A. Maldonado\footnotemark[1]\\
\and
Robert T. Schweller\footnotemark[1]\\
\and
Leslie Sweet\footnotemark[1]
}
\footnotetext[1]{Department of Computer Science, University of Texas - Pan American,
\{ctchalk,dafernandez1,ahuerta5,mamaldonado6,
rtschweller,lgsweet\}@utpa.edu
This author's research was supported in part by National Science Foundation Grant CCF-1117672.}

\date{}
\maketitle

%\begin{abstract}
%In this paper, we study various ways of assembling fractals, specifically the Sierpinski triangle and the Sierpinski carpet, under different tile assembly models. We introduce and formally define the h-Handed Assembly Model, which allows for an arbitrary amount of assemblies to combine, in contrast to two assemblies in the Two-Handed Assembly Model. We show that it is possible to near perfectly assemble the infinite Sierpinski triangle and finitely assemble both the Sierpinski triangle and Sierpinski carpet under this new h-Handed Assembly Model. Additionally, we provide tile complexity results for all considered models in assembling finite iterations of the Sierpinski triangle and the Sierpinski carpet.
%\end{abstract}

\begin{abstract}
In this paper we consider the problem of the strict self-assembly of infinite fractals within tile self-assembly.  In particular, we provide tile assembly algorithms for the assembly of the discrete Sierpinski triangle and the discrete Sierpinski carpet within a class of models we term the \emph{$h$-handed assembly model} ($h$-HAM), which generalizes the 2-HAM to allow up to $h$ assemblies to combine in a single assembly step.  Despite substantial consideration, no purely growth self-assembly model has yet been shown to strictly assemble an infinite fractal without significant modification to the fractal shape.  In this paper we not only achieve this, but in the case of the Sierpinski carpet are able to achieve it within the 2-HAM, one of the most well studied tile assembly models in the literature.  Our specific results are as follows.  We provide a $6$-HAM construction for the Sierpinski triangle that works at scale factor 1, 30 tile types, and assembles the fractal in a \emph{near perfect} way in which all intermediate assemblies are finite-sized iterations of the recursive fractal. We further assemble the Sierpinski triangle within the $3$-HAM at scale factor 3 and 990 tile types.  For the Sierpinski carpet, we present a $2$-HAM result that works at scale factor 3 and uses 1,216 tile types.  We further include analysis showing that the aTAM is incapable of strictly assembling the non-tree Sierpinski triangle considered in this paper, and that multiple hands are needed for the near-perfect assembly of the Sierpinski triangle and the Sierpinski carpet.
\end{abstract}

\pagebreak

\section{Introduction}

\subsection{Self-Assembly.} Self-assembly is the process by which systems of simple objects autonomously organize themselves through
local interactions into larger, more complex objects. Self-assembly processes are abundant in nature and
serve as the basis for biological growth and replication. Understanding how to design molecular self-assembly systems that build complex, algorithmically specified shapes and patterns promises to be of fundamental importance for the future of nanotechnology.  Some of the most interesting algorithmically specified shapes are discrete fractals such as the Sierpinski triangle.  In this paper, we take on the classical problem of designing self-assembling systems for the strict assembly of discrete fractal patterns.  We consider this problem in the context of \emph{multiple handed} tile self-assembly, a natural generalization of two-handed tile self-assembly.

\subsection{Tile Assembly Model: Seeds, Hands, and more Hands.}  In the \emph{tile assembly model} system components are 4-sided Wang tiles~\cite{Wang63} which randomly agitate within the plane and stick together when bumping tile edges share glues with strong enough affinity. Tile attachment is random and based on tiles appearing in random places in the plane. This basic framework has been studied extensively and has also been developed into multiple different variants.  In this paper we consider three such variants: the \emph{abstract tile assembly model} (aTAM)~\cite{Winf98}, the two-handed tile assembly model (2-HAM)~\cite{doty:PIHSA, SS2013FEC, oneTile2014, SRTSARE, DDFIRSS07, fu2012SAGT,2HABTO,AGKS05g,CheDot12,SFTSAFT,DPR2013ICALP}, and a new generalization of the the 2-HAM termed the $h$-HAM.  The aTAM is the original and most widely studied model and is characterized by a \emph{seed} tile from which growth occurs by the attachment of singleton tiles given sufficient bonding strength.  In contrast, the 2-HAM has no seed and allows any two already built and potentially large assemblies to translate together and stick given enough bonding strength.  In the $h$-HAM presented in this paper, we generalize the $2$-HAM to allow up to a given number $h$ pre-built assemblies to come together to form a new producible assembly.  Although the $h$-HAM is seemingly implausible in experimentation settings, it is interesting to explore the properties of such a natural, more parallelized generalization of the $2$-HAM.  Further, the $h$-HAM may prove useful for designing $2$-HAM systems which are robust to multi-handed attachments as a sort of error-prevention mechanism for hierarchical self-assembly.
%The multiple handed assembly model is the natural generalization of the 2-HAM, as growth is not limited to combinations of two assemblies and many more could come together. The h-HAM is useful when modeling systems where errors are frequent. The model could be used to see how systems behave when assemblies are allowed to be formed from more than two assemblies.

% RothSier should be cited as experimental work
\subsection{Strict Self-Assembly of Discrete Fractals.}  We focus our consideration of the power of multiple handed self-assembly on a considerable problem in algorithmic self-assembly theory:  the strict self-assembly of discrete fractals.  The self-assembly of discrete fractals and infinite patterns has been a benchmark problem in self-assembly that has yielded both theoretical~\cite{WinBek03,padilla2013ASP,KS2013SAR,SSADST,ASAOTST,SADS,jSADSSF,RNAPods,KSL2009SAD,RTFDA} and experimental work~\cite{RothSier}.  The interest in self-assembled fractals stems from their geometric aperiodicity, which forces the development of new fundamental self-assembly design techniques, along with their frequent occurrence in natural biological systems.  However, much of the previous work on self-assembled fractals centers on the \emph{weak} assembly (described in~\cite{SADS, jSADSSF,RNAPods}) of fractal patterns in which the assembled shape is permitted to place tiles of a special color in locations that are not part of the fractal shape, or permits some level of substantial \emph{warping} of the fractal shape in which tiles are placed outside of the fractal shape in some limited fashion~\cite{ASAOTST, SSADST, KS2013SAR,jSADSSF}, or adds substantial power to the assembly model such as the ability to detach previously built subassemblies~\cite{padilla2013ASP}.  A challenging problem is the \emph{strict} assembly of fractals in a purely growth (non-detaching) model in which the shape of the assembly is exactly the shape of the fractal.  The difficulty stems from the intricate, thin features that characterize infinite fractals.  Pure growth models must decide how to build a shape through a single mechanism, the placement of tiles, which inherently uses up geometric space.  When that space is drastically constrained, typical approaches for self-assembly fail, and in many cases the failure is unavoidable~\cite{jSADSSF, SPFDNSS, STFDNSS}.  In fact, no pure growth model of self-assembly has yet been shown to strictly self-assemble a fractal pattern such as the Sierpinski triangle or Sierpinski carpet without substantial modification to the fractal shape that goes beyond a simple scale factor increase.

\subsection{Our Results.}  We focus on two discrete fractals, a non-tree version of the discrete Sierpinski triangle, and the standard Sierpinski carpet.  Note that the non-tree version of the discrete Sierpinski triangle considered in this paper is a different discretization than previous studies of fractal assembly.  We present two systems that strictly self-assemble the non-tree Sierpinski triangle.  The first system works within the $6$-handed model, uses a surprisingly small number of tile types (30), achieves optimal scale factor of 1, and achieves a desirable property we term \emph{near-perfect} assembly in which the intermediate assemblies of the system are almost exactly the finite-sized iterations that define the infinite fractal shape.  The second system works within the $3$-handed assembly model and achieves scale factor 3.  Finally, we present our most intricate result which is the strict self-assembly of the Sierpinski carpet at scale factor 3 within the $2$-HAM model.  We accompany these positive results with some impossibility results which, along with the positive results, show provable increases in power with multiple handed assembly. In particular, we show that no aTAM system self-assembles our non-tree Sierpinski triangle,  and we show that the non-tree Sierpinski triangle and the Sierpinski carpet cannot be near-perfectly assembled with fewer than $3$ and $8$ hands respectively.  Our results are summarized in Table~\ref{resultsTables}.

\subsection{Outline.}  The remainder of this paper is organized as follows.  In Section~\ref{sec:definitions} we define the aTAM, the 2-HAM, and the $h$-HAM tile assembly models, as well as the non-tree Sierpinski triangle and Sierpinski carpet fractals.  In Sections ~\ref{sec:triangle} and ~\ref{sec:carpet} we present our positive results for the strict self-assembly of the non-tree Sierpinski triangle and the Sierpinski carpet respectively.  In Section~\ref{sec:impossibility}, we present our impossibility results. Finally, in Section ~\ref{sec:futureWorks}, we present some directions for future work that arise from the study of fractal self-assembly and the $h$-HAM.

\begin{center}
\begin{table}
\caption{Our results for the non-tree Sierpinski triangle and the Sierpinski carpet are summarized here.}
\label{resultsTables}
% For LaTeX tables use
\begin{tabular}{lccccc}
\hline\noalign{\smallskip}
\multicolumn{6}{c}{Infinite Fractals}  \\
\noalign{\smallskip}\hline\noalign{\smallskip}
Fractal & Model & Scale & Tiles & Near Perfect? & Theorem  \\
\hline\noalign{\smallskip}
Sierpinski Triangle & 3-HAM & 3 & 990 & No & \ref{S_3HAM}  \\
Sierpinski Triangle & 6-HAM & 1 & 30 & Yes & \ref{S_6HAM}  \\
Sierpinski Carpet & 2-HAM & 3 & 1,216 & Impossible for $h<8$ & \ref{C_2HAM}, \ref{thm:nearPerfectImpossible}  \\
\hline\noalign{\smallskip}
Sierpinski Triangle & aTAM & $c$ & \multicolumn{2}{c}{Impossible} & \ref{thm:aTAMImpossible}  \\
\hline\noalign{\smallskip}
\end{tabular}
\end{table}
\end{center} 
\section{Definitions and Model}
\label{sec:definitions}

\subsection{Tiles and Assemblies}
\paragraph{Tiles.}  Consider some alphabet of glue types $\Pi$.  A tile is a finite edge polygon with some finite subset of border points each assigned some glue type from $\Pi$.  Further, each glue type $g \in \Pi$ has some non-negative integer strength $str(g)$.  Finally, each tile may be assigned a finite length string \emph{label}, e.g., ``black",``white",``0", or ``1".  In this paper we consider a special class of tiles that are unit squares of the same orientation with at most one glue type per edge, with each glue being placed exactly in the center of the tile's edge. Further, for simplicity, we assume each tile center is located at an integer coordinate in $\mathbb{Z}^2$.

\paragraph{Assemblies.}  An assembly is a set of tiles at unique coordinates in $\mathbb{Z}^2$.  For a given assembly $\Upsilon$, define the \emph{bond graph} $G_\Upsilon$ to be the weighted graph in which each element of $\Upsilon$ is a vertex, and each edge weight between tiles is the sum of the strengths of the overlapping, matching glue points of the two tiles.  An assembly $C$ is said to be \emph{$\tau$-stable} for positive integer $\tau$ if the bond graph $G_C$ has min-cut at least $\tau$, and \emph{$\tau$-unstable} otherwise. Note that if the set of border points of all tiles in an assembly is not a connected set, then the assembly cannot be $\tau$-stable (for positive $\tau$).  Note that only overlapping glues that are the same type contribute a non-zero weight, whereas overlapping, non-equal glues always contribute zero weight to the bond graph.  Relaxing this restriction has been considered as well~\cite{AGKS05g}.  Additionally, with the square tiles considered in this paper, stable assemblies will necessarily consist of tiles stacked face to face, forming a subset of the 2D grid denoted by the tiled locations in the assembly. We refer to any subset of an assembly $A$ as a subassembly of $A$. For an assembly $A$ and integer vector $\vec{v} = (v_1, v_2)$, let $A_{\vec{v}}$ denote the assembly obtained by translating each tile in $A$ by vector $\vec{v}$.  A \emph{shape} is any subset of $\mathbb{Z}^2$, and the shape of an assembly $A$ is defined to be the shape consisting of the set of coordinate locations for the centers of each tile in $A$.

%For an assembly $A$, let $A^*$ denote the set of all assemblies that are equal to $A$ up to translation.  For a set of assemblies $T$, let $T^*$ denote the set of all assembly sets $A^*$ such that $A\in T$.  Put another way, $T^*$ is a set that represents a set of assemblies $T$ if we do not care about translation.

\paragraph{Combinable Assemblies.}
Informally, a set of $h$ assemblies is said to be $\tau$-combinable if the assemblies can be translated together to form a $\tau$-stable assembly $C$.  Formally, a set of stable assemblies $\{A_1, \ldots, A_h\}$ is said to be $\tau$-combinable into an assembly $C$ if there exists a set of assemblies $\{A'_1,\ldots, A'_h\}$ such that each $A'_i$ is a translation of $A_i$, and $C= \bigcup_{i=1}^h A'_i$ is a $\tau$-stable assembly.

\subsection{Abstract Tile Assembly Model (aTAM)}
A \emph{tile system} in the Abstract Tile Assembly Model (aTAM) is an ordered triple $\Gamma =(T,\sigma,\tau)$ where $T$ is a set of tiles called the \emph{tile set}, $\sigma$ is an assembly over $T$ called the \emph{seed}, and $\tau$ is a positive integer called the \emph{temperature}.  As notation we denote the set of all tiles that are a translation of some tile in $T$ as set $T^*$.  Assembly in the aTAM proceeds by growing from assembly $\sigma$ by any sequence of single tile attachments from $T$ so long as each tile attachment connects with strength at least $\tau$.  Formally, we define what can be built in this fashion as the set of producible assemblies:

\begin{definition}[aTAM Producibility]
    For a given aTAM system $\Gamma=(T,\sigma,\tau)$, the set of \emph{producible assemblies} for system $\Gamma$, $\texttt{PROD}_\Gamma$, is defined recursively:
    \begin{itemize}
        \item (Base) $\sigma \in \texttt{PROD}_\Gamma$
        \item (Recursion) For any $A\in \texttt{PROD}_\Gamma$ and $b \in T^*$ such that $C=A\bigcup \{b\}$ is $\tau$-stable, then $C \in \texttt{PROD}_\Gamma$.
    \end{itemize}
\end{definition}
As additional notation, we say $A \rightarrow^\Gamma_1 B$ if $A$ may grow into $B$ through a single tile attachment, and we say $A \rightarrow^\Gamma B$ if $A$ can grow into $B$ through 0 or more tile attachments.  For a shape $S$, we say a system $\Gamma$ \emph{uniquely assembles} $S$ if for all $A \in \texttt{PROD}_\Gamma$ there exists a $B\in \texttt{PROD}_\Gamma$ of shape $S$ such that $A \rightarrow_\Gamma B$.  An \emph{assembly sequence} is a way to denote a particular possible sequence of growth for a given system and is formally defined to be any sequence (finite or infinite) of assemblies $\langle A_0, A_1, A_2, \ldots \rangle$ such that $A_0 = \sigma$, and $A_i \rightarrow^\Gamma_1 A_{i+1}$ for each $i$.  For a more detailed description of the aTAM and related concepts and metrics, see~\cite{2HABTO}.

\subsection{Multiple Handed Tile Assembly Model ($h$-HAM)}
The multiple handed tile assembly model generalizes the two-handed assembly model (2-HAM) to allow for potentially more than two hands.  Formally, a multiple handed tile assembly system is an ordered triple $(T,\tau,h)$ where $T$ is a set of tiles called the \emph{tile set}, $\tau$ is a positive integer called the \emph{temperature}, and $h$ is a positive integer called the \emph{number of hands}.  In this paper we sometimes refer to a system $(T,\tau,h)$ as an $h$-HAM system $(T,\tau)$.  Assembly in an $h$-HAM system $(T,\tau)$ proceeds by repeatedly combining up to $h$ assemblies at a time to form new $\tau$-stable assemblies.  Formally, we define what can be built in this fashion as the set of producible assemblies:

\begin{definition}[$h$-HAM Producibility]
    For a given $h$-HAM system $\Gamma =(T,\tau)$, the set of \emph{producible assemblies} for system $\Gamma$, $\texttt{PROD}_\Gamma$, is defined recursively:
    \begin{itemize}
        \item (Base) $T^* \subseteq \texttt{PROD}_\Gamma$
        \item (Recursion) For any set of $r$ assemblies, $0\leq r \leq h$, if a producible set of assemblies $A=\{A_1,\ldots, A_r\} \subseteq \texttt{PROD}_\Gamma$ is $\tau$-combinable into $B$, then $B \in \texttt{PROD}_\Gamma$.
    \end{itemize}
\end{definition}

Typically, we require that an initial assembly set $T$ consist only of stable assemblies.  In this paper, we further restrict $T$ to consist of singleton tile assemblies and thus refer to $T$ as the \emph{tile set} of the system, and we refer to $|T|$ as the \emph{tile complexity} of $\Gamma$.  In the case of $h=2$, we have the standard $2$-HAM tile assembly model in which assembly proceeds by repeatedly combining any pair of combinable assemblies.  We encourage the reader to see~\cite{2HABTO} for a number of 2-HAM related concepts and metrics not explicitly discussed here in this compact definition.  One important concept we use in this paper is the assembly of infinite shapes.  In this paper, we use the \emph{finite assembly}~\cite{2HABTO} of infinite shapes as our definition of what the assembly of infinite shapes means for an $h$-HAM system.  A system $\Gamma$ is said to finitely assemble an infinite shape $X \subset \mathbb{Z}^2$ if every finite producible assembly of $\Gamma$ has a possible way of growing into an assembly that places tiles exactly on those points in $X$ (appropriately translated if necessary).  For the remainder of this paper we will use the term ``self-assembles" rather than ``finitely self-assembles" as there is only one such concept under consideration in this paper.  %Of note is that our positive results also satisfy the stricter definition for assembly of infinite shapes presented in~\cite{2HABTO} as well.

\subsection{Sierpinski Shapes}
In this section we define the non-tree discrete Sierpinski triangle and the discrete Sierpinski carpet.  The Sierpinski triangle we consider in this paper is not the shape considered in earlier work in tile self-assembly. The main distinction is that our Sierpinski triangle is not a \emph{tree} shape.  This is important since our positive results rely fundamentally on the non-tree shape of the fractal.  In the case of the tree version of the Sierpinski triangle, \cite{2HABTO} have shown that the shape cannot be strictly assembled at scale factor 1 in either the aTAM or the 2-HAM, and this proof applies more generally to the $h$-HAM.  However, it remains unproven (but we conjecture) that the tree Sierpinski triangle cannot be strictly assembled at any scale within the $h$-HAM.
 Throughout the remainder of the paper, we refer to the non-tree Sierpinski triangle as the Sierpinski triangle.

\begin{definition} \label{def:s_triangle}
	(Discrete Sierpinski Triangle).
	Let $S_{i+1} = S_i \cup (S_i + 2^i V)$  where V = \{(1,2),(-1,2)\} , and $S_0$ = \{(0,0),(-1,0),(0,1),(-1,1)\}.
	The discrete Sierpinski triangle can be defined as the set $S_\infty = \bigcup_{i=0}^\infty S_i$. (Figure \ref{triangle}).
	$S^c_\infty$ and $S^c_i$ are factor $c$ scalings of $S_\infty$ and $S_i$.
\end{definition}

\begin{definition} \label{def:s_carpet}
	(Discrete Sierpinski Carpet).
	Let $C_{i+1} = C_i \cup (C_i + 3^i V)$, where $V$ = \{(1,0), (0,1), (0,2), (2,0), (1,2), (2,1), (2,2)\}, and $C_0 = (0,0)$.
	The discrete Sierpinski carpet will then be the set $C_\infty = \bigcup_{i=0}^\infty C_i$. (Figure \ref{carpet})
	$C^c_\infty$ and $C^c_i$ are factor $c$ scalings of $C_\infty$ and $C_i$.
\end{definition}

 \begin{figure}
\begin{center}
	\begin{minipage}{0.45\textwidth}
		\includegraphics[width=\textwidth]{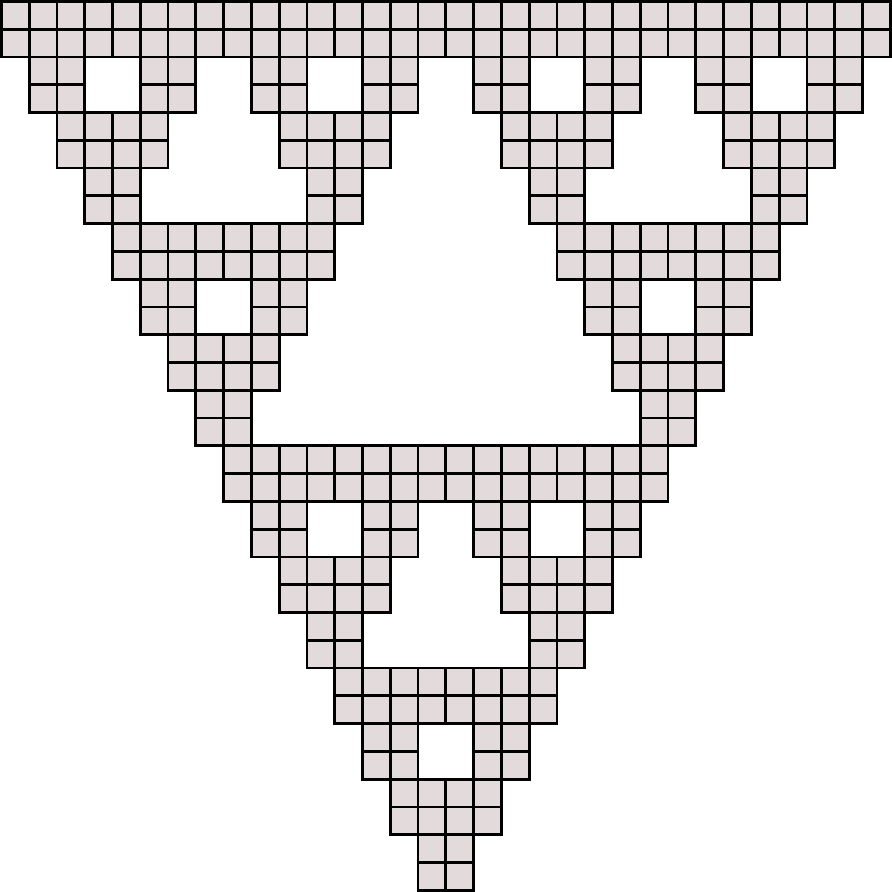}
		
		\caption{Scale 1 Sierpinski triangle. ($S_i^1$)}
        \label{triangle}
		\end{minipage}\hfill
	\begin{minipage}{0.45\textwidth}
		\includegraphics[width=\textwidth]{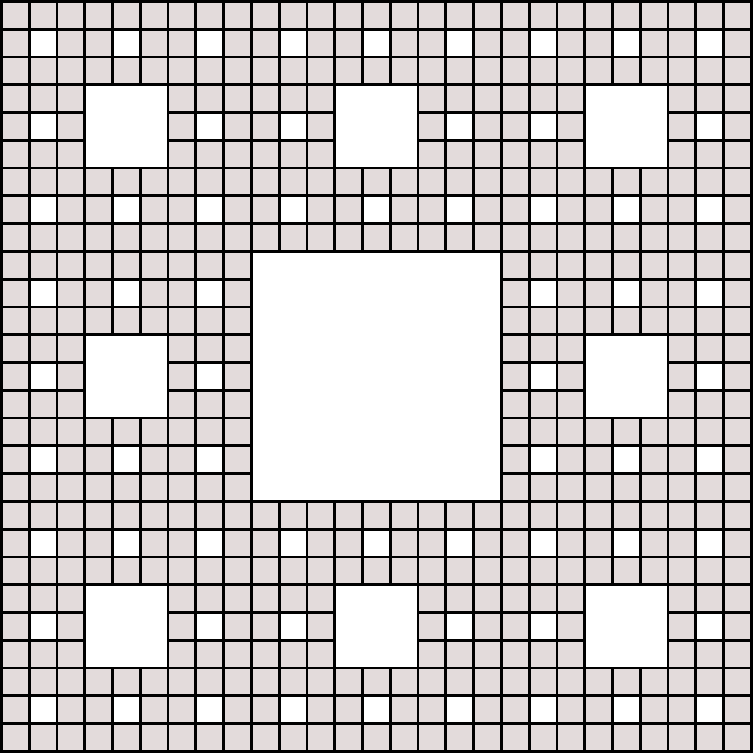}
		\caption{Scale 1 Sierpinski carpet. ($C_i^1$)}

		\label{carpet}
	\end{minipage}
\end{center}
\end{figure}

\section{Strict Self-Assembly of the Discrete Sierpinski Triangle}
\label{sec:triangle}
In this section, we present two constructions for the strict assembly of the non-tree discrete Sierpinski triangle.  Note that the assembly of the standard discrete Sierpinski triangle was proven to be impossible in ~\cite{2HABTO}.  Our first construction is a 6-HAM system that strictly assembles the non-tree Sierpinski triangle at optimal scale 1 and 30 distinct tiles.  Further, this construction achieves \emph{near perfect assembly} (Def. \ref{nearPerfectAssembly}).  Our second construction uses only 3 hands and works at scale 3. We conjecture that 3 is the minimum number of hands required to assemble the non-tree Sierpinski triangle.

\subsection{6-Handed Assembly of the Sierpinski Triangle}

% intro
In this section, we provide a 6-HAM system that assembles the discrete Sierpinski triangle at scale 1, using 30 distinct tiles.  The system also assembles the Sierpinski triangle in an idealized manner that we call \emph{near perfect assembly} (Def. \ref{nearPerfectAssembly}).  When we say a system near perfectly assembles an infinite fractal, we mean that the system will only produce assemblies that differ from some finite stage (an index in sequence $P$) of the fractal by a constant number of points.  In other words, if we take any producible assembly in the system, we will find one stage of the fractal that is almost equal to the shape of the chosen assembly with only a constant number of points missing.  The inverse will also be true; if we pick any stage of the fractal, we will find one or more assemblies that are almost equal to that stage.

\begin{definition}(Near Perfect Assembly).
\label{nearPerfectAssembly}
An infinite fractal pattern $ P=\bigcup_{i=n}^\infty P_i$, where sets $P_{i+1}$ are obtained by a function $f(P_i)$, is said to be \emph{near perfectly assembled} by an $h$-HAM system $\Gamma = (T,\tau)$ if the system assembles $P$, and if:
\begin{enumerate}
	\item There exists a $c$ such that for all $i$, there exists an assembly $A \in \texttt{PROD}_\Gamma$ of shape $S_A$ such that $|P_i\setminus S_A| \leq c$. In other words, for every $P_i$, there exists at least one producible assembly $A \in \texttt{PROD}_\Gamma$ whose shape is a subset of $P_i$ smaller by at most a constant number of points independent from $i$.
    %\item There is a $c$ such that for all $i, |P_i\setminus A| \leq c$. In other words, for every $P_i$, there exists at least one producible assembly $A \in \texttt{PROD}_\Gamma$ whose shape, appropriately translated, is a subset of $P_i$ smaller by at most a constant number of points independent from $i$.
	\item For any producible assembly $A \in \texttt{PROD}_\Gamma$, the shape of some translation of $A$ is a subset of some $P_i$, smaller by at most a constant number of points independent from $i$.
\end{enumerate}

\end{definition}

\begin{theorem}
\label{S_6HAM}
There exists a 6-HAM system that near perfectly assembles the discrete Sierpinski triangle at scale factor 1, and has a tile complexity of 30.
\end{theorem}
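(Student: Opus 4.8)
The plan is to exhibit an explicit $6$-HAM system $\Gamma=(T,\tau)$ with $|T|=30$ and $\tau=2$, and then to prove by induction on the fractal index $i$ that the producible assemblies of $\Gamma$ are, up to translation and a bounded number of boundary tiles, exactly the finite stages $S_i$ of Definition~\ref{def:s_triangle}. Once this structural characterization is in hand, both clauses of Definition~\ref{nearPerfectAssembly} and the requirement that $\Gamma$ finitely assembles $S_\infty$ follow with little additional work: clause~(1) is witnessed by the producible stages themselves, clause~(2) is precisely the content of the characterization, and finite assembly follows because every producible near-copy of $S_i$ can be extended to a near-copy of $S_{i+1}$ by supplying the two missing translated copies as additional hands, giving an assembly sequence whose limit is a fixed translate of $S_\infty$.

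The design idea is to have each producible copy of a stage $S_i$ present, along its three contact regions (the bottom-to-top-left, bottom-to-top-right, and top-left-to-top-right interfaces of the recursion $S_{i+1}=S_i\cup(S_i+2^iV)$), glues that match only the complementary contact regions of the neighbor it is meant to meet in the fractal arrangement. Because the same $30$ tiles are reused at every scale, the central difficulty is purely geometric and combinatorial: I must choose glue types and strengths so that (i) the three sub-triangles lock together at exactly the intended offsets and in no others, and (ii) no proper, incorrectly sized, or misaligned sub-collection of pieces is $\tau$-stable. At scale $1$ these two requirements are in genuine tension, since each inter-copy interface consists of only one or two unit edges; a strength-$2$ interface would let a pair of copies bond prematurely (violating clause~(2)), while purely single-strength interfaces can leave some individual copy attached by total strength below $\tau$. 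Resolving this tension forces several producible pieces to come together in a single coordinated combination step, and the largest such step in the construction involves six assemblies, which is what fixes $h=6$.

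With the tile set fixed, the inductive step has two halves. For producibility I would exhibit the specific multiset of producible pieces (the sub-triangles, together with any locking gadgets used in the base stage) that combines into $S_{i+1}$, and verify $\tau$-stability by checking that the bond graph $G_{S_{i+1}}$ has min-cut at least $\tau$ — the only cuts to examine are those separating whole sub-triangles, since each sub-triangle is $\tau$-stable by the induction hypothesis. For the converse I would prove a \emph{rigidity lemma}: any set of at most six producible assemblies that is $\tau$-combinable produces a translate of some $S_{j}$, so that no ``spurious'' assembly arises. This uses the contact-region glue discipline to show that complementary interfaces can meet only at the designed offsets, and strength bookkeeping to show that any combination omitting a required piece, or joining pieces at a wrong alignment, fails to reach min-cut $\tau$ and hence is not producible.

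I expect the rigidity lemma to be the main obstacle. Everything else is bookkeeping: counting tiles, checking a finite number of min-cuts, and unwinding the near-perfect definition. The hard part is ruling out the unintended assemblies that the self-similarity of the Sierpinski triangle makes available at scale $1$ — for instance, two translated copies of a stage whose facing edges accidentally align, or a copy that attaches to a deeper feature of a larger stage. The proof must show that the glue assignment makes every such configuration either geometrically impossible or $\tau$-unstable, and it is precisely the need to preclude these partial bonds while still achieving stability that justifies the six-handed combination step.
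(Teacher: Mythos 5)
Your proposal follows essentially the same route as the paper's proof: an explicit 30-tile, temperature-2 system whose only large producible assemblies are ``near-triangle'' assemblies (stages $S_i$ minus a constant number of corner points), with condition~(1) witnessed by the stage assemblies and condition~(2) established by an inductive rigidity argument (the paper inducts on assembly size rather than stage index), and with the crux being exactly your six-piece coordinated combination step, which in the paper takes the form of three equal-sized near-stages and three junction-filling helper assemblies $\alpha,\beta,\gamma$ closing a strength-1 ``loop'' that no proper subset or size-mismatched collection can stabilize. Two details to reconcile with the actual construction: the sub-triangles never bond directly through complementary contact-region glues as your design sketch suggests --- all inter-piece bonds are mediated by the three helpers, which occupy the six missing junction/corner points and explain why near-stages differ from $S_i$ by exactly 6 points --- and your rigidity lemma needs a size threshold (the paper uses 30 tiles) to exempt the producible partial base and helper pieces, which are not near-copies of any $S_j$.
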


\begin{proof}
The proof is by construction.  We first show how condition 1 is met by presenting the tiles in the tileset, and describing how they will combine to produce assemblies with a shape that is 6 points away from an $S_i$, for all $S_i$.  We then show that condition 2 is also satisfied by doing induction on the size (number of tiles) of the producible assemblies, and proving that the shape of all producible assemblies of size greater than 30 must be a subset of a stage of the Sierpinski triangle, smaller by only 6 points.

Consider 6-HAM system $\Gamma = (T,\tau)$ with tile types $T$ shown in Figure \ref{S_tileset}, and $\tau = 2$.
The tile set $T$ (made up of $30$ tiles) is designed to meet both conditions required for near perfect assembly (Def. \ref{nearPerfectAssembly}) of the discrete Sierpinski triangle.  The near perfect assembly feature of the system is achieved by taking advantage of the six hands and the temperature to prevent any other wrong assemblies from combining.  Each of the large producible assemblies is made up of six pieces, which must be assembled before they combine into the next stage ($S_i$) of the Sierpinski triangle.

\begin{figure}
\begin{center}
	\subfigure[The ``base'' assemblies $b1, b2$, and $b3$.] {
		\includegraphics[width=.40\textwidth]{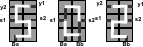}
		\label{fig:tileSetBase}
	} \qquad
	\subfigure[The ``helper'' assemblies. We refer to them as the $\alpha, \gamma$, and $\beta$ assemblies.] {
		\includegraphics[width=.30\textwidth]{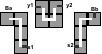}
		\label{fig:tileSetHelper}
	}
\caption{Tile set $T$ used to near perfectly assemble a Sierpinski triangle. White lines show a necessary sequence in which the pieces must be constructed in order to achieve near perfect assembly.}
\label{S_tileset}
\end{center}
\end{figure}

\paragraph{Condition 1}
To satisfy condition 1 of near perfect assembly, all stages ($S_i$) of the discrete Sierpinski triangle need to have at least one assembly with a shape that is almost, by a constant number of points, equal to that stage.
Consider the first stage $S_0$ of the triangle, whose shape is defined to be a 2x2 square with its leftmost bottom point at (-1,0).
An assembly of either $b1$ or $b3$ with the bottommost four tiles will form the squared shape exactly, with no points of difference (Figure \ref{S_missingTiles}).
The next stage, $S_1$, consists of three copies of the $S_0$ shape, one staying in the original's position, one translated up and to the left by 2 and 1 points respectively, and another one translated up and to the right by the same amount.
The shape of $b1, b2$, or $b3$ is the closest to $S_1$, and it is smaller than this stage by 6 tiles (it is missing the two leftmost, bottommost, and rightmost tiles) as shown in Figure \ref{S_missingTiles}.
$S_2$ follows the same pattern, consisting of three copies of $S_1$ translated by the specified amount.
The assembly with the biggest subset of $S_2$ will be composed of the three base assemblies, glued together by the set of $\alpha, \beta$, and $\gamma$ helper assemblies shown in Figure \ref{fig:tileSetHelper}.
These six assemblies will need to form a``loop" (Figure \ref{S_missingTiles}) joining them all together or the assembly will not be $\tau$-stable.

\begin{figure}
\begin{center}
	\includegraphics[width=.70\textwidth]{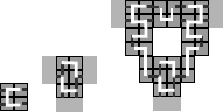}
	\caption{Assemblies used to satisfy condition 1 of near perfect assembly for $S_0, S_1$, and $S_2$.}
	\label{S_missingTiles}
\end{center}
\end{figure}

The resulting assembly's shape is a subset of $S_2$, smaller by 6 points.
Condition 1 for all of the following stages will be met in the same way: three copies of the assembly used to satisfy the previous stage, all joined together by a new set of $\alpha, \beta$, and $\gamma$ assemblies.
This process is shown in Figure \ref{S_stagesFloat} for a small part of the system, but it will be repeated for any $S_i$.

\begin{figure}
\begin{center}
	\includegraphics[width=.90\textwidth]{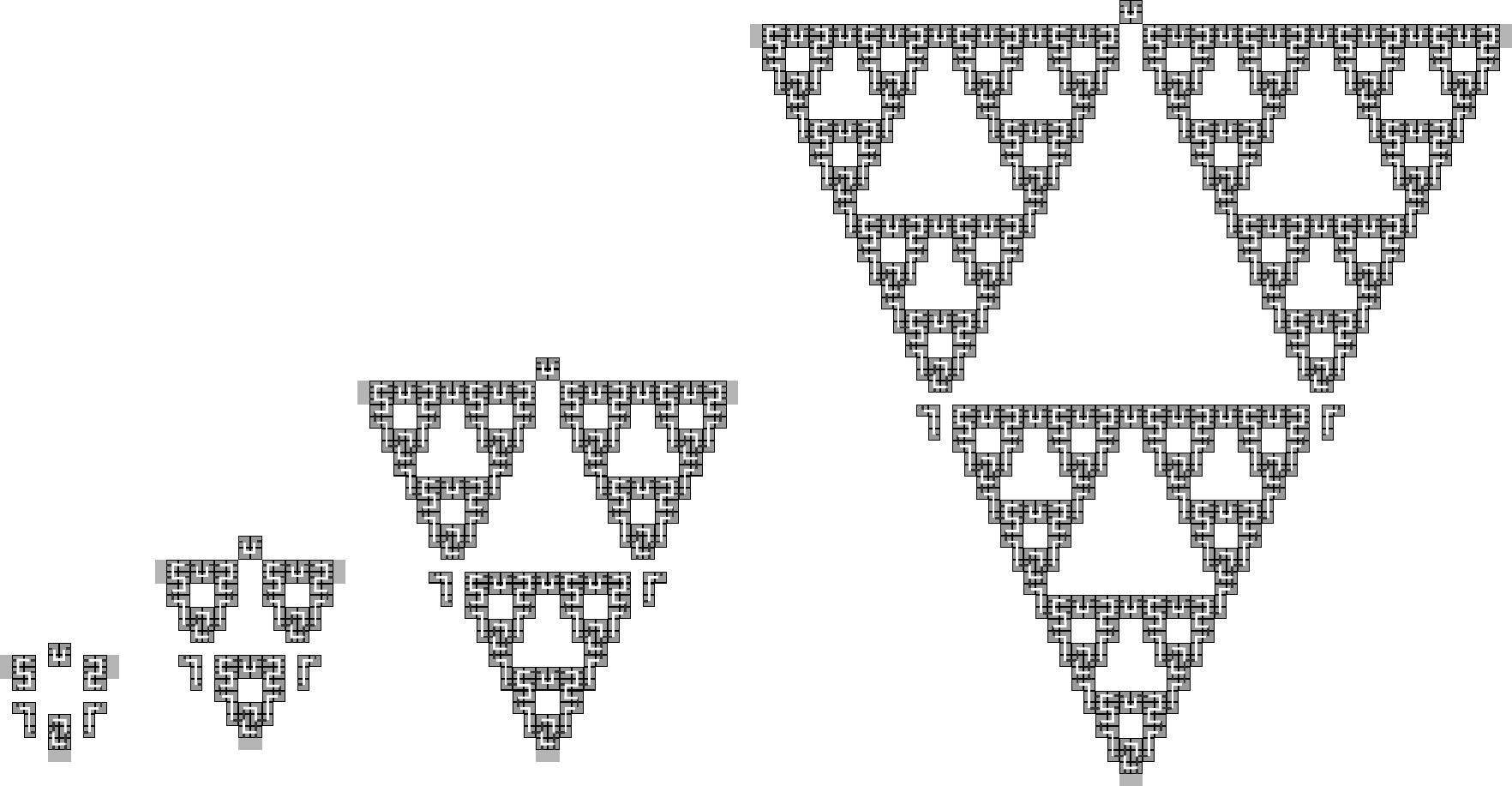}
	\caption{Stages of growth for 6-HAM system $\Gamma$. The floating pieces are separated to easily distinguish between the six different assemblies that combined to form a new iteration from the previous one.}
	\label{S_stagesFloat}
\end{center}
\end{figure}

Note that these specific assemblies are smaller than an $S_i$ for all $S_i$ by 6 tiles, and therefore $\Gamma$ satisfies condition 1 of near perfect assembly of the Sierpinski triangle.

%Condition 2: induction
\paragraph{Condition 2}
To see how $\Gamma$ satisfies condition 2, we first define \emph{near-triangle assemblies}.  Near-triangle assemblies are a class of assemblies whose shape is a subset of some stage of the Sierpinski triangle ($S_i$) with the same 6 corner points missing (the two leftmost, rightmost and bottommost points). Near-triangle assemblies must also have the same set of exposed glues at the same relative points.  One example of a near-triangle assembly is the third assembly in Figure \ref{S_missingTiles} that is close to stage $S_2$.  The shape of this assembly is missing the six corner points. Further, tiles at points adjacent to these missing points expose glues that allow only helper tiles to attach.  We use notation $A_i$ to denote the near-triangle assembly with shape close to $S_i$.  Note that all near-triangle assemblies with shape close to $S_i$ are of size $ 4 \times 3^i - 6$.

% hypothesis
To show system $\Gamma$ meets condition 2 of near perfect assembly we prove by induction, on the size of the assemblies, that all producible assemblies of size greater than 30 must be near-triangle assemblies.  Note that assemblies smaller than 30 are base or helper assemblies, which can only combine to form near-triangle assembly $A_2$.
Consider the near-triangle assembly shown in Figure \ref{S_missingTiles} (third assembly), of size $s = 30$ as the base case. Since the base case ($A_2$) is a near-triangle assembly we assume all producible assemblies of size $s > 30$, up to assembly $A_n$ of size $ 4\times 3^n - 6$, are near-triangle assemblies of some $S_i$ for $i \leq n$.  Assembly $A_n$ is a near-triangle assembly, which means the six corner points in $S_n$ are missing from $A_n$, and it also exposes the same glues as all other near-triangle assemblies.

Only helper tiles can attach to the exposed glues on near-triangle assemblies, and helper tiles attach only to near-triangle assemblies or base pieces, so the assembly of size $4 \times 3^n-5$ (or any bigger producible assembly) has to follow the same process as the one shown in Figure \ref{S_stagesFloat}, where 3 equal-sized near-triangle assemblies must be combined with the 3 helper assemblies, or it is not producible.
The process must be followed because if the three near-triangle assemblies are not the same size, the distance between at least two assemblies is not filled and the loop described before is not closed, (as shown in Figure \ref{S_mismatchedSize}) resulting in an assembly that is not $\tau$-stable.  Any assembly of size $4\times 3^n-5$ is not producible since the same type of attachments must happen.  In fact, we can be sure $A_n$ can only grow into $A_{n+1}$ which is a near-triangle assembly of size $4\times 3^{n+1} - 6$.
 $A_n$ cannot grow into an assembly that is not $A_{n+1}$, so $\Gamma$ meets condition 2 of near perfect assembly.

\begin{figure}
\begin{center}
	\includegraphics[width=.50\textwidth]{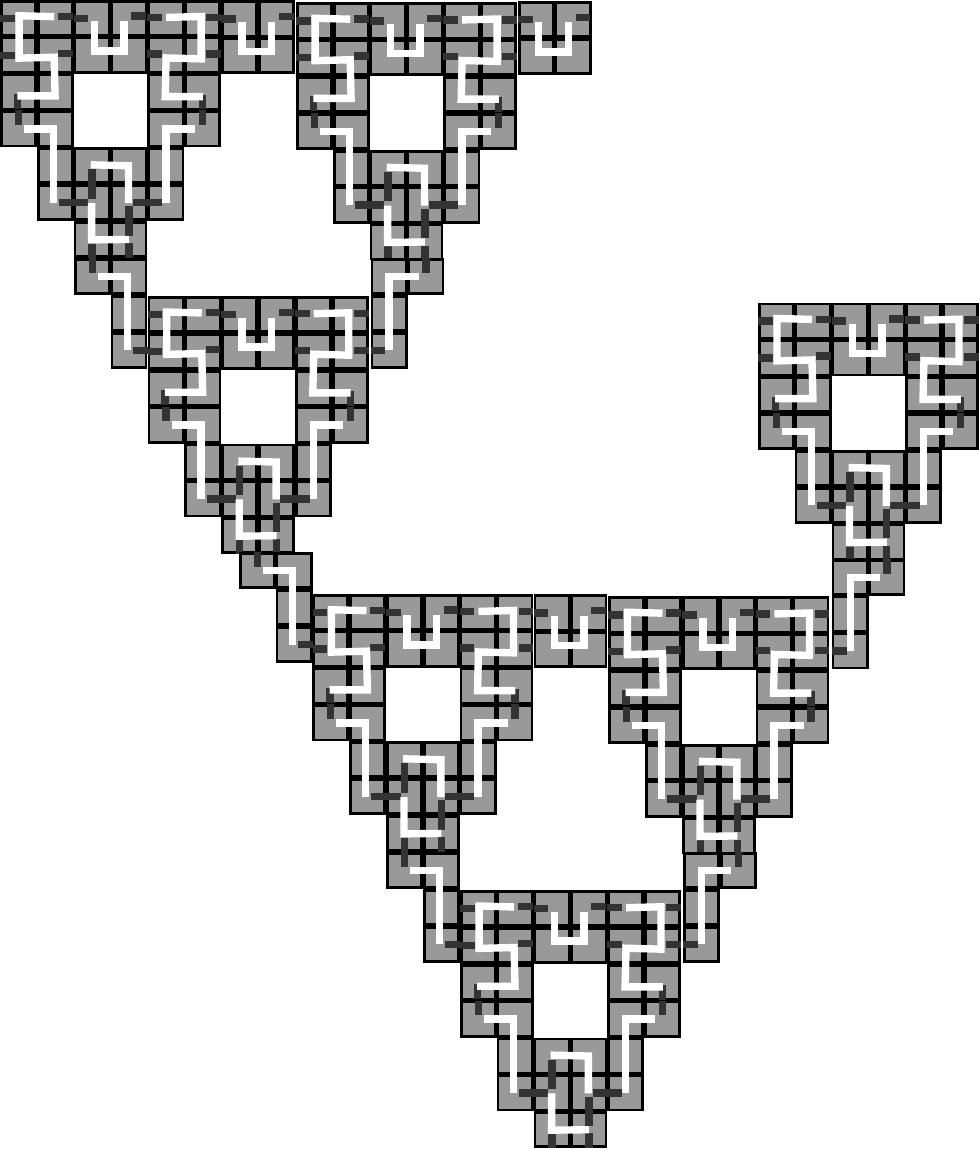}
	\caption{Near-triangle assemblies of different size will never be stable.}
	\label{S_mismatchedSize}
\end{center}
\end{figure}

To Summarize: The shape of any producible assembly smaller than size 30 is smaller than an $S_i$ by at most 9 points (partial assembly of $\alpha$ and $\beta$) and the shape of anything larger than 30 is smaller than an $S_i$ by 6 points.  These are the only producible assemblies, so both conditions are met, and since we have shown and explained how the system assembles the Sierpinski triangle, we conclude that $\Gamma$ near perfectly assembles the discrete Sierpinski triangle.
\end{proof} 

\subsection{3-Handed Assembly of the Sierpinski Triangle}

In this section, we provide a 3-HAM system that strictly self-assembles the discrete Sierpinski triangle at scale factor 3.  Note that the system does not near perfectly assemble the Sierpinski triangle, since it relies on many ``filler'' assemblies that increase in number as other assemblies grow.

\begin{theorem}
\label{S_3HAM}
There exists a 3-HAM system $\Gamma = (T, \tau)$ that self-assembles $S^3_{\infty}$ with $|T| = 990$.
\end{theorem}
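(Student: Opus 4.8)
The plan is to prove the statement by construction, exhibiting a tile set $T$ with $|T| = 990$ at temperature $\tau = 2$ and then verifying self-assembly in the same two halves used for Theorem~\ref{S_6HAM}: first that the scaled stages $S^3_i$ are all producible, and second that no spurious assembly is producible, so that every finite producible assembly can be completed into a translated copy of some $S^3_i$. I would organize the tiles into \emph{block} tiles that cooperatively assemble the $6 \times 6$ base square $S^3_0$ and expose interface glues only along its three corner regions; \emph{corner/orientation} tiles that let a finished stage act as one of the three constituents of the next stage; and small \emph{filler} assemblies that occupy the connection regions where three copies meet. The role of the scale factor $3$ is to widen each pairwise interface --- which, from the recurrence of Definition~\ref{def:s_triangle}, is $3$, $3$, and $6$ tile-edges across for the two bottom corners and the top corner respectively --- leaving room to place keyed glues that both break the threefold rotational symmetry of an incoming copy and pin down its position; at scale~$1$ the single- and double-edge interfaces leave essentially no room for such keying, which is what forces the increased scale here.

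The first half is an induction on $i$. The base square $S^3_0$ assembles from the block tiles by ordinary cooperative ($\tau = 2$) growth, which in the seedless $h$-HAM is realized as a cascade of singleton and partial-block attachments. For the step, I would assume three producible translated copies of $S^3_i$ and first attach, by strength-$2$ (hence stable) combination steps, the constant number of fillers needed along each copy's corner regions, so that each \emph{decorated} copy exposes a single strength-$1$ glue toward each of its two neighbours. Three decorated copies are then combined in one $3$-handed step; by the scaled recurrence their union is exactly $S^3_{i+1}$ with no gaps. Stability is a min-cut computation on the bond graph: the three copies form a triangle whose three interfaces each carry aggregate matching strength exactly $1$, so isolating any one copy (together with its strength-$2$-attached fillers) costs its two incident interfaces for a total of $2 = \tau$, while each copy's interior is $\tau$-stable by induction.

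The second half is the crux, and I would run it as an induction on the number of tiles, mirroring the Condition~2 argument of Theorem~\ref{S_6HAM}. Two consequences of the strength-$1$ interfaces drive everything. First, a join of two copies across a single interface has strength only $1 < \tau$ and is therefore never stable, so copies can never fuse two at a time; a stable fusion must bring three corners together at once, which is the structural reason three hands are used here. Second, for three copies to realize all three interfaces simultaneously and close the triangle, the geometry forces them to be mutually congruent, i.e.\ copies of the \emph{same} stage $S^3_j$; a size mismatch leaves at least one interface unformed, so some copy hangs by strength at most $1$ and the result is $\tau$-unstable, exactly as in the mismatched-size obstruction of Figure~\ref{S_mismatchedSize}. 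The keyed corner glues then force the one admissible rotational seating dictated by $V$ and confine fillers to genuine connection regions. Inductively, every producible assembly is thus a sub-block of a base square, a filler, or a (possibly partial) correctly formed stage, so its shape is a translated subset of $S^3_\infty$.

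Finite assembly then follows: each finite producible assembly is a translated sub-block of some $S^3_i$, and since every remaining piece needed to finish that stage (blocks, corner tiles, and fillers) is itself producible, the assembly has a way to grow into a full translated $S^3_i$, which is exactly what self-assembly of the infinite shape $S^3_\infty$ requires. The hard part will be the second half: systematically ruling out every incorrect way that copies, partial copies, and fillers could combine at scale~$3$ with three hands, and exhibiting the $990$ glue assignments that forbid all of them while still letting each correct triangle close. The most delicate points are breaking an incoming stage's threefold symmetry so that each corner plays a fixed role, preventing a filler or partial block from bridging two copies in a spurious position, and checking that the strength bookkeeping keeps the global min-cut equal to $\tau$ along every long thin cut of the fractal rather than dropping below it.
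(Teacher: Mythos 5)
Your outer skeleton does match the paper's proof: a temperature-$2$, $3$-handed hierarchy in which three same-stage pieces join in a single $3$-handed step across three strength-$1$ interfaces closing a loop (so no pair can fuse alone), a size mismatch leaves an interface open and violates $\tau$-stability exactly as in Figure~\ref{S_mismatchedSize}, and an induction plus finite-assembly wrap-up. But the paper's central mechanism is absent from your proposal, and the substitute you describe cannot be completed. In the paper, a stage-$i$ piece is \emph{not} a complete copy of $S^3_i$: it is a ``base shape'' missing tiles along outer edges; only after it completes does it \emph{nondeterministically} bind one of three two-part \emph{keystone} assemblies (itself a $3$-handed, cooperative step), and that choice assigns the piece its role (left, right, or bottom triangle) in the \emph{next} iteration. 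The keystone then seeds growth of filler tiles along one role-specific edge, and the resulting connector glues carry the keystone's identity, so three pieces combine only if they made the same choice. Your version --- complete copies of $S^3_i$ carrying a fixed, permanently exposed kit of ``keyed corner glues,'' with fillers ``occupying the connection regions'' --- is already geometrically inconsistent: by Definition~\ref{def:s_triangle} the three sub-copies are pure translates whose union is exactly $S^3_{i+1}$, so between complete copies the junctions are shared edges with no untiled in-fractal cells and there is nowhere for a filler to sit. (Relatedly, there is no ``threefold rotational symmetry'' to break: tiles never rotate in this model, so the differentiation actually needed is role assignment, not rotational seating; and the paper's reason for scale $3$ is room for the missing-edge/filler machinery, not interface width.)

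The deeper failure is the role dilemma, and it is concrete. With a constant tile set ($|T|=990$), every finished stage-$i$ assembly is the same assembly type, so its glue kit must simultaneously serve all three roles, since its product must later be usable as the bottom, upper-left, or upper-right constituent. In particular it must expose its two bottom-role glues at the extreme ends of its top row (the interface cells are the end cells of the top row, mating with the inner bottom-row cells of the upper siblings). But then, after three pieces fuse into $S^3_{i+1}$, the \emph{unused} end glues of the two upper sub-copies land adjacent to each other at the center of the new top row, in precisely the left-to-right order of the matching bottom-row pair that every piece of every stage exposes. Hence an arbitrary stage-$j$ piece can cooperatively attach there with strength $1+1=\tau$, sitting directly above the middle of the top row and placing tiles at points such as $(-1,\,2^{i+2})$ (suitably scaled) that lie in the gap of $S^3_\infty$ --- a stable producible assembly off the fractal, contradicting self-assembly; note these glues face outward and cannot be capped without again leaving the shape. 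Keying glues by stage index avoids this but needs unboundedly many tile types; keying by role requires each combination's single deterministic product to differentiate into one of three variants, which a fixed kit cannot do. That forced nondeterministic differentiation, together with deferring all edge tiling until after the choice is made, is exactly the keystone-plus-filler idea the paper supplies and your proposal lacks; the ``hard part'' you flag at the end is therefore not merely unfinished bookkeeping but the missing idea itself.
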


\begin{proof}
\*
\paragraph{Overview.}  An intuitive overview of our Sierpinski triangle construction follows. A 3-HAM system $\Gamma = (T, 2)$ with a tileset of size 990 is used to self-assemble the discrete Sierpinski triangle. 9 triangles that are nearly of size $S_1^3$ are assembled. These assemblies differ from the shape of $S_1^3$ in that they are missing tiles along two edges to allow for tiles to be placed along edges in later stages of the assembly process.  An example is shown in Figure \ref{yellowBase}. These 9 triangles will attach in triplets to create 3 unique triangles that are nearly of size $S_2^3$. These near-$S_2^3$ assemblies are designed to leave a perimeter such that their edges can be tiled. This process is shown in Figure \ref{summary1}.

\begin{figure}
\begin{center}
	\includegraphics[width=\textwidth]{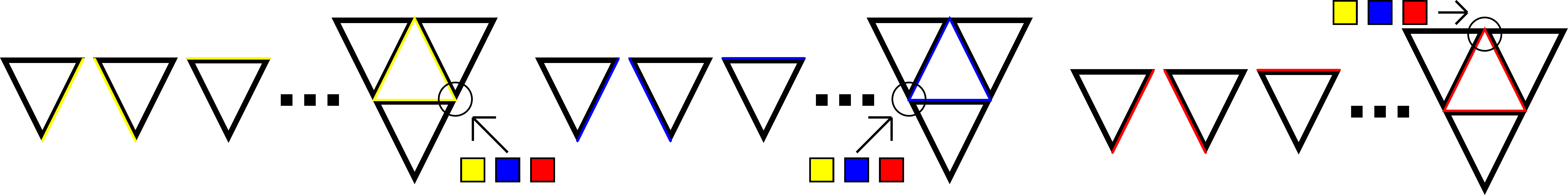}
	\caption{3 sets of 3 unique $S_1^3$ assemblies attach to form 3 unique near-$S_2^3$ assemblies. Keystone attachment happens at the circled region.}
	\label{summary1}
\end{center}
\end{figure}

Once a near-$S_2^3$ assembly is complete, it allows for the attachment of an assembly referred to as a \emph{keystone}. A keystone initiates a growth of tiles along an edge of a near-$S_i^3$ assembly. Once the edge is tiled, the assembly will attach to 2 other near-$S_i^3$ assemblies, forming a near-$S_{i+1}^3$ assembly. The edges of the near-$S_{i+1}^3$ assembly are untiled, and one of the edges exposes glues that allow another attachment of a keystone. One of 3 different keystones will nondeterministically attach. Once the keystone attaches, the keystone binding location is unavailable, therefore only one keystone type can attach to a near-$S_i^3$ assembly. The nondeterministic attachment of a keystone to a near-$S_i^3$ assembly determines on which edge tiles will be placed on a near-$S_{i+1}^3$ assembly composed of 3 near-$S_i^3$ assemblies that have attached that keystone. The 3 near-$S_i^3$ assemblies will only attach to form a near-$S_{i+1}^3$ assembly if they have attached the same keystone. This process is shown in Figure \ref{summary2}. The tiles that fill in edges of a near-$S_i^3$ assembly are of a constant number and are designed to fill in arbitrarily long edges, allowing for a constant-sized tileset.

\begin{figure}
\begin{center}
	\includegraphics[width=.75\textwidth]{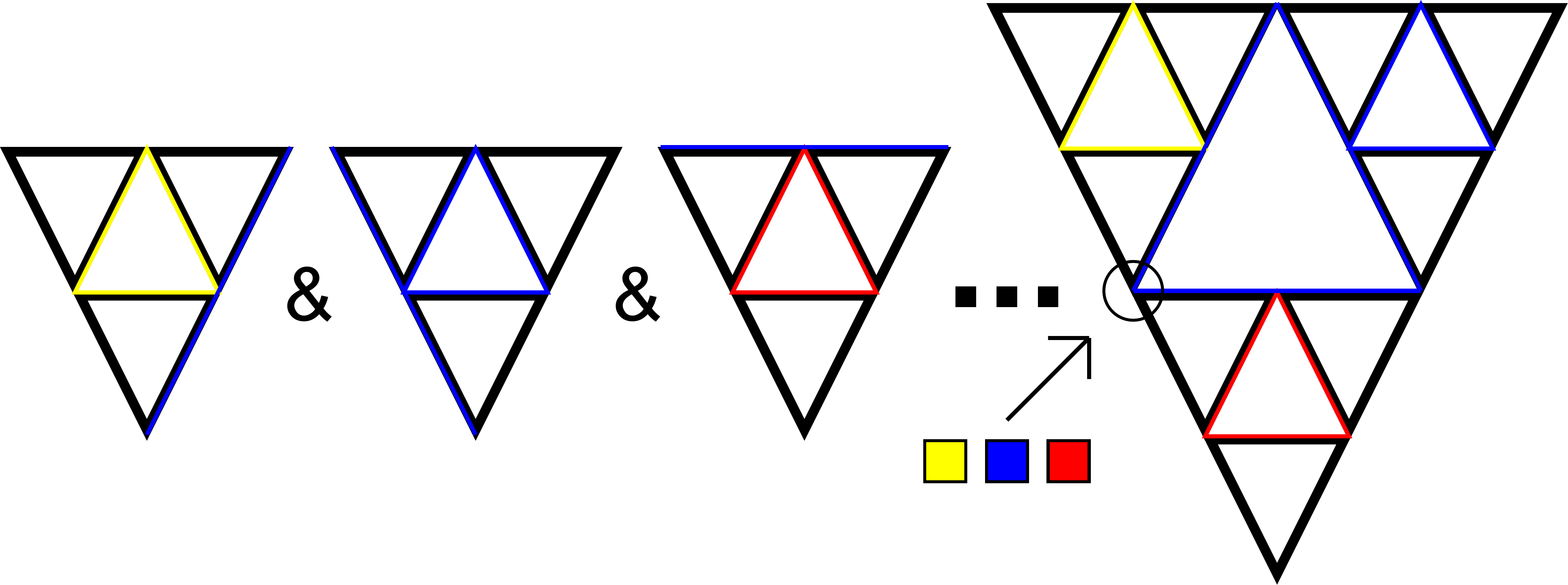}
	\caption{3 near-$S_2^3$ assemblies, having attached the same blue colored keystone, join to form a near-$S_3^3$ assembly. Note that the region that a keystone attachment occurs is dependent on the color of the inner perimeter. The nondeterministic attachment of the keystone determines if the triangle assembled becomes the right, left, or bottom triangle in the next iteration of the assembly.}
	\label{summary2}
\end{center}
\end{figure}

\begin{figure}
\begin{center}
	\includegraphics[width=.85\textwidth]{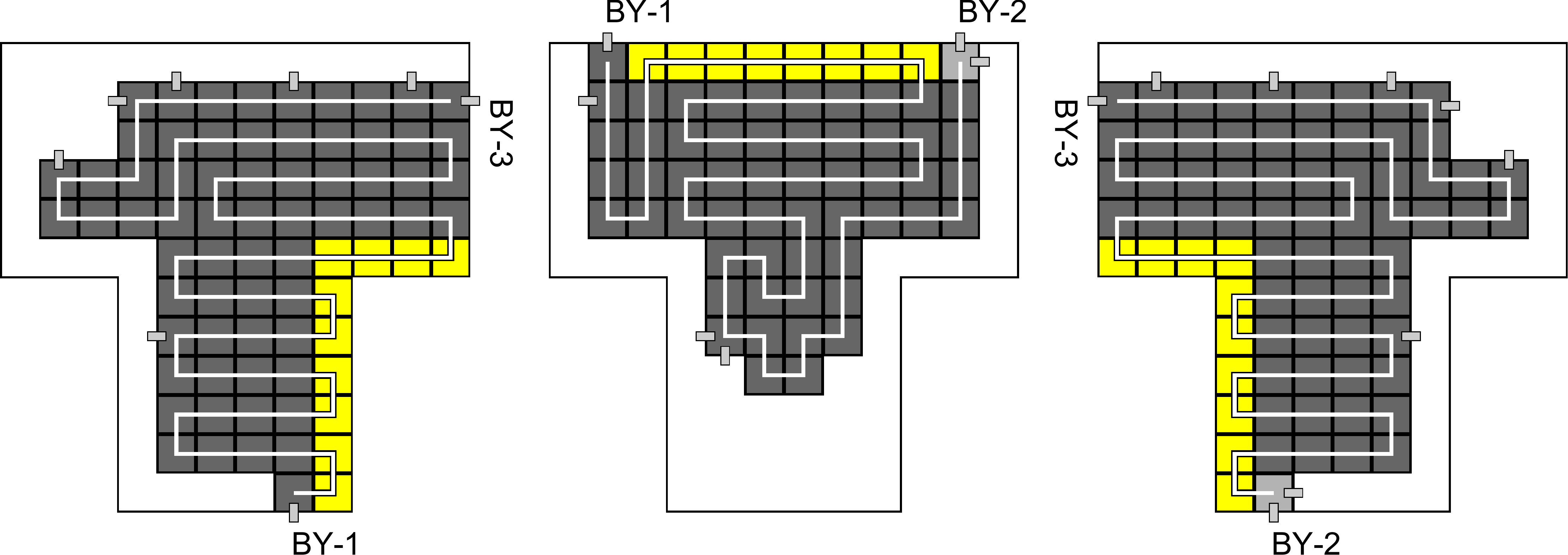}
	\caption{Yellow combinable shapes of $S_1^3$. Once combined, they will become the left triangle in the next iteration. There are corresponding combinable shapes of size $S^3_1$ for the right and bottom triangles.}
	\label{yellowBase}
\end{center}
\end{figure}

\paragraph{Base Shapes.} A \emph{base shape of $S^3_i$} is an assembly with shape $S^3_i$ that is missing tiles along the outer edges and the corners of the triangle as shown in Figure \ref{baseShapes}. A \emph{combinable shape of $S^3_i$} is an assembly with shape $S^3_i$ that is missing tiles along two outer edges of the triangle and has corner tiles where these edges meet as shown in Figure \ref{combinableShapes}.

\begin{figure}
\begin{center}
	\subfigure[Two examples of a base shape.] {
		\includegraphics[width=.45\textwidth]{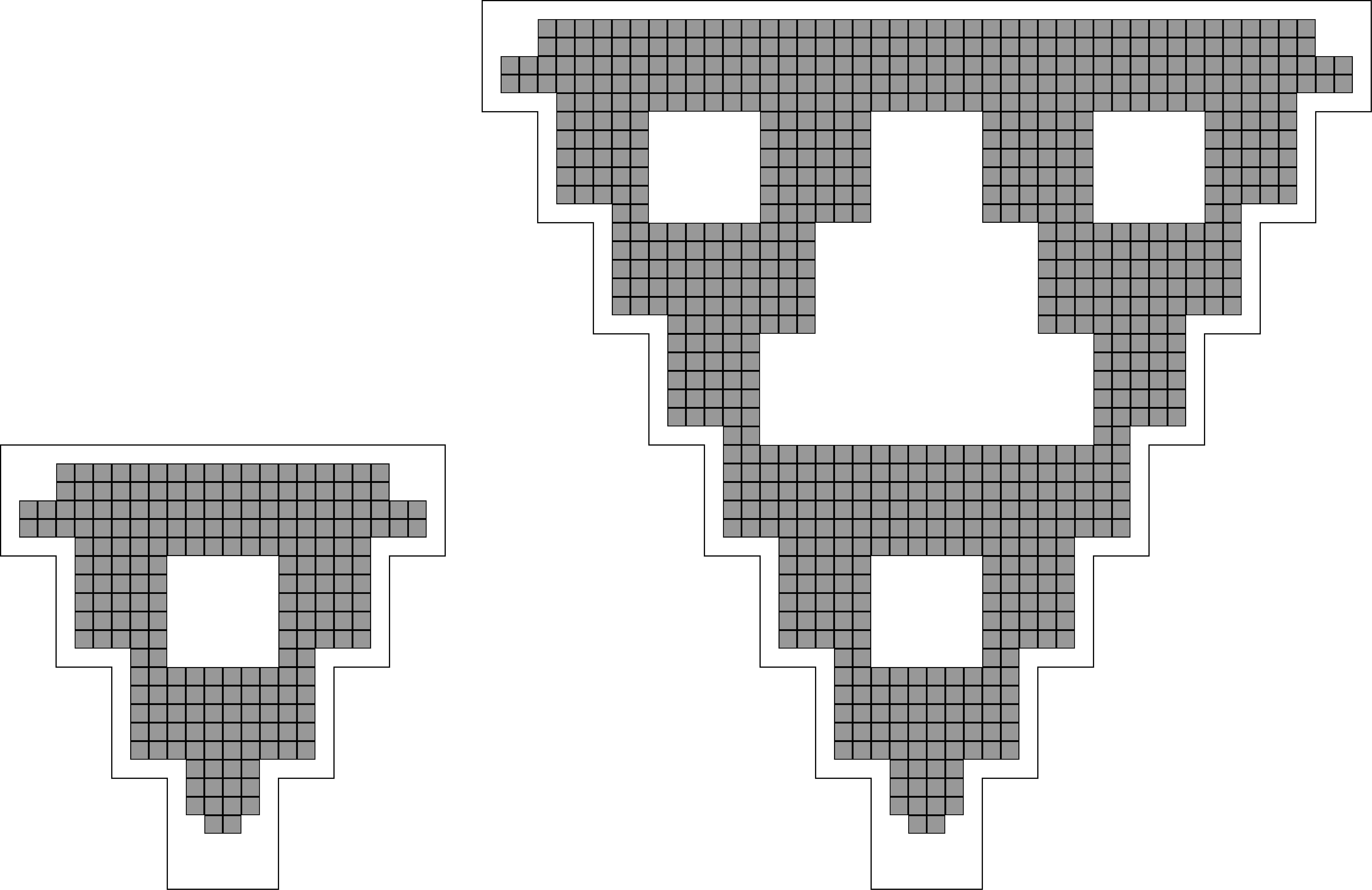}
		\label{baseShapes}
	}
	\subfigure[Two examples of a combinable shape.] {
		\includegraphics[width=.45\textwidth]{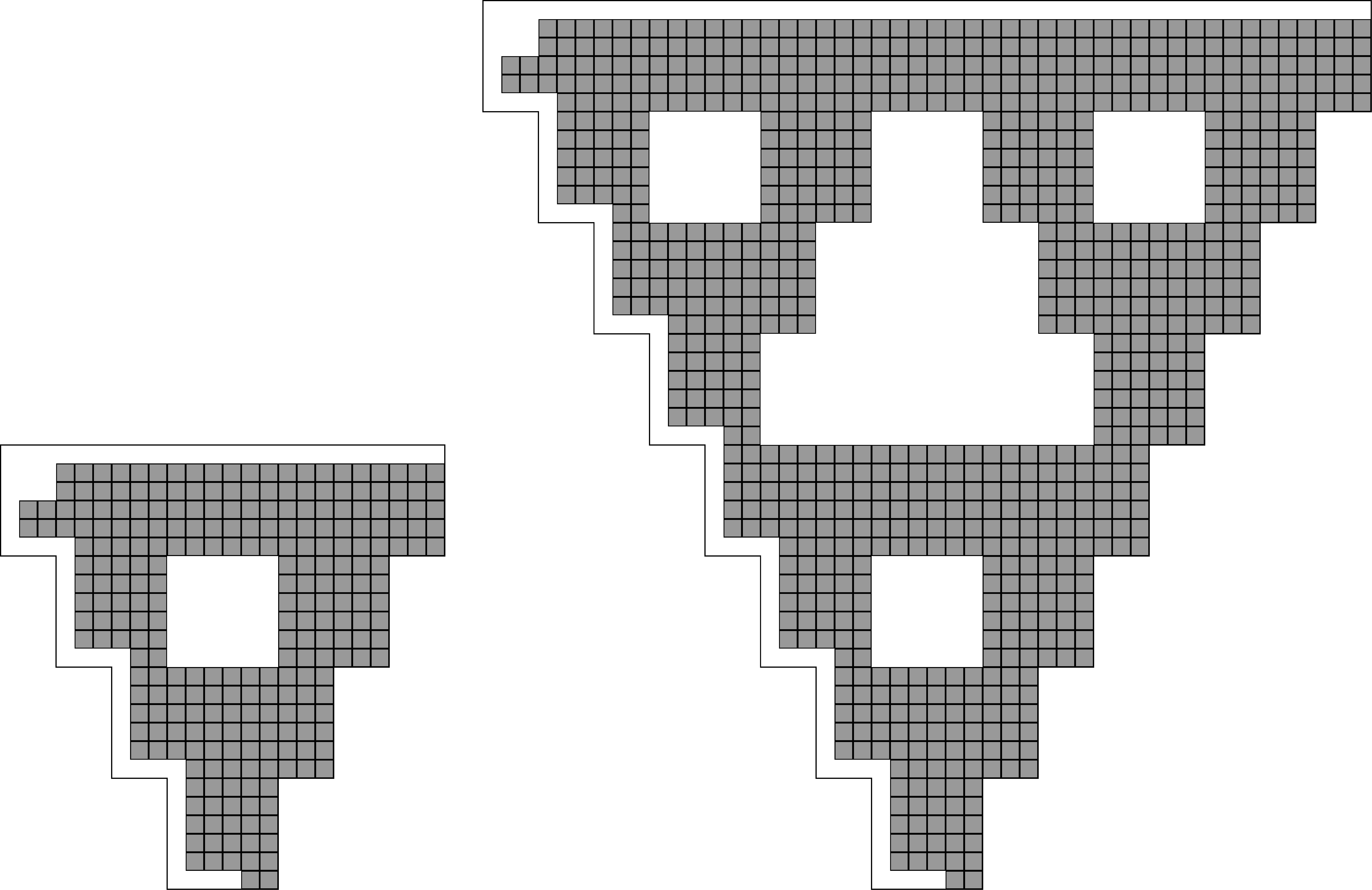}
		\label{combinableShapes}
	}
\caption{Examples of the two types of shapes defined. }
\label{shapes}
\end{center}
\end{figure}

\paragraph{Formation of Base Pieces.} The assembly process in the system begins by assembling the assemblies shown in Figure \ref{yellowBase}, which are combinable shapes of $S^3_1$. There are three sets in total of these combinable shapes of $S^3_1$, two of which are not pictured here. The white path traces the assembly process of these assemblies, where tiles connected by the path are connected by unique strength-2 glues.  This pattern ensures that each combinable shape of $S^3_1$ is completely assembled before either can assemble into the base shape of $S^3_2$. Once these combinable shapes of $S^3_1$ have assembled completely, they can attach to their two corresponding combinable shapes of $S^3_1$.  Each combinable shape of $S^3_1$ has two strength-1 glues that are used to construct a base shape of $S^3_2$ using cooperative binding. Each strength-1 glue attaches to one triangle each, forming a $\tau$-stable assembly, as shown in Figure \ref{4baseProcess_1}. The other two sets of combinable shapes of $S^3_1$ not shown here exhibit the same behavior. These three different base shapes of $S^3_2$ will be described as either yellow, blue, or red. We show the yellow combinable shapes of $S^3_1$ that make up base shape of $S^3_2$ in Figure \ref{yellowBase}.

\paragraph{Keystone Attachments.} \emph{Keystone tiles} are assemblies that cooperatively attach to their corresponding exposed glues, in a nondeterministic manner. For each base shape of $S^3_i$, one of three keystones will attach, followed by the filling of an edge of the assembly. This method of assembly is used to produce three different combinable shapes of $S^3_i$. The assembled base shapes of $S^3_2$, described in the previous paragraph, have two exposed glues on a single edge which are sites of keystone tile attachment. Each assembled base shape of $S^3_2$ has three keystone tiles that can attach at their exposed edges. We will describe these keystone tiles by the colors yellow, blue, and red.  These three keystone tiles are each split into two sets, so that they must be assembled using 3 hands(two for the keystone tiles and one for the base shape of $S^3_2$) to be properly assembled, as shown in Figure \ref{4baseProcess_2}. Each of the three different assembled base shapes of $S^3_2$ expose keystone glues on differing edges; a yellow base shape will attach their keystone tiles on the right edge, a blue base shape on the left edge, and a red base shape on the top edge.

\begin{figure}
\begin{center}
	\subfigure[Three combinable shapes form a base shape.] {
		\includegraphics[width=.30\textwidth]{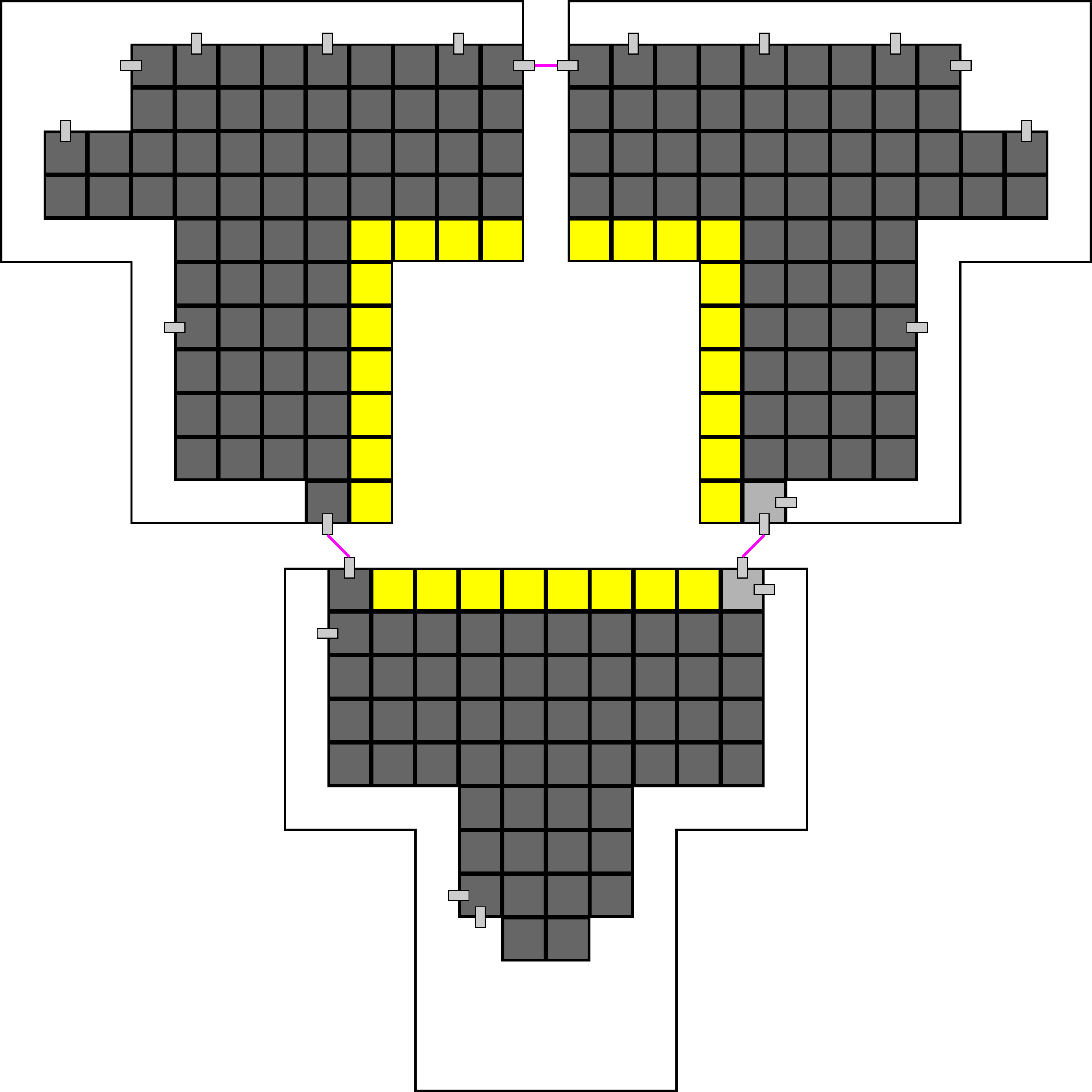}
		\label{4baseProcess_1}
	}
	\subfigure[Keystone assemblies attach to the base shape.] {
		\includegraphics[width=.30\textwidth]{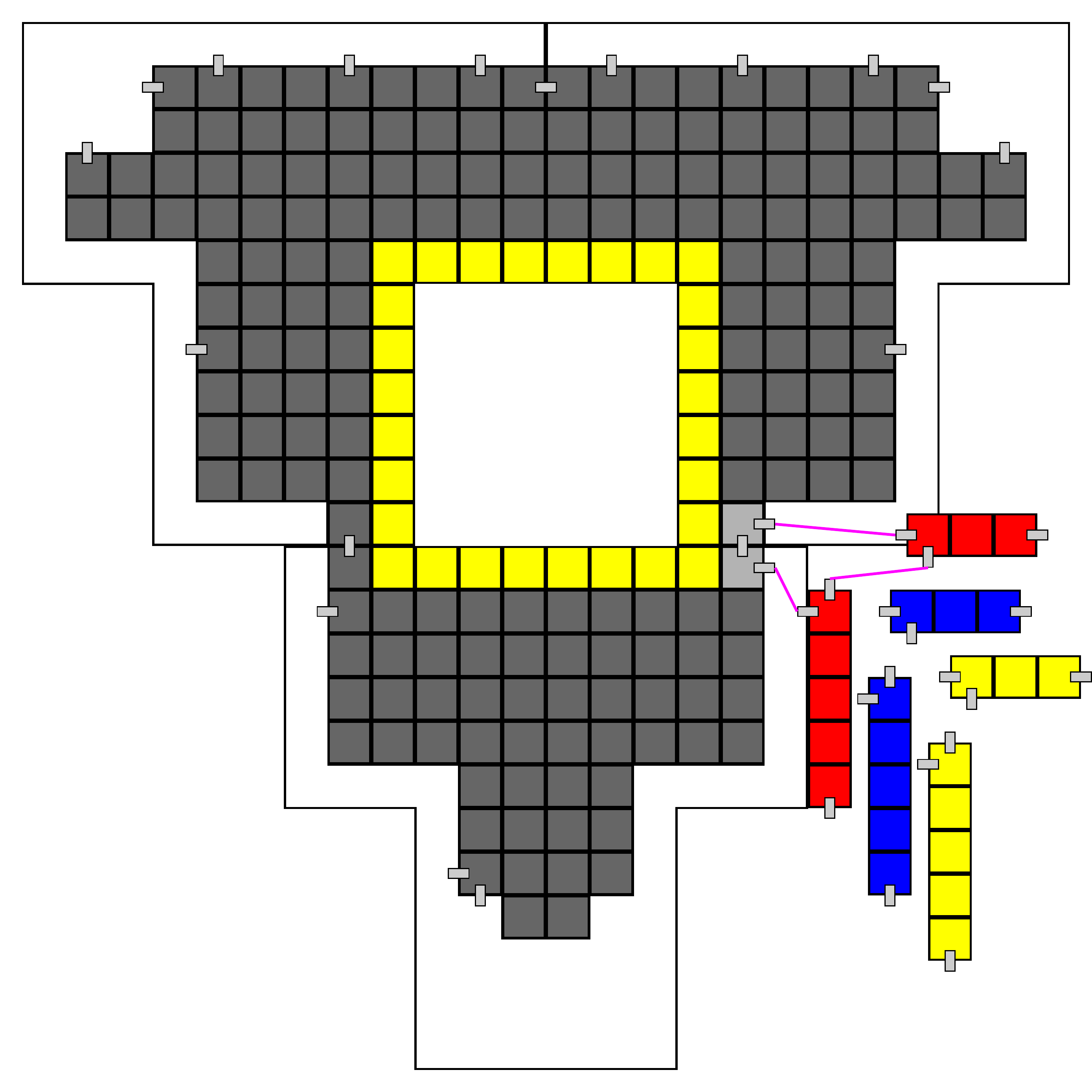}
		\label{4baseProcess_2}
	}
	\subfigure[Right fillers placed.] {
		\includegraphics[width=.30\textwidth]{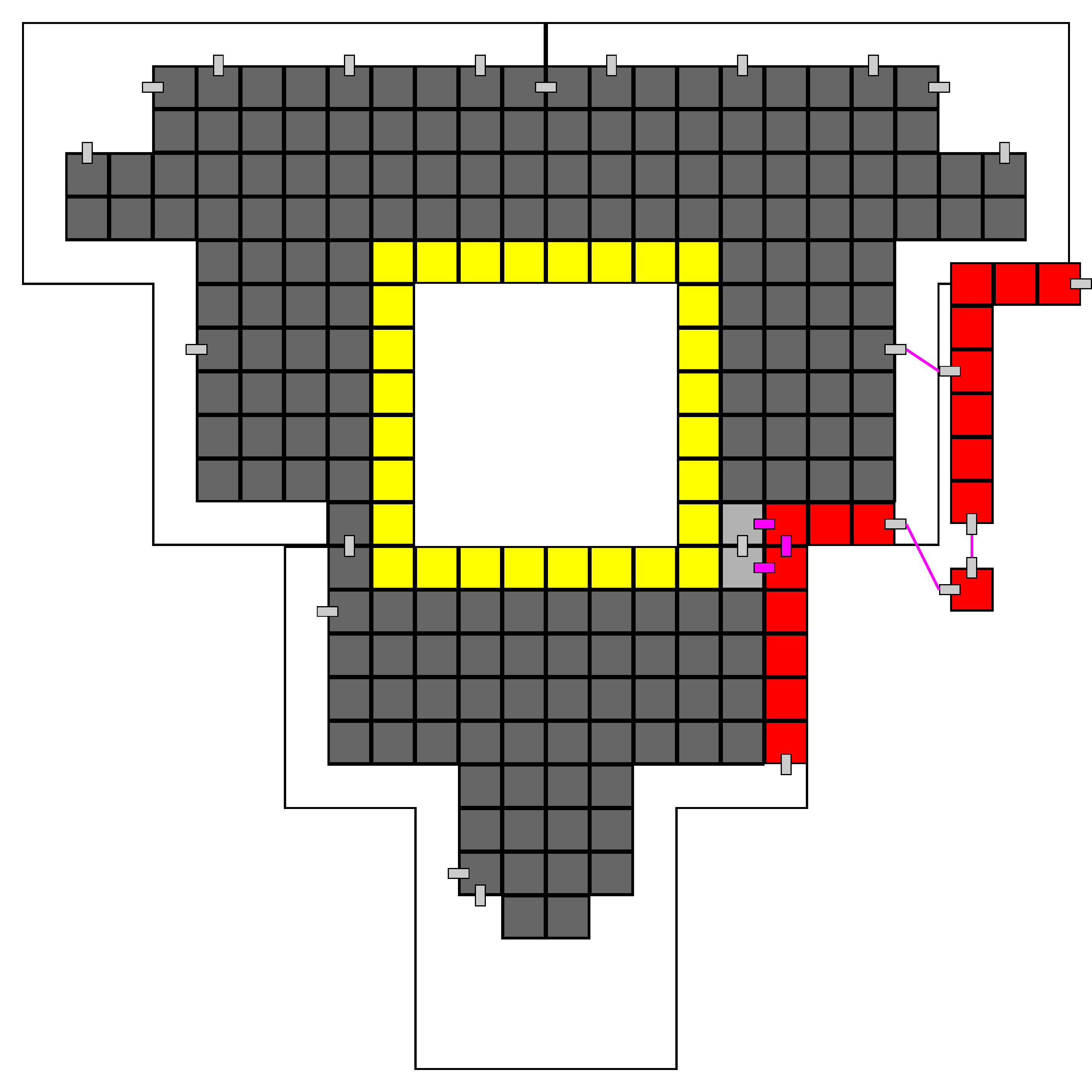}
		\label{4baseProcess_3}
	}
	\subfigure[Right fillers placed.] {
		\includegraphics[width=.30\textwidth]{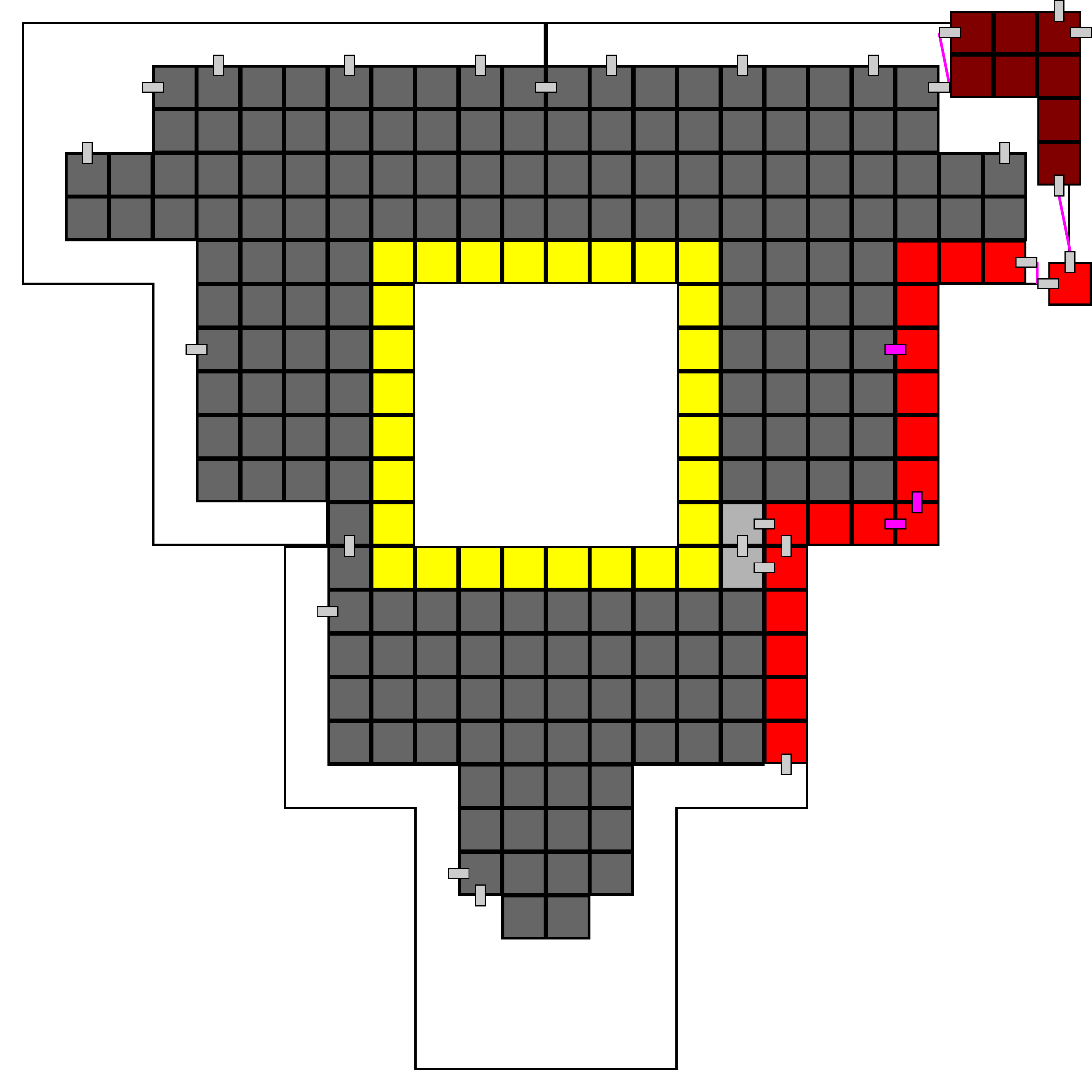}
		\label{4baseProcess_4}
	}
	\subfigure[Right fillers placed.] {
		\includegraphics[width=.30\textwidth]{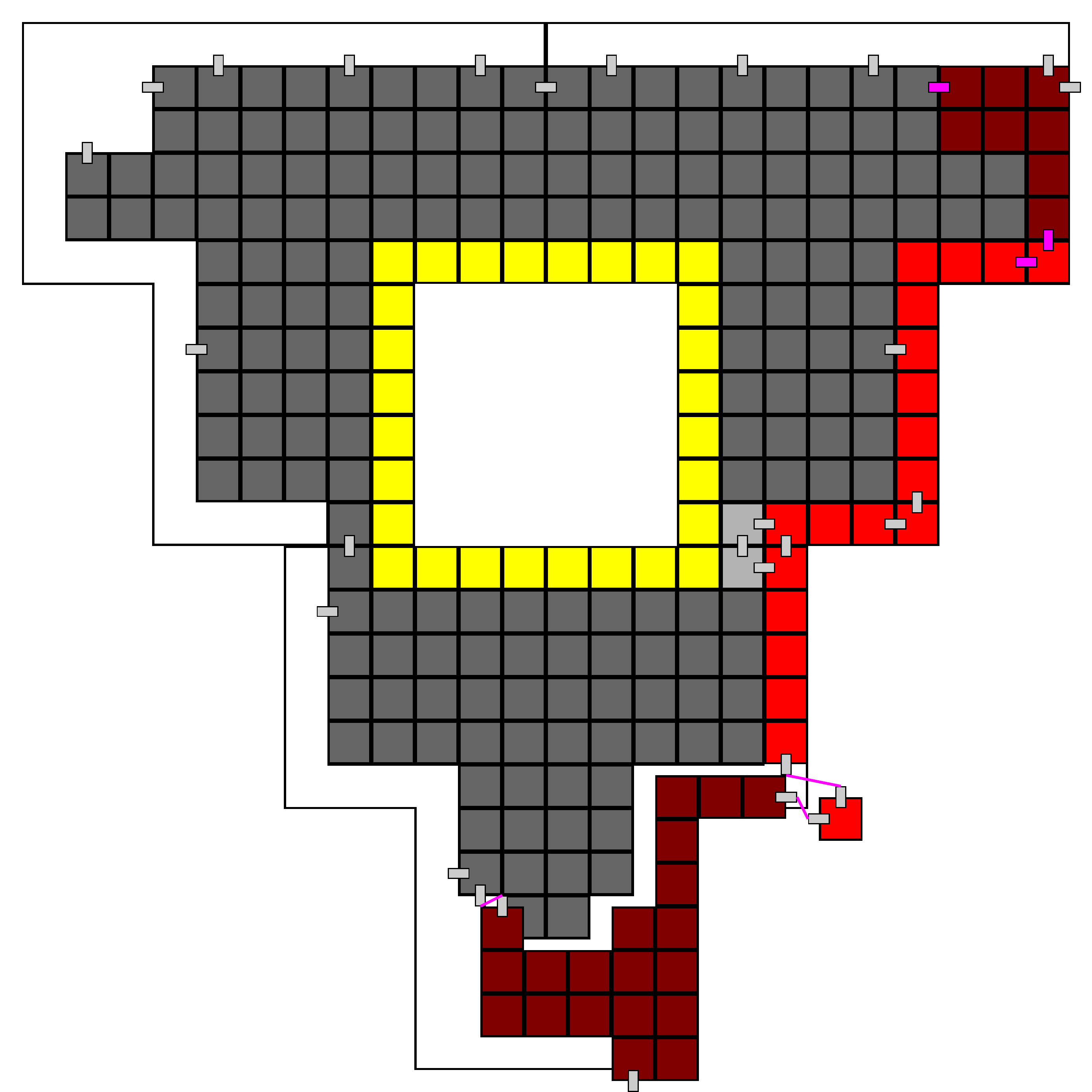}
		\label{4baseProcess_5}
	}
	\subfigure[A new combinable shape is produced.] {
		\includegraphics[width=.30\textwidth]{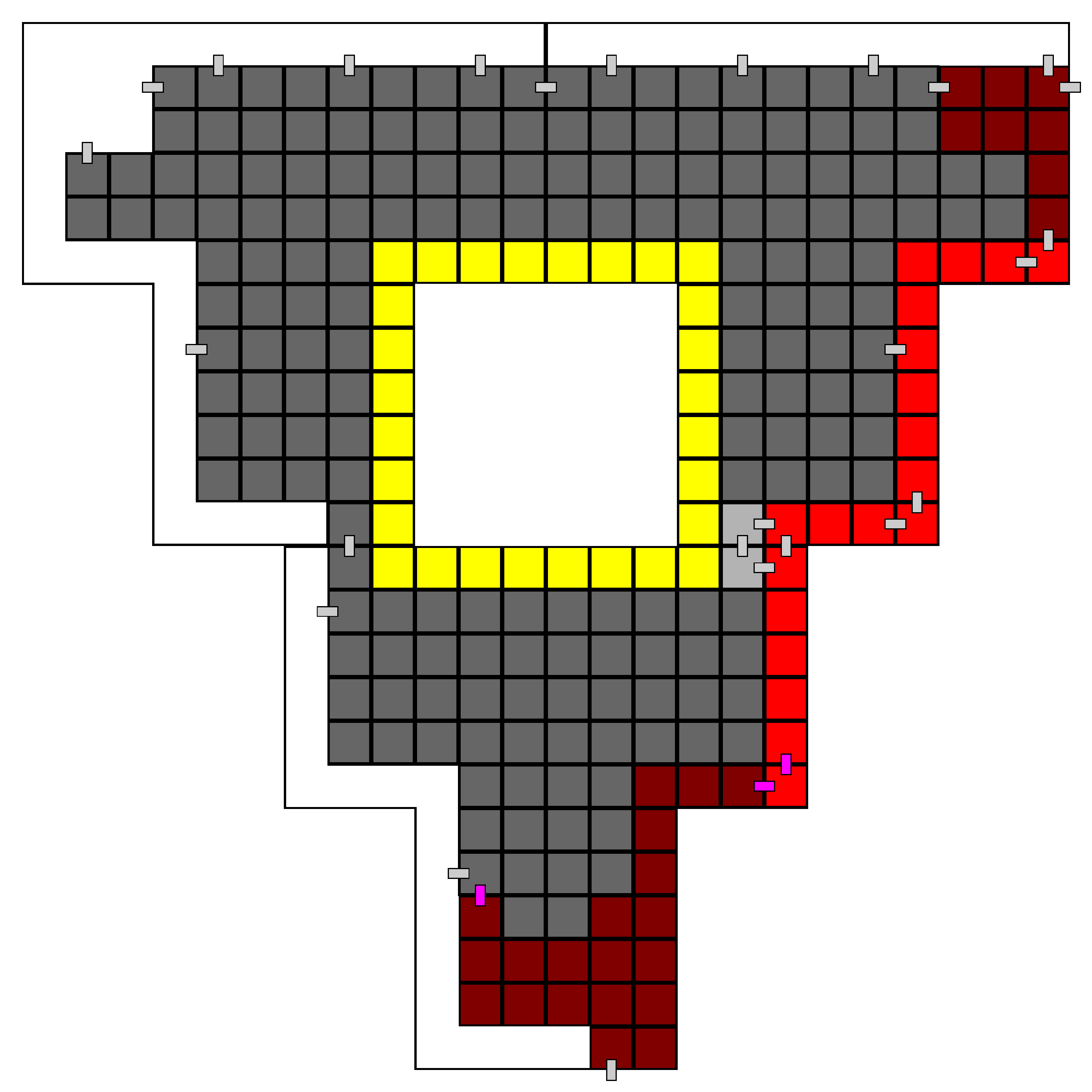}
		\label{4baseProcess_6}
	}
\caption{The process of keystone attachment and edge growth.}
\label{4baseProcess}
\end{center}
\end{figure}

\paragraph{Connector Glues.} When the keystone tiles have bonded to a base shape of $S^3_2$, they will place tiles along the edge of the assembly, matching the color of the keystone tiles that have attached. This process is shown in Figure \ref{4baseProcess_2}-\ref{4baseProcess_6}. \emph{Whitespace} is the area of the assembly that is part of the shape $S^3_i$ but is not covered by tiles. Each attachment of these tiles that fill the whitespace along the edge takes three hands to assemble, one for the base shape of $S^3_2$ (along with the already attached keystone and edge filling tiles) and two for the tiles that fill in the whitespace, which consists of a corner and the fitting assembly. The now combinable shape of $S^3_2$ will have exposed glues that will provide a method of attaching to the corresponding combinable shapes of $S^3_2$. This process is shown for a yellow base shape of $S^3_2$ in Figure \ref{4baseProcess}; in the case of the other base shapes of $S^3_2$, the blue base shape has a process that is a mirror of the shown yellow attachment process. The process for a red base shape consists of having the keystone attach atop the assembly to grow filling tiles along the top edge. The produced combinable shapes of $S^3_2$ have exposed strength-1 glues that will allow them to attach to two other combinable shapes of $S^3_2$, forming a base shape of $S^3_3$. Three combinable shapes of $S^3_2$ will only attach to one another if they have attached the same type of keystone tile, as shown in Figure \ref{4firstRed}.

\begin{figure}
\begin{center}
	\includegraphics[width=.75\textwidth]{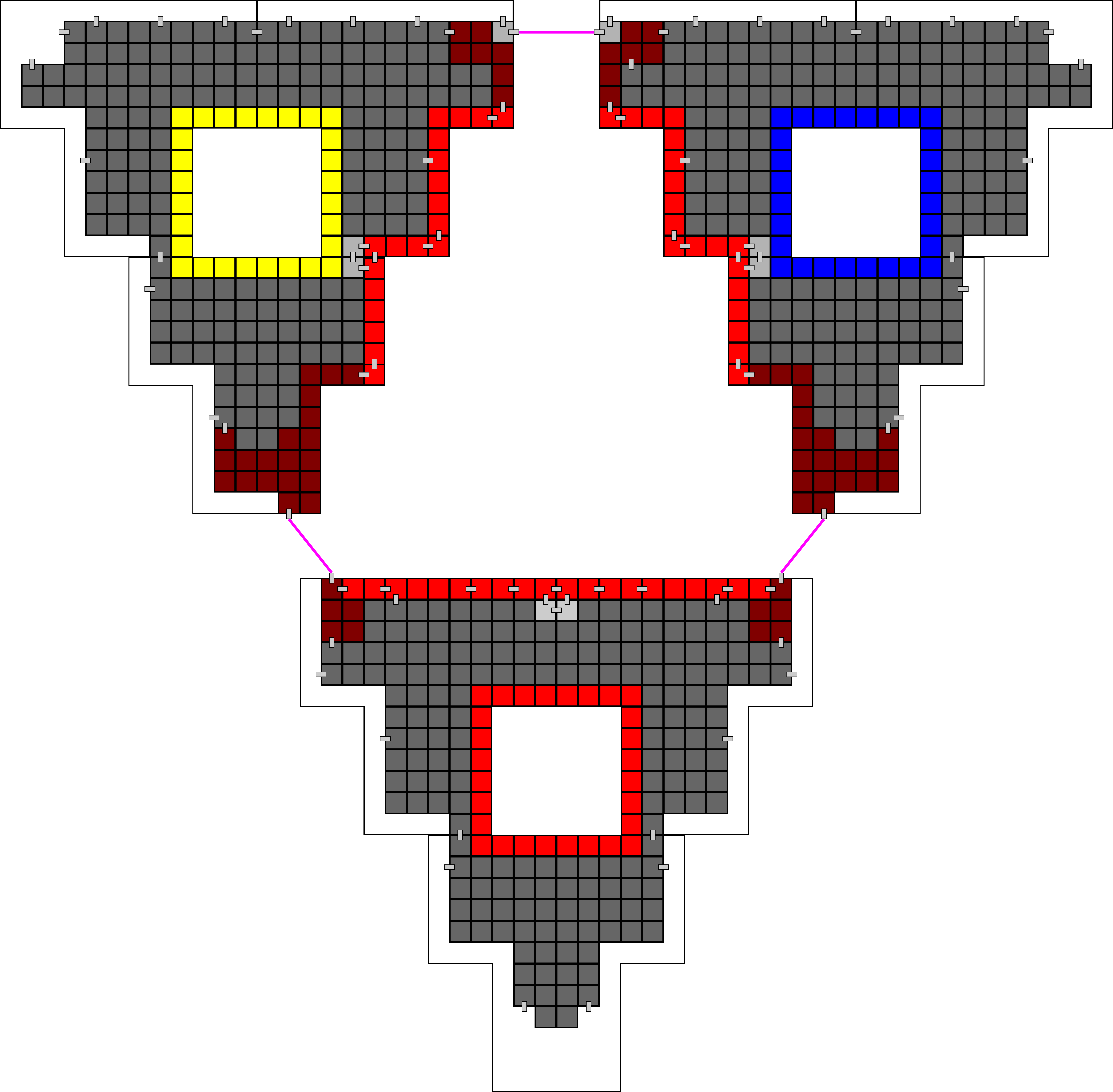}
	\caption{Different combinable shapes assemble into the next iteration of the Sierpinski triangle.}
	\label{4firstRed}
\end{center}
\end{figure}

\paragraph{Infinite Assembly.} A base shape of $S^3_i$ will form a combinable shape of $S^3_i$, and three combinable shapes of $S^3_i$ will form a base shape of $S^3_{i+1}$ in order to self-assemble the infinite Sierpinski triangle. If a base shape of $S^3_i$ was designated yellow (using keystone assemblies in stage $S^3_{i-1}$), it will place tiles along its right edge; if blue, it will place tiles along its left edge; if red, it will place tiles along its top edge. The filling tiles are designed to fill in arbitrarily long edges using a constant tileset. The three resulting red, blue, and yellow combinable shapes of $S^3_i$ will attach to form a base shape of $S^3_{i+1}$ if they have chosen the same keystone. The nondeterministic attachment of keystones allows for three combinable shapes of $S^3_i$ to produce three different base shapes of $S^3_{i+1}$, which will become complementary combinable shapes of $S^3_{i+1}$. These three combinable shapes of $S^3_{i+1}$ can produce base shapes of $S^3_{i+2}$. The nondeterminstic keystone attachment and arbitrary-length filling tiles allow us to self-assemble the discrete Sierpinski triangle using a 3-handed system at scale factor 3 with 990 tile types; 642 tiles for the base pieces and 348 tiles for the keystone assemblies and filler tiles.
\end{proof} 

\section{Strict Self-Assembly of the Discrete Sierpinski Carpet}\label{sec:carpet}

\*Weak self-assembly of the discrete Sierpinski carpet has been achieved previously \cite{KSL2009SAD,KS2013SAR}. However, no purely-growth model of tile assembly has been shown to strictly self-assemble the fractal at any scale factor. In this section, we show a construction that strictly self-assembles the discrete Sierpinski carpet in the well-known 2-HAM.

\begin{theorem}\label{C_2HAM}
There exists a 2-HAM system $\Gamma = (T, \tau)$ that self-assembles $C_{\infty}^3$ with $|T| = 1,216$.
\end{theorem}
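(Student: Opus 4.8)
My plan is to follow the hierarchical \emph{base shape / combinable shape} strategy developed for the $3$-HAM triangle in Theorem~\ref{S_3HAM}, re-engineered for the eightfold ring recursion of the carpet (Definition~\ref{def:s_carpet}) and for the restriction to two hands with $\tau = 2$. Note that $C^3_{i+1}$ consists of eight scaled copies of $C^3_i$ placed at the grid positions $\{(0,0)\}\cup V$, i.e.\ a $3\times 3$ arrangement with the center omitted, so that horizontally and vertically adjacent copies share full boundary edges and the eight copies together form a ring around the central hole. I would define a \emph{base shape of $C^3_i$} to be an assembly of shape $C^3_i$ that is missing tiles along its outer edges (leaving binding room for later stages) and a \emph{combinable shape of $C^3_i$} to be one in which a prescribed subset of those edges has been tiled in. The smallest pieces, the base shapes of $C^3_1$, would be hard-coded into $T$ and forced to assemble completely before participating in any higher-level attachment, by threading a single Hamiltonian-style path of \emph{unique} strength-$2$ glues through their tiles; this is the device used in Theorem~\ref{S_3HAM} to guarantee that a piece is fully built before it can combine, which is what prevents partially-formed blocks from polluting the set of producible assemblies.

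To promote a base shape of $C^3_i$ to a combinable shape I would reuse the nondeterministic \emph{keystone} mechanism together with an \emph{arbitrary-length filler} gadget. A keystone cooperatively attaches at one exposed site, becomes the unique occupant of that site (blocking further keystones), and initiates a row or column of filler tiles that tiles an entire edge using only a constant number of tile types via cooperative $(1{+}1)$ binding; this is exactly the trick that lets a \emph{constant} tileset tile arbitrarily long edges, and it is the reason this construction is strict rather than near-perfect, since the number of filler tiles grows with $i$. The selected keystone determines which edges are tiled, hence which of the eight ring positions the resulting combinable shape is eligible to occupy in $C^3_{i+1}$. Because there are eight positions (four corners and four edge-midpoints up to symmetry), the keystone/filler subsystem must distinguish these roles, and I expect the bulk of the $1{,}216$ tile types to be consumed here and in the edge fillers, mirroring the base-versus-keystone split of Theorem~\ref{S_3HAM}.

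The genuinely new ingredient is assembling the eight-copy ring with only two hands, since each combination step merges exactly two assemblies. Here I would exploit the shared-edge adjacency of the ring: matching-role combinable shapes expose complementary strength-$1$ glues on the corner and edge tiles where adjacent ring positions meet, so copies attach one contiguous arc at a time through cooperative binding, and the ring grows by pairwise merges $\mathrm{arc} + \mathrm{arc} \to \mathrm{arc}$ until it closes. Stability is supplied only upon closure: as with the ``loop'' of the $6$-HAM construction (Theorem~\ref{S_6HAM}), any open arc is held together by too few bonds to be $\tau$-stable, so the only $\tau$-stable product of shape $C^3_{i+1}$ is the fully closed ring, which becomes the next base shape. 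The glue palette must be chosen so that (i) only copies carrying the correct role-keystones present matching edge glues, and (ii) every geometrically realizable partial arc either extends toward the unique closed ring or is sub-$\tau$ and hence never producible.

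Finally I would establish correctness by induction on $i$, with a twofold inductive claim: every producible assembly is a translation of a subassembly of some $C^3_i$, so no tile is ever placed off the fractal; and every producible assembly can grow into a base shape of $C^3_{i+1}$, and thence tile all of $C^3_\infty$ in the finite-assembly sense of Section~\ref{sec:definitions}. The base case is the hard-coded $C^3_1$ pieces, and the inductive step is precisely the keystone/filler/ring cycle above, where the forced strength-$2$ paths and the ``closure gives stability'' property together rule out both off-shape growth and stuck intermediates. The main obstacle, and where I would spend the most care, is exactly this last point under two hands: proving that no pair of legally producible arcs can combine into an unintended shape or into a geometrically blocked configuration that can neither close nor be completed. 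This is a case analysis over the admissible arcs and their exposed glues, and the scale-$3$ whitespace is introduced specifically to keep that analysis tractable by giving each connector and filler an unambiguous binding site.
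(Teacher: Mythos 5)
There is a genuine gap, and it sits exactly at the point you flag as the ``genuinely new ingredient'': your ring-assembly mechanism, in which ``stability is supplied only upon closure,'' is impossible in the 2-HAM. By the definition of producibility, every combination step must yield a $\tau$-stable assembly; an open arc ``held together by too few bonds to be $\tau$-stable'' is simply not producible, so your pairwise merges $\mathrm{arc}+\mathrm{arc}\to\mathrm{arc}$ can never occur, and the closed eight-piece ring cannot be reached by a single two-handed combination of stable assemblies either. The loop trick of Theorem~\ref{S_6HAM} works precisely because six hands present all pieces of the loop simultaneously, so that the closed configuration is the first (and only) stable product; with $h=2$ that option does not exist. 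Indeed, Theorem~\ref{thm:nearPerfectImpossible} makes this quantitative for the carpet: near-perfect assembly is impossible for $h<8$, so any correct 2-HAM system \emph{must} produce stable intermediates whose sizes fall strictly between fractal stages --- exactly the intermediates your ``only the closed ring is stable'' design is engineered to forbid. (Your text is also internally inconsistent here: if each arc extension is a cooperative strength-$2$ attachment, the arcs \emph{are} stable and your closure-gating claim fails; if they are not, the arcs are unproducible.)

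The paper resolves this differently. Every pairwise attachment of adjacent ring positions is itself $\tau$-stable, via two cooperative strength-$1$ connector glues placed at opposite ends of the shared edge; partial rings are therefore legal producibles. Ring completion is then enforced not by stability but by staged information release: the keystone glues for the next iteration are exposed only upon a final cooperative attachment (the subassembly at positions $2$, $4$, $5$, or $7$ binds using glues on multiple already-placed neighbors, mimicking the two-part cooperative completion of the base pieces), so no keystone --- and hence no edge-filling that would make the assembly combinable --- can attach to an incomplete ring. You will need this ``delayed exposure'' gadget, or an equivalent, to replace your closure argument. A secondary bookkeeping slip: in the paper the keystone chosen by a base shape of $C^3_n$ does not select which of \emph{that} assembly's edges get filled (those are fixed by the assembly's own position, inherited from the previous level's keystone); it encodes the position of the resulting $C^3_{n+1}$ assembly within $C^3_{n+2}$, and this type is threaded through the filler and corner tiles into the connector glues so that only same-keystone assemblies join. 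In your version, where the keystone picks the current piece's ring position, the eight ring members carry different keystones and have no shared datum with which to coordinate the next level's role. Your remaining ingredients --- forced completion of hard-coded $C^3_1$ pieces (the paper uses two cooperatively-binding base parts rather than a strength-$2$ path, but either works), constant-size fillers for arbitrarily long edges, and induction establishing that all producibles sit inside some $C^3_i$ and can grow to $C^3_\infty$ --- do match the paper's proof.
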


\begin{proof} \*
\paragraph{Overview.}
We show this by constructing a 2-HAM system $\Gamma = (T, 2)$. The tileset $T$ of the system contains 1,216 tiles, and self-assembles the discrete Sierpinski carpet at a scale factor of 3.  Intuitively, the construction works as follows. First, 8 base pieces are trivially assembled. Each base piece is nearly of shape $C_1^3$ except that each piece has a perimeter of width 1 missing from its outer edges so that tiles can be placed along the edges of the base pieces while staying consistent with the Sierpinski carpet shape. As seen in Figure \ref{summary_0}, the base pieces will place tiles along specific edges that expose glues allowing the base pieces to attach to one another. However, these tiles will be placed such that when the 8 base pieces come together, they form an assembly that is nearly of shape $C_2^3$ except that it is missing a perimeter of width 1 so that tiles can again be placed along its edges, allowing it to attach to other $C_2^3$ sized assemblies. These $C_2^3$ sized assemblies will attach to 7 other $C_2^3$ sized assemblies in a similar manner as the base pieces, to form an assembly that is nearly of shape $C_3^3$ except that, again, the shape will be missing a perimeter of width 1. This process will repeat indefinitely, allowing the system to self-assemble the infinite shape of the discrete Sierpinski carpet.

\begin{figure}
\begin{center}
	\includegraphics[width=.35\textwidth]{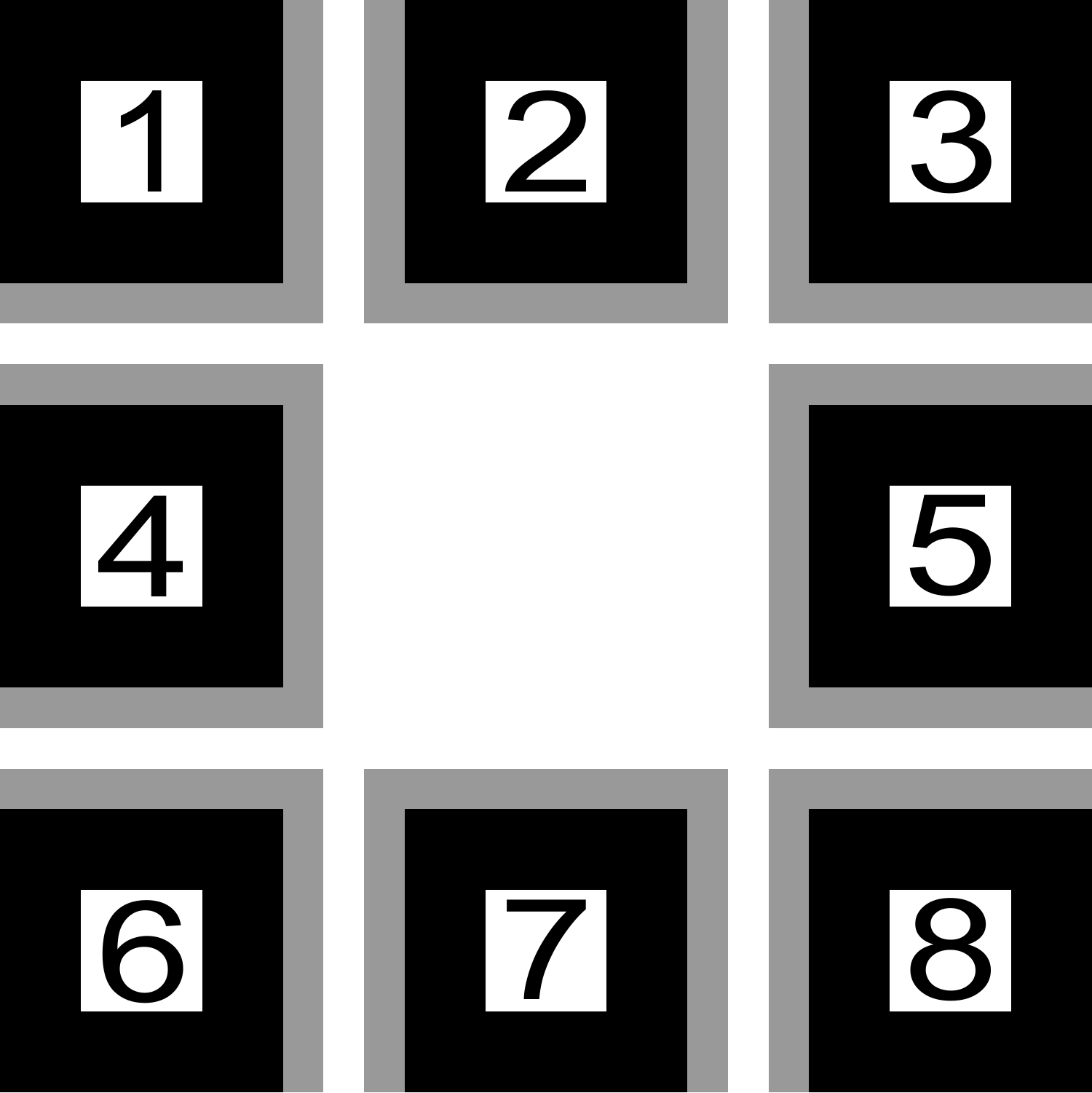}
	\caption{This figure shows the placement of tiles along the edges of the base pieces.}
	\label{summary_0}
\end{center}
\end{figure}

Our construction fundamentally relies on the nondeterminism inherent to tile assembly in order to achieve our result. Each base piece nondeterministically attaches to one of 8 different assemblies, called \emph{keystones}, that set off a chain of tile attachments that place glues in the appropriate positions. As shown in Figure \ref{summary_1}, the base piece labeled 1 can grow into any one of the 8 different assemblies shown. The other base pieces act similarly. The purpose of nondeterminism is not yet seen in this step; while the 8 different assemblies are geometrically similar at this stage, they pass along information that determines the behavior of the assembly that consists of 8 attached base pieces. In other words, they determine the behavior of the next iteration of the Sierpinski carpet.

\begin{figure}
\begin{center}
	\includegraphics[width=.75\textwidth]{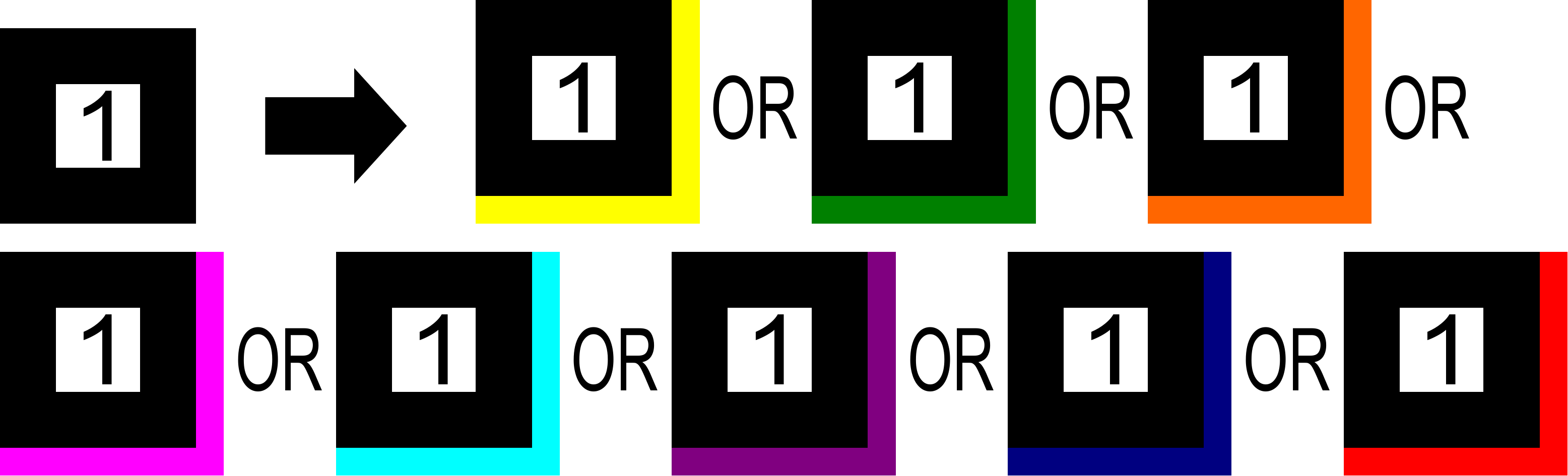}
	\caption{A base piece places tiles along specific edges, but will choose nondeterministically from 8 different tilesets to do so.}
	\label{summary_1}
\end{center}
\end{figure}

As shown in Figure \ref{summary_2}, 8 base pieces that have each nondeterministically attached the same \emph{keystones} will be able to attach to one another. If two base pieces attach different keystones, they will not attach to one another. Once 8 base pieces that have nondeterministically chosen the same keystones attach to each other, they will place glues along their edges according to the keystone they chose.

\begin{figure}
\begin{center}
	\includegraphics[width=.95\textwidth]{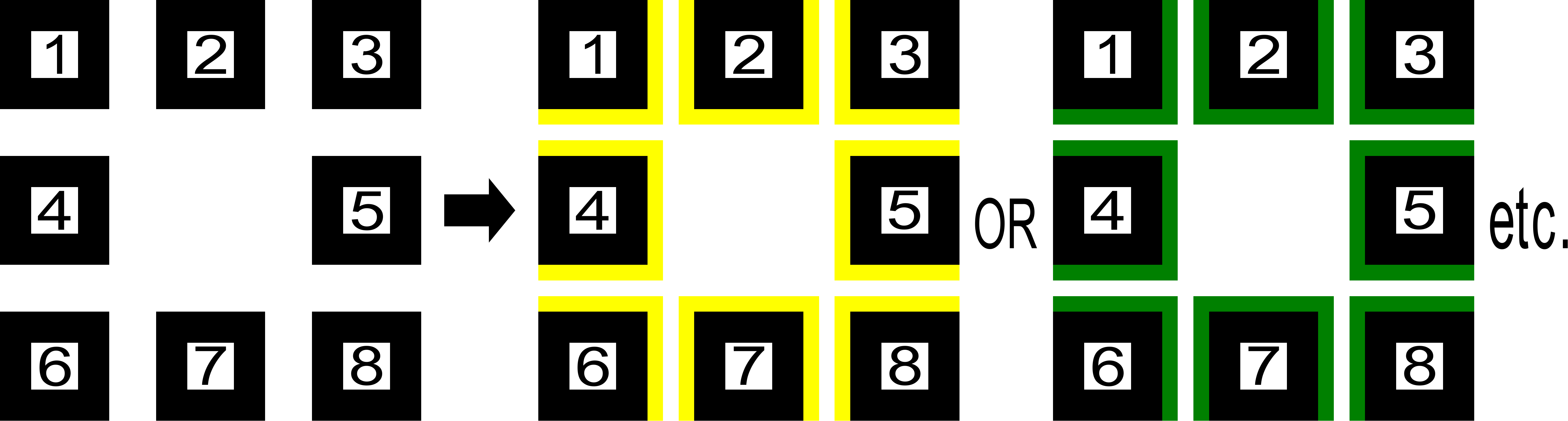}
	\caption{8 base pieces will attach to one another if they all nondeterministically attach the same keystone.}
	\label{summary_2}
\end{center}
\end{figure}

The keystone determines along which edges tiles are placed and which edges are left open, similar to Figure \ref{summary_0}. Figure \ref{summary_3} shows an assembly of 8 base pieces placing glues along their south and east edges as determined by their keystone. Note that the assembly can nondeterministically grow into one of 8 assemblies, again by the attachment of a keystone. The keystone assemblies attached at this step are the same as the ones in the base piece step, which allows our system a constant tile complexity. Figure \ref{summary_4} shows a different assembly of 8 base pieces, having attached different keystones than the base pieces in Figure \ref{summary_3}, thus placing tiles along its west, east, and south edges.

\begin{figure}
\begin{center}
	\includegraphics[width=.95\textwidth]{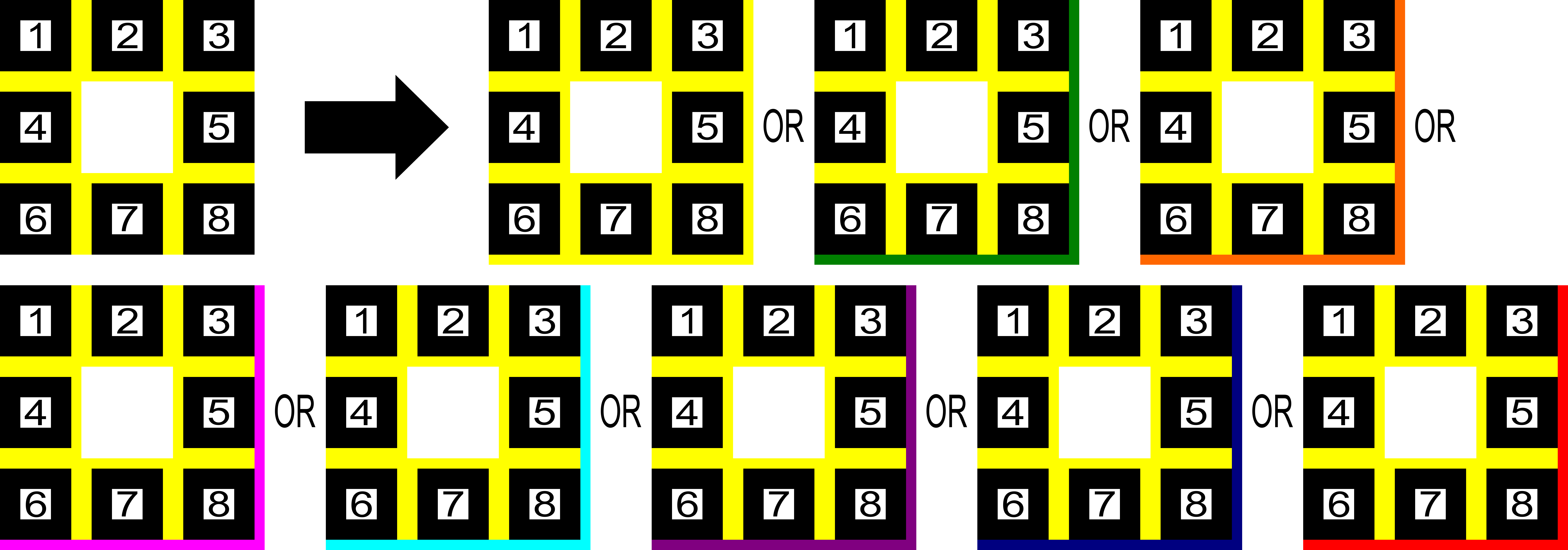}
	\caption{A set of 8 assemblies that have chosen the same keystone will attach and nondeterministically choose a keystone for the next iteration of the carpet.}
	\label{summary_3}
\end{center}
\end{figure}

\begin{figure}
\begin{center}
	\includegraphics[width=.95\textwidth]{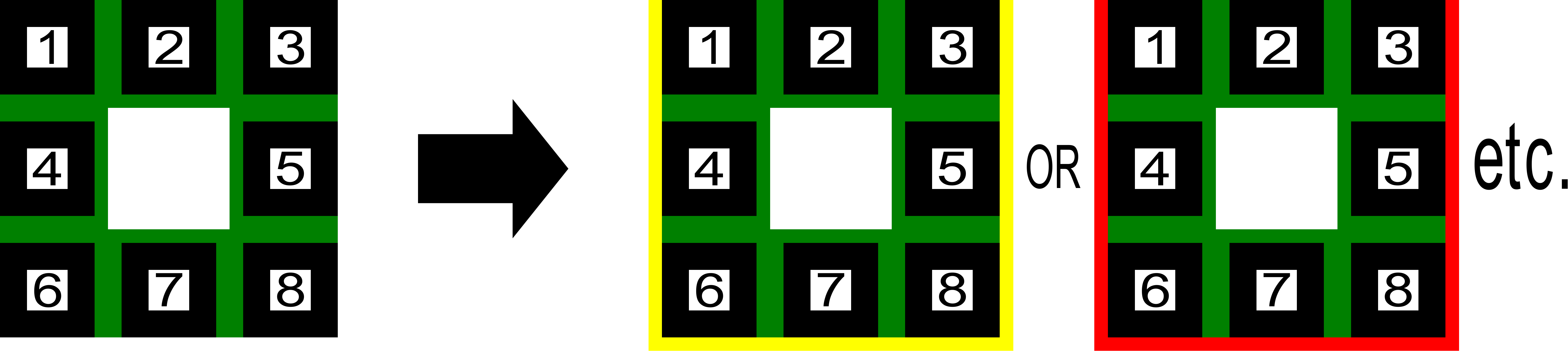}
	\caption{A group of 8 assemblies that have nondeterministically chosen to be the top-middle square in the Sierpinski carpet, placing their glues along the appropriate edges.}
	\label{summary_4}
\end{center}
\end{figure}

8 sets of 8 base pieces are shown in Figure \ref{summary_5}. In each set of 8 base pieces, each base piece has attached the same keystones at the first stage, but each set of 8 base pieces attached using different keystones than the other sets. These 8 sets will place tiles on the appropriate edges so that when the 8 sets attach, they will form the next iteration of the Sierpinski carpet with a missing perimeter of width 1. However, each set must choose the same keystone in order to attach to the other sets. 8 sets of 8 base pieces can grow nondeterministically into 8 different assemblies, again by the process of keystone attachment. The tile system is designed to allow the process of nondeterministic keystone attachment at any iteration of the discrete Sierpinski carpet, allowing for the finite assembly of the infinite shape.

\begin{figure}
\begin{center}
	\includegraphics[width=.95\textwidth]{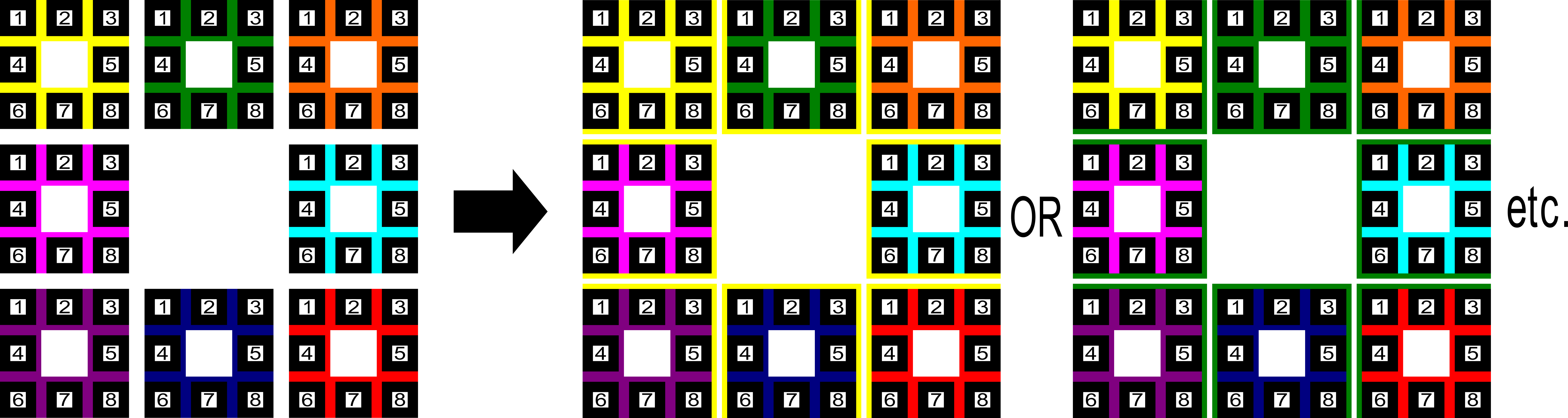}
	\caption{8 sets of 8 base pieces nondeterministically choose which role they will play in the next iteration of the process.}
	\label{summary_5}
\end{center}
\end{figure}

\paragraph{Base Shapes.} An assembly is considered to be a \emph{base shape of $C_i^3$} if it consists of only the points in $C_i^3$ minus an outer perimeter of width 1 with indented corners as shown in Figure \ref{C_baseShape}. The outer perimeter of a base shape of $C_i^3$ must expose two \emph{keystone glues} and otherwise expose \emph{base glues}; both of these glue types will be described in following sections. Note that we require a scale factor of 3 in order to assemble these base shapes. We will construct base shapes of size $C_1^3$, and then show that any base shape of $C_i^3$ can grow into a base shape of $C_{i+1}^3$. Also, there will be 8 distinct assemblies of each base shape of $C_i^3$, one for each \emph{position}. A \emph{position} is an integer 1 through 8 describing how one square-shaped assembly attaches to another, as shown in Figure \ref{summary_0}. An assembly at position 1 will attach assemblies on its east edge and south edge, an assembly at position 2 on its west and east edge, etc.

\begin{figure}
\begin{center}
	\includegraphics[width=.50\textwidth]{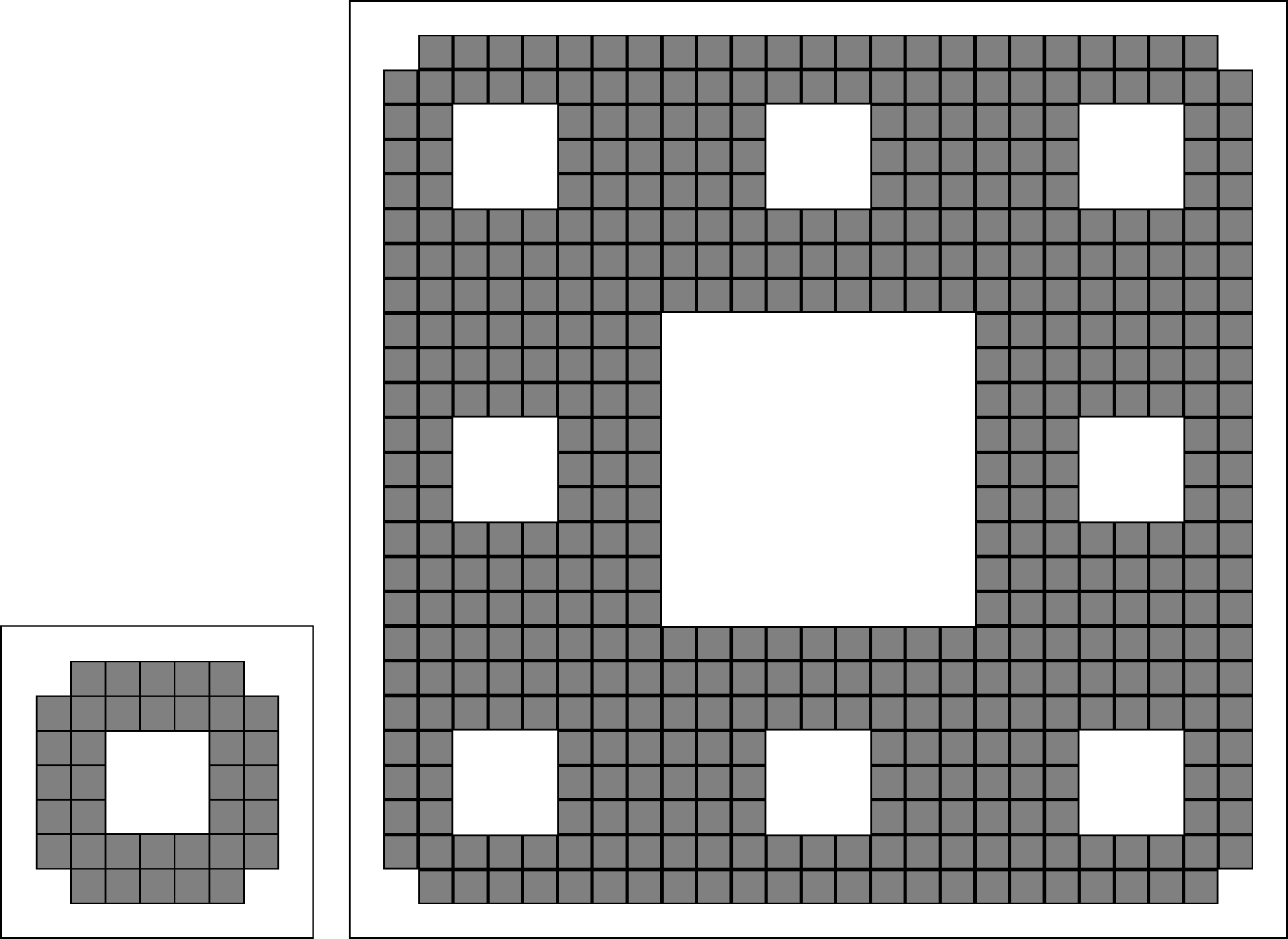}
	\caption{The points of a base shape of $C_1^3$ and base shape of $C_2^3$, respectively. An outline is shown to indicate the actual shape of $C_1^3$ and $C_2^3$.}
	\label{C_baseShape}
\end{center}
\end{figure}

\paragraph{Formation of Base Pieces.}
8 distinct base pieces of base shape $C_1^3$ are formed using 36 tiles each. Each base piece consists of two assemblies called \emph{base parts}. The base parts attach to one another using cooperative binding with glues at separate ends of the base parts, as shown in Figure \ref{base}. This ensures that each base part is completely assembled before it can attach to its corresponding base part to form a complete base piece. Along the edges of the base pieces are glues labeled with an N, W, S, or E followed by a B, corresponding to the cardinal directions. These glues will be referred to as \emph{base glues}, and their purpose will be explained in the \emph{Connector Glues} section.

\begin{figure}
\begin{center}
	\includegraphics[width=.90\textwidth]{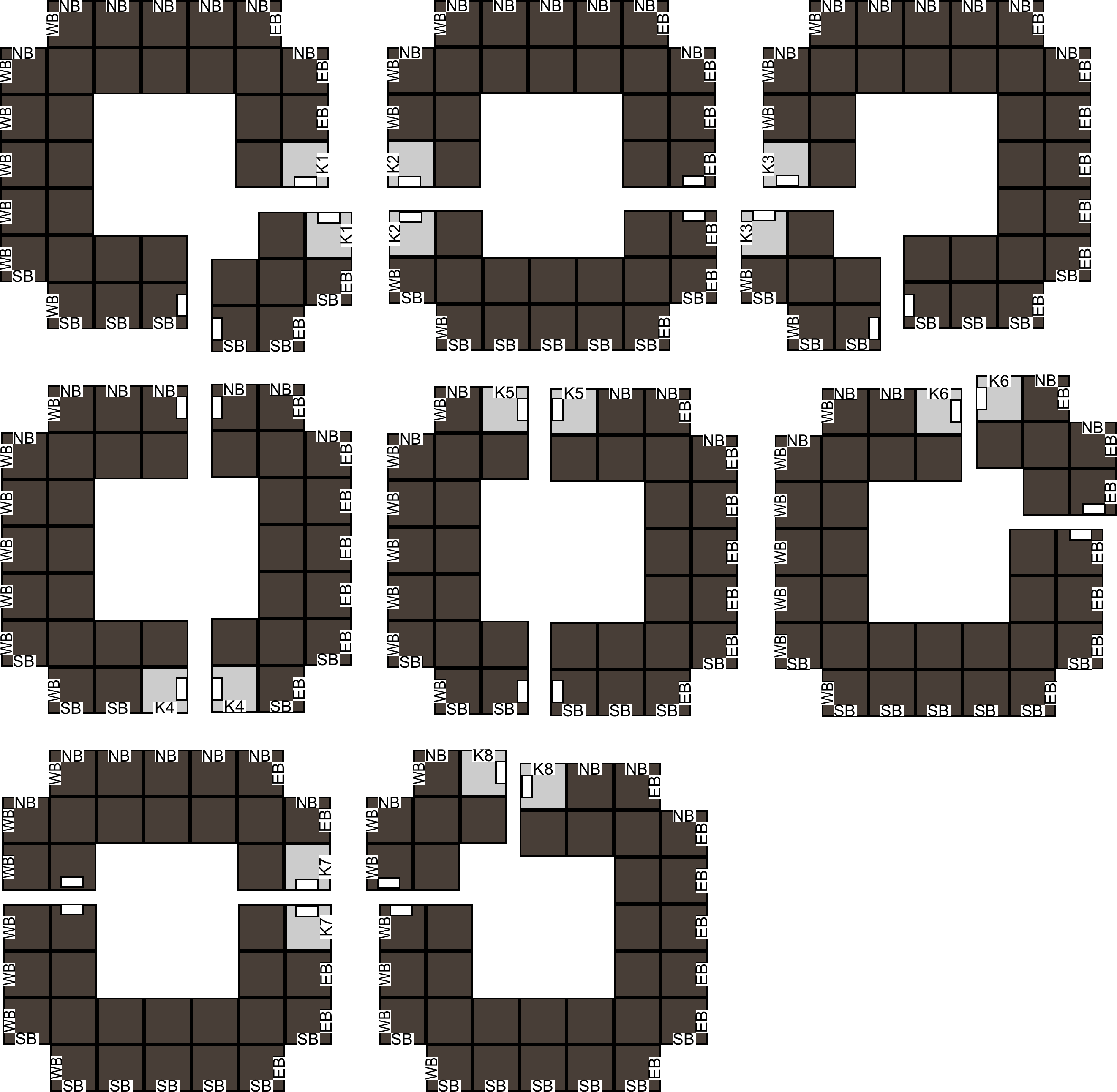}
	\caption{The distinct base pieces and their base part attachments.}
	\label{base}
\end{center}
\end{figure}

\paragraph{Keystone Attachments.}
Each of the 8 base pieces has its own set of 8 keystones, for a total of 64, shown in Figure \ref{keyFigures}. A \emph{keystone} is a two-tile assembly designed to attach to a base piece once the base piece is complete. Any one of the 8 keystones corresponding to a particular base piece is able to attach to the base assembly in a nondeterministic fashion.

\begin{figure}
\begin{center}
	\subfigure[] {
		\includegraphics[width=.45\textwidth]{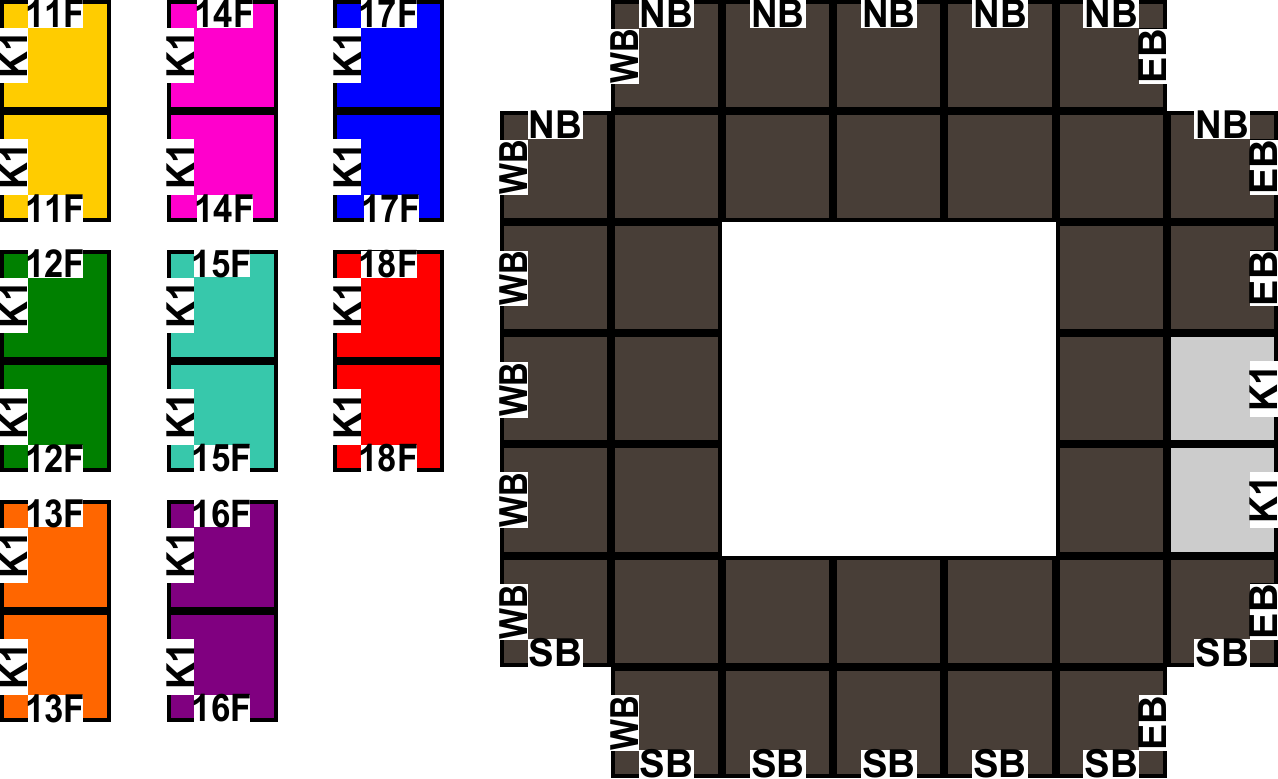}
		\label{keyFigures_1}
	}
	\subfigure[] {
		\includegraphics[width=.45\textwidth]{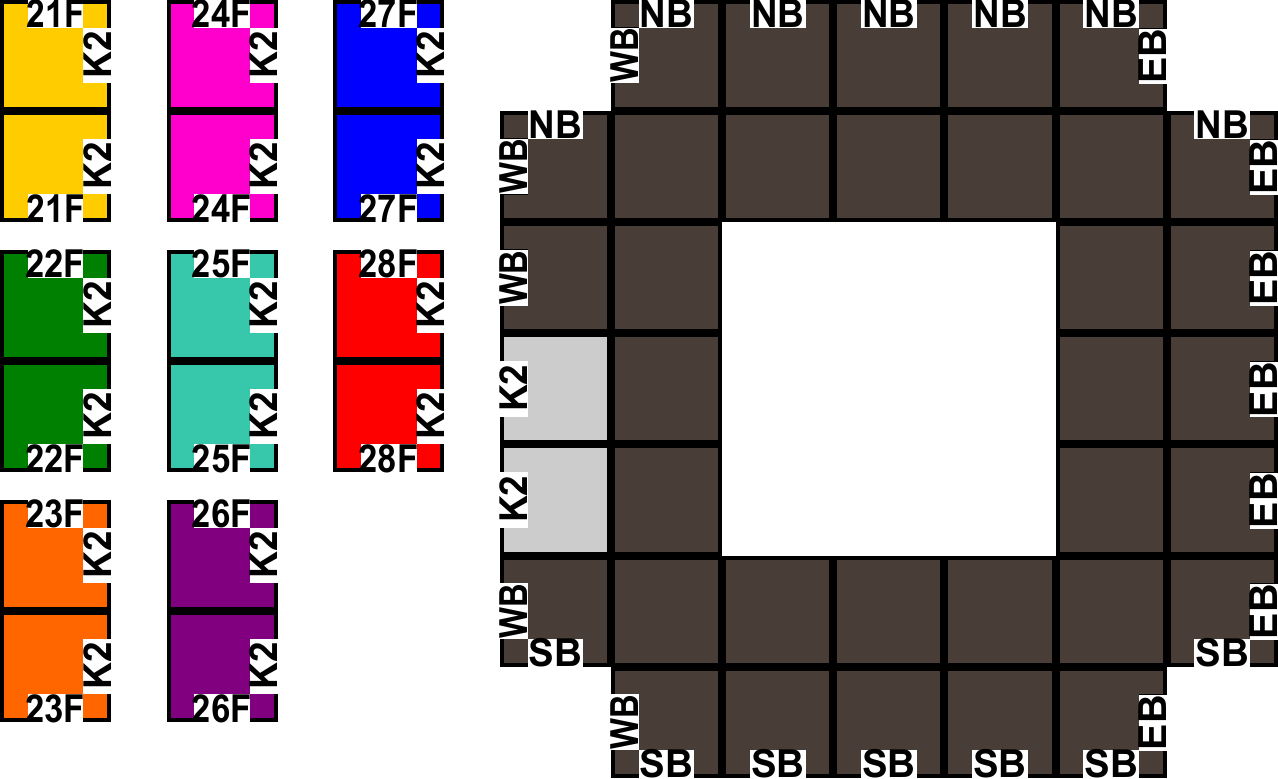}
		\label{keyFigures_2}
	}
	\subfigure[] {
		\includegraphics[width=.45\textwidth]{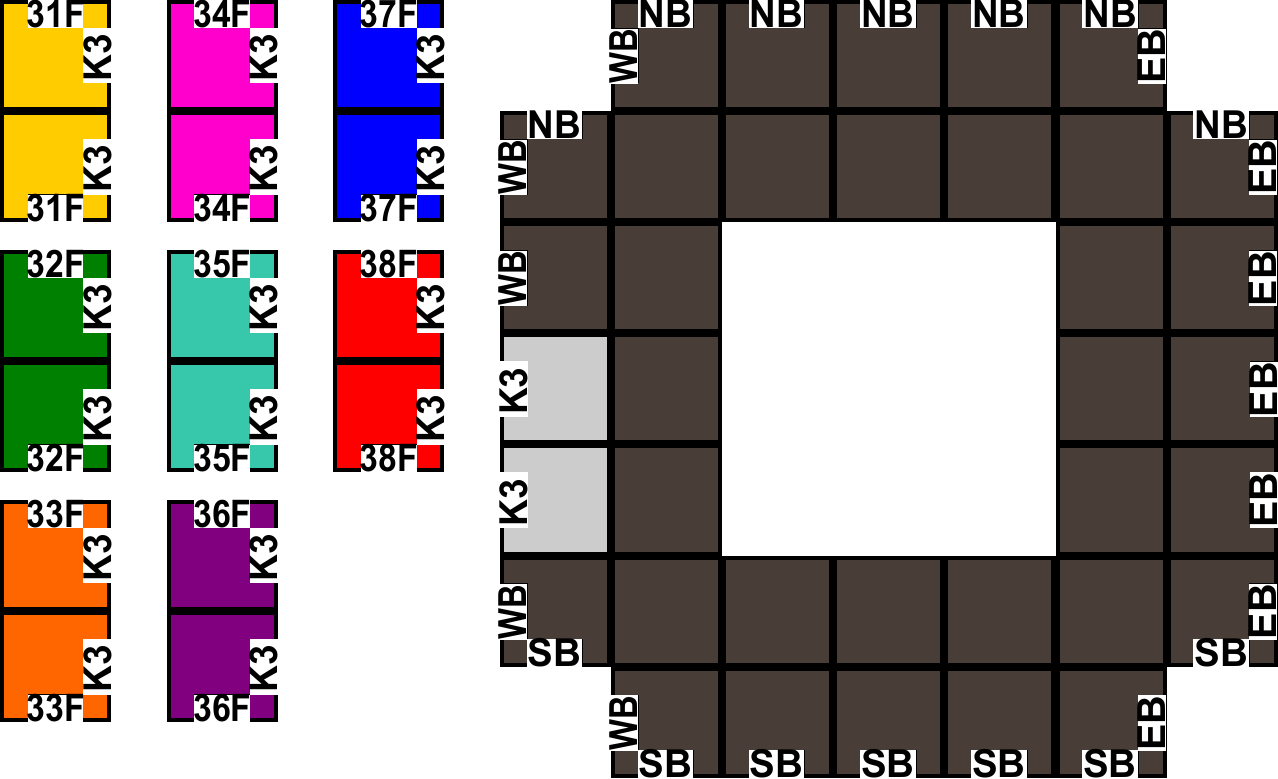}
		\label{keyFigures_3}
	}
	\subfigure[] {
		\includegraphics[width=.45\textwidth]{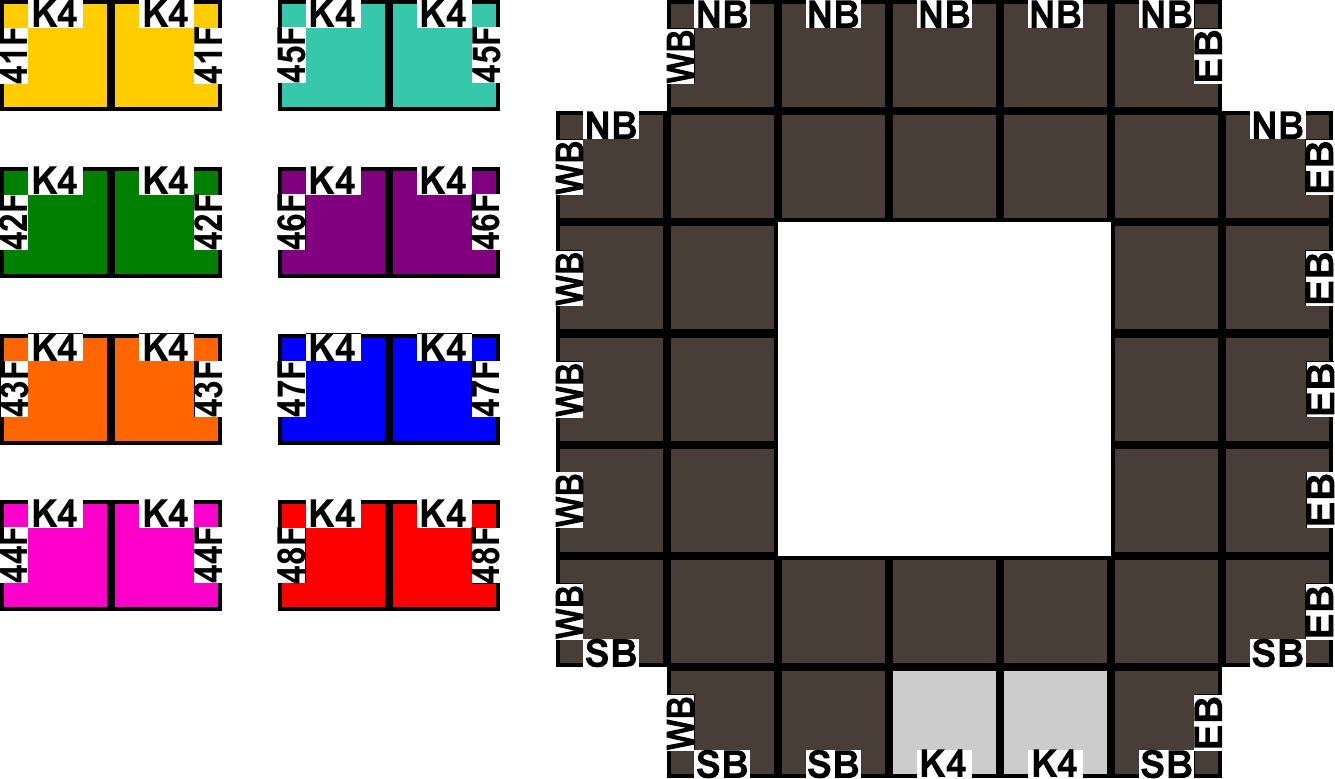}
		\label{keyFigures_4}
	}
	\subfigure[] {
		\includegraphics[width=.45\textwidth]{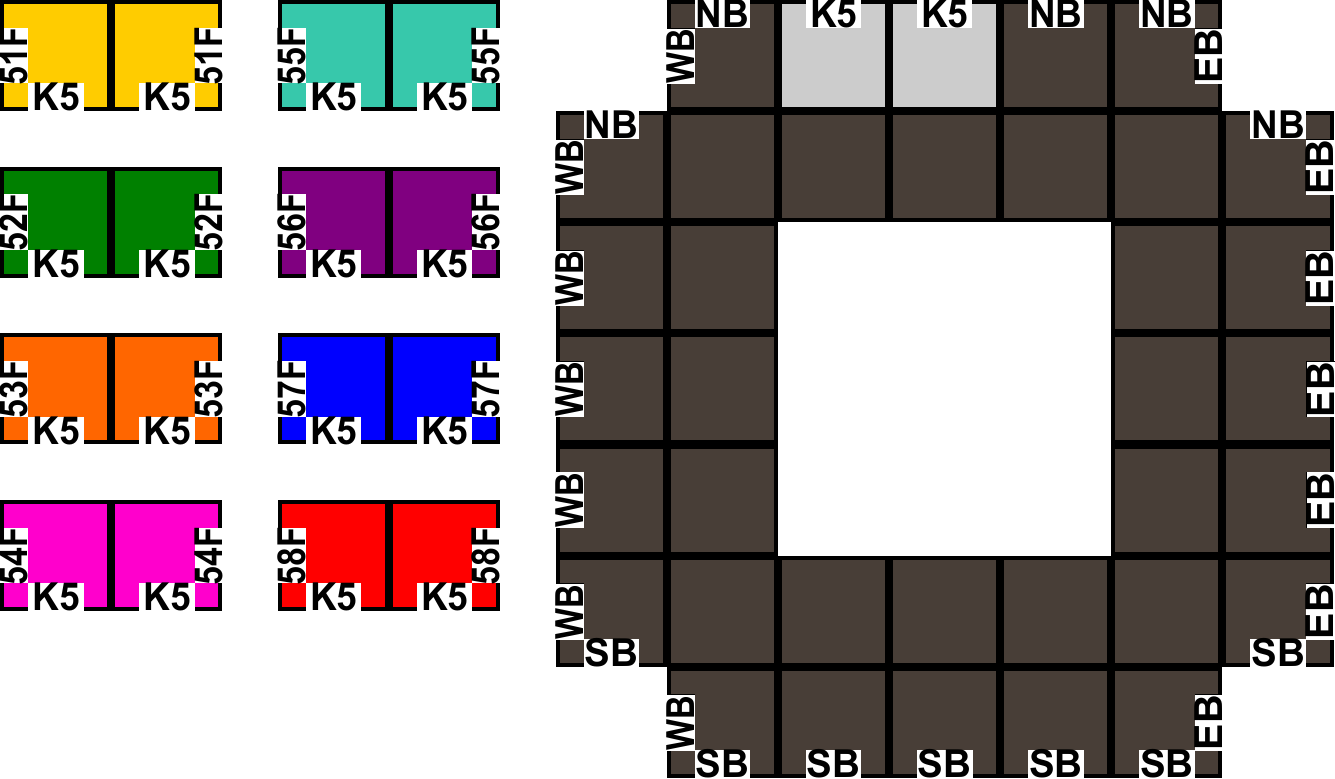}
		\label{keyFigures_5}
	}
	\subfigure[] {
		\includegraphics[width=.45\textwidth]{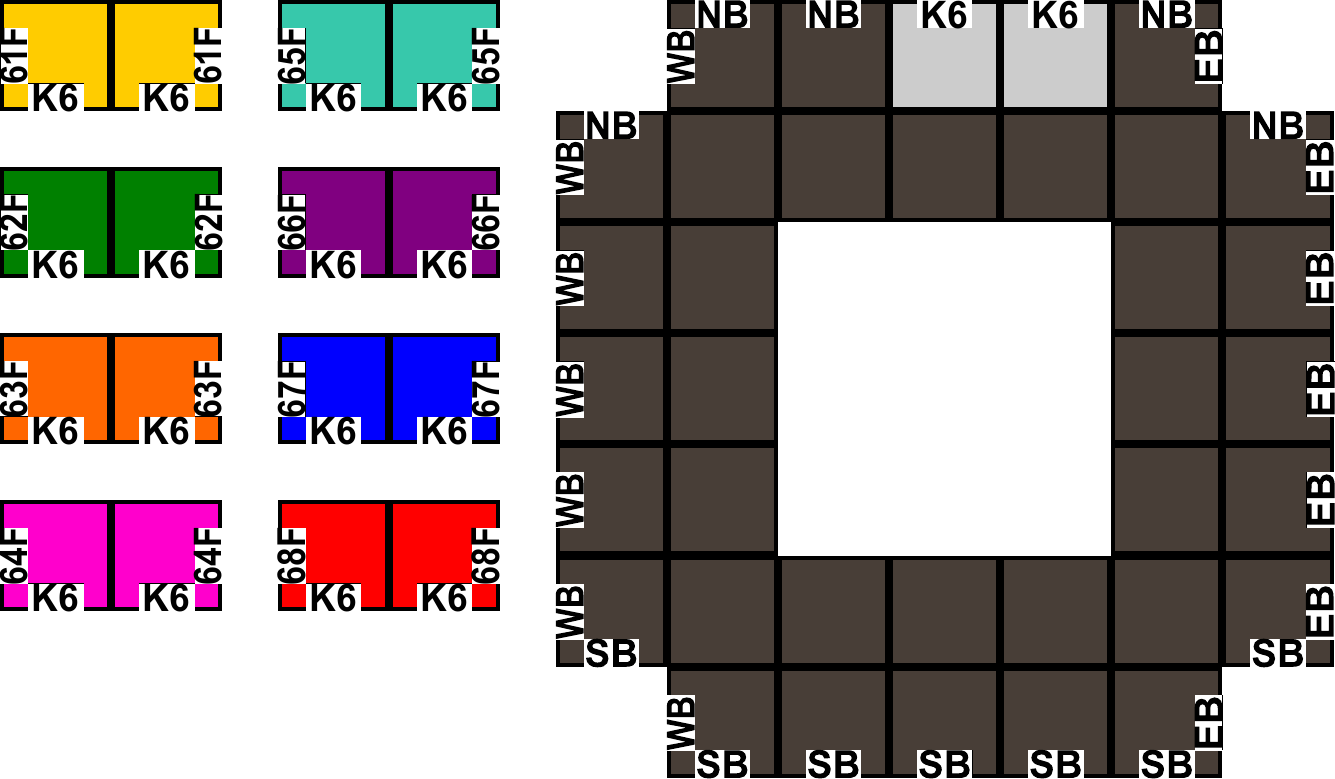}
		\label{keyFigures_6}
	}
	\subfigure[] {
		\includegraphics[width=.45\textwidth]{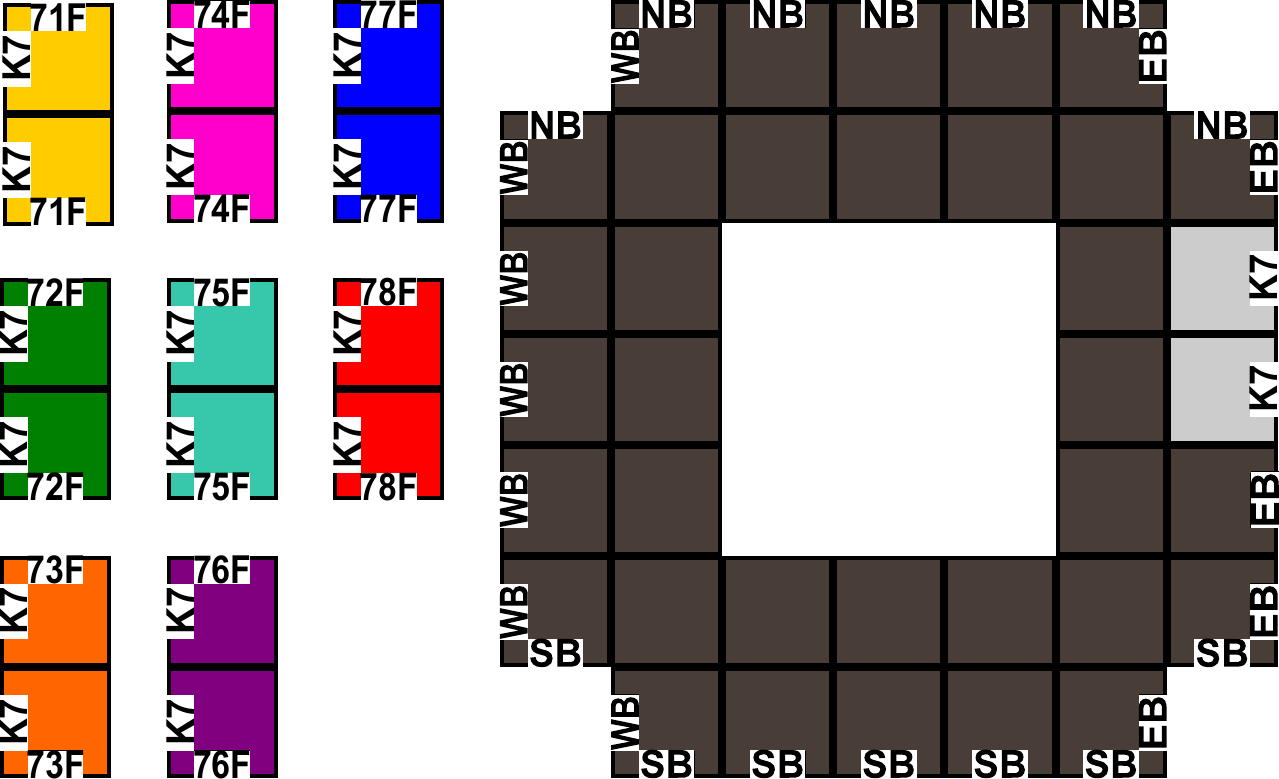}
		\label{keyFigures_7}
	}
	\subfigure[] {
		\includegraphics[width=.45\textwidth]{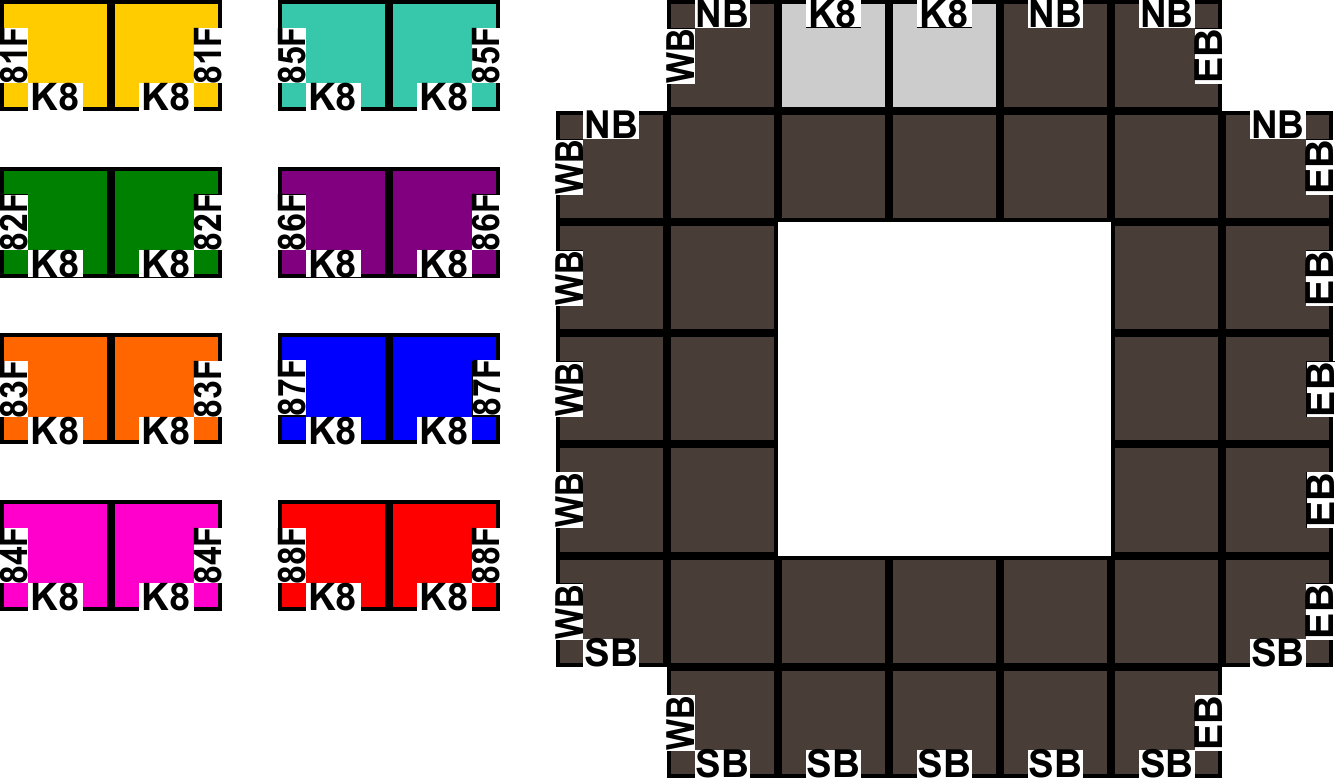}
		\label{keyFigures_8}
	}
\caption{The base pieces and their corresponding keystone assemblies.}
\label{keyFigures}
\end{center}
\end{figure}

Note that a base piece must complete before a keystone can attach, since each keystone binds cooperatively to the base piece. On the keystone assemblies and the base pieces, there are glues labeled K followed by a number corresponding to that base piece's position, ensuring that the keystones attach to the appropriate base pieces. We will refer to these glues as the \emph{keystone glues}. The keystone glues exposed on the base piece are not available for cooperative binding until the base piece completes. Similarly, the pair of tiles that make up the keystone assemblies must also bind to one another before they can attach to the base piece.

\paragraph{Connector Glues.}
The keystones initiate a growth of tiles, called \emph{filler tiles}, along the edges of the base pieces. The filler tiles are designed with glues labeled with two numbers and a letter F, which will hereby be referred to as the \emph{filler glue}. The first number of a filler glue refers to the position of the base piece that the filler tile attaches to. The second number refers to a \emph{keystone type}. A keystone type determines the position of the assembly consisting of 8 base pieces that have attached corresponding keystone types; in other words, when a keystone attaches to a base shape of $C_n^3$, the keystone type determines the position of the base shape of $C_{n+1}^3$ in the base shape of $C_{n+2}^3$.

A filler tile is placed cooperatively using the base glue exposed on the base piece and the filler glue exposed on the keystone. Once a filler tile attaches to the keystone and the base piece, another filler tile can be cooperatively attached using the filler glue that is exposed by a filler tile that has already attached to the assembly and the base glue exposed on the base piece. This process, shown in Figure \ref{filler}, can occur along arbitrarily long edges that expose the appropriate base glues. Note that the filler tiles can attach until they reach an indented corner of a base piece, since there is no base glue available for cooperative binding.

\begin{figure}
\begin{center}
	\includegraphics[width=.95\textwidth]{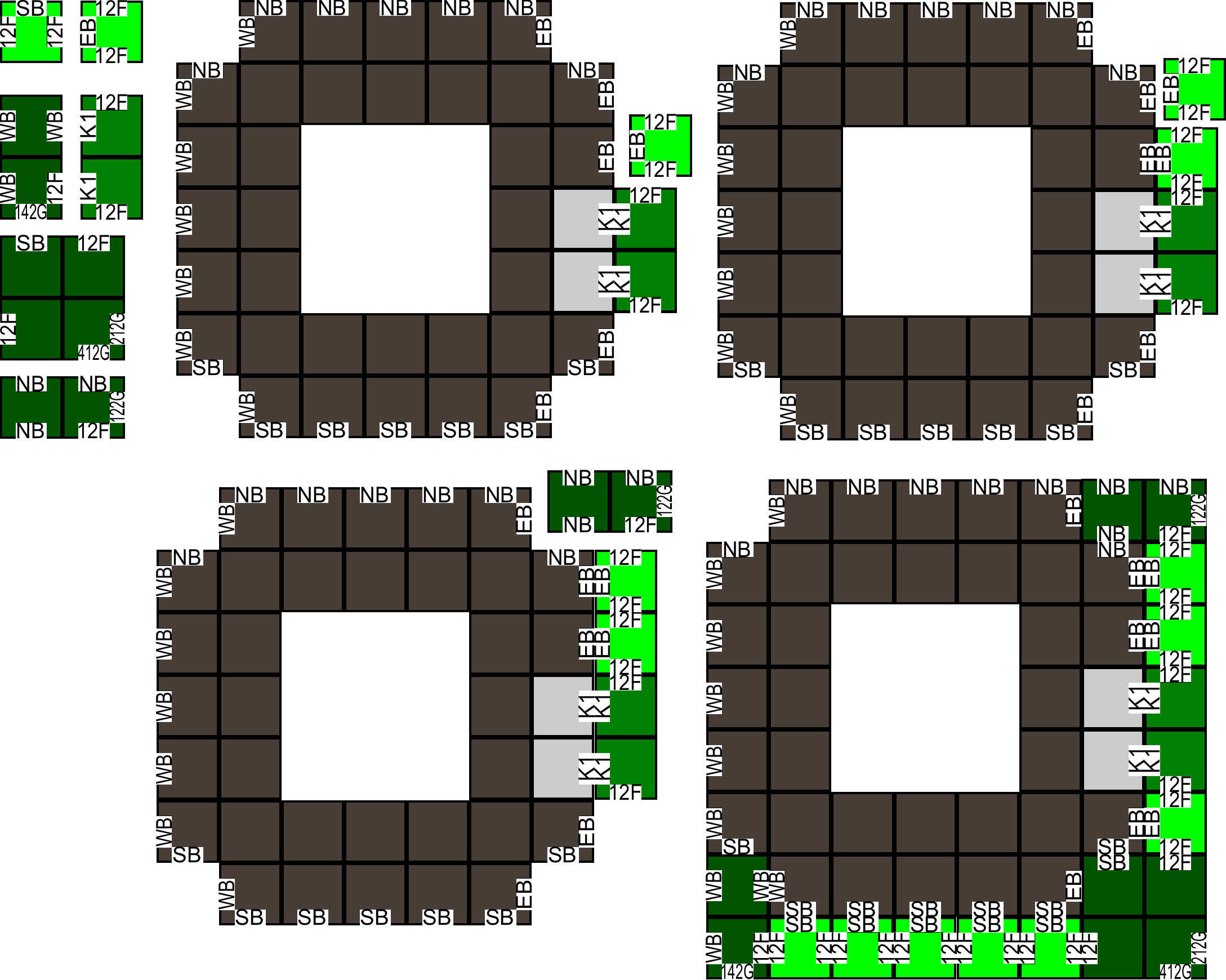}
	\caption{The process of filler tile attachment, leading to the attachment of corner assemblies.}
	\label{filler}
\end{center}
\end{figure}

The \emph{corner assemblies} are a set of assemblies geometrically designed to cooperatively attach using the filler glue exposed by a filler tile and the base glue exposed on the inside of an indented corner of a base piece, as shown in Figure \ref{filler}. These \emph{corner tiles}, once attached to a base piece, expose \emph{connector glues} that allow for the base piece to attach to another base piece. A connector glue is a glue labeled with three numbers followed by a G. The first two numbers refer to the positions of the base pieces that are being attached to one another by the binding of the connector glues. The third number refers to the keystone type. Note that the base pieces will only attach to one another if they have attached the same keystone types, as the keystone type information is passed through the filler tiles to the corner tiles and their connector glues.

\begin{figure}
\begin{center}
	\includegraphics[width=.95\textwidth]{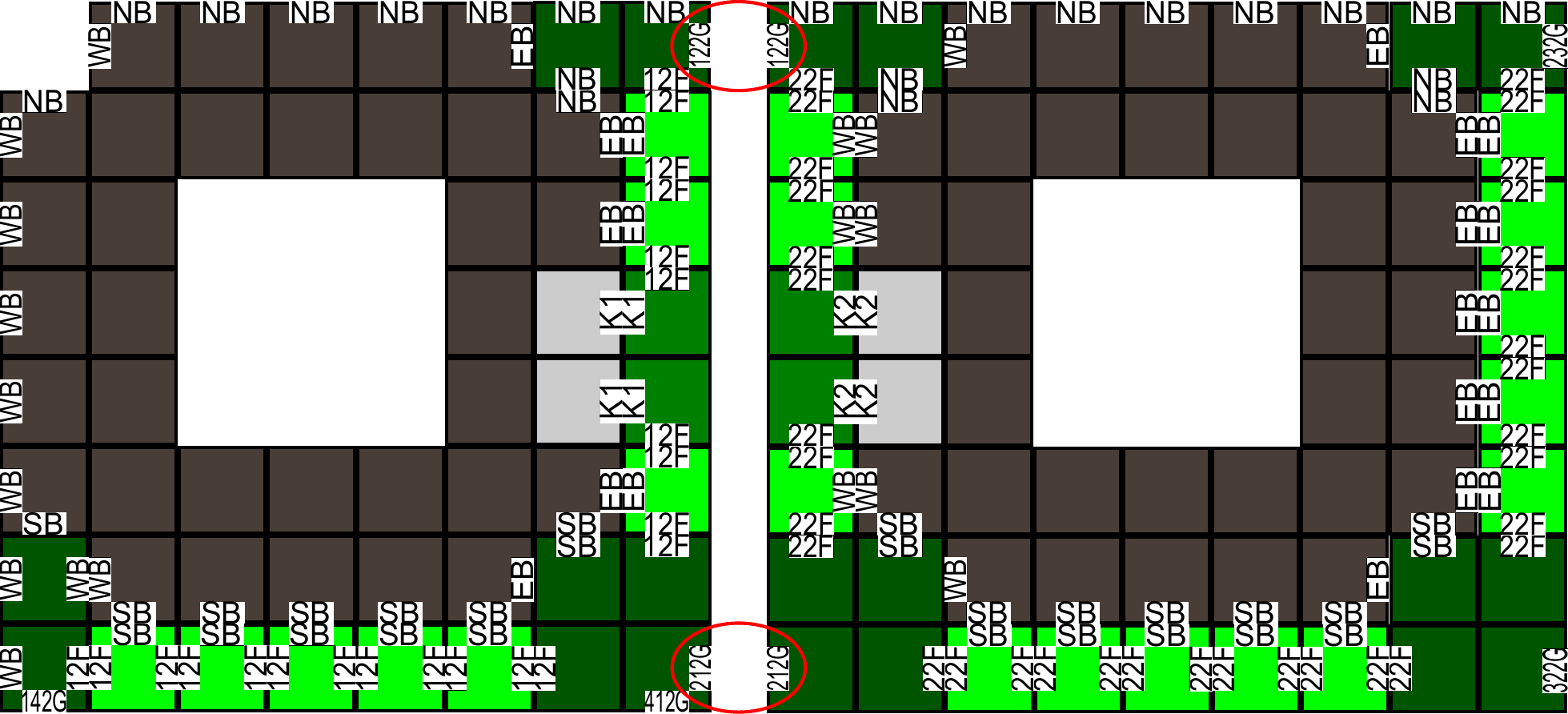}
	\caption{Two adjacent positioned assemblies attach using their connector glues.}
	\label{connectors}
\end{center}
\end{figure}

Assemblies that have attached sufficient connector glues attach to other assemblies that have been attached with the same \emph{keystone type} and have also attached sufficient connector glues. The connector glues attach two assemblies using cooperative binding at opposite ends of the adjacent sides of the assemblies. See Figure \ref{connectors}.

\begin{figure}
\begin{center}
	\includegraphics[width=.95\textwidth]{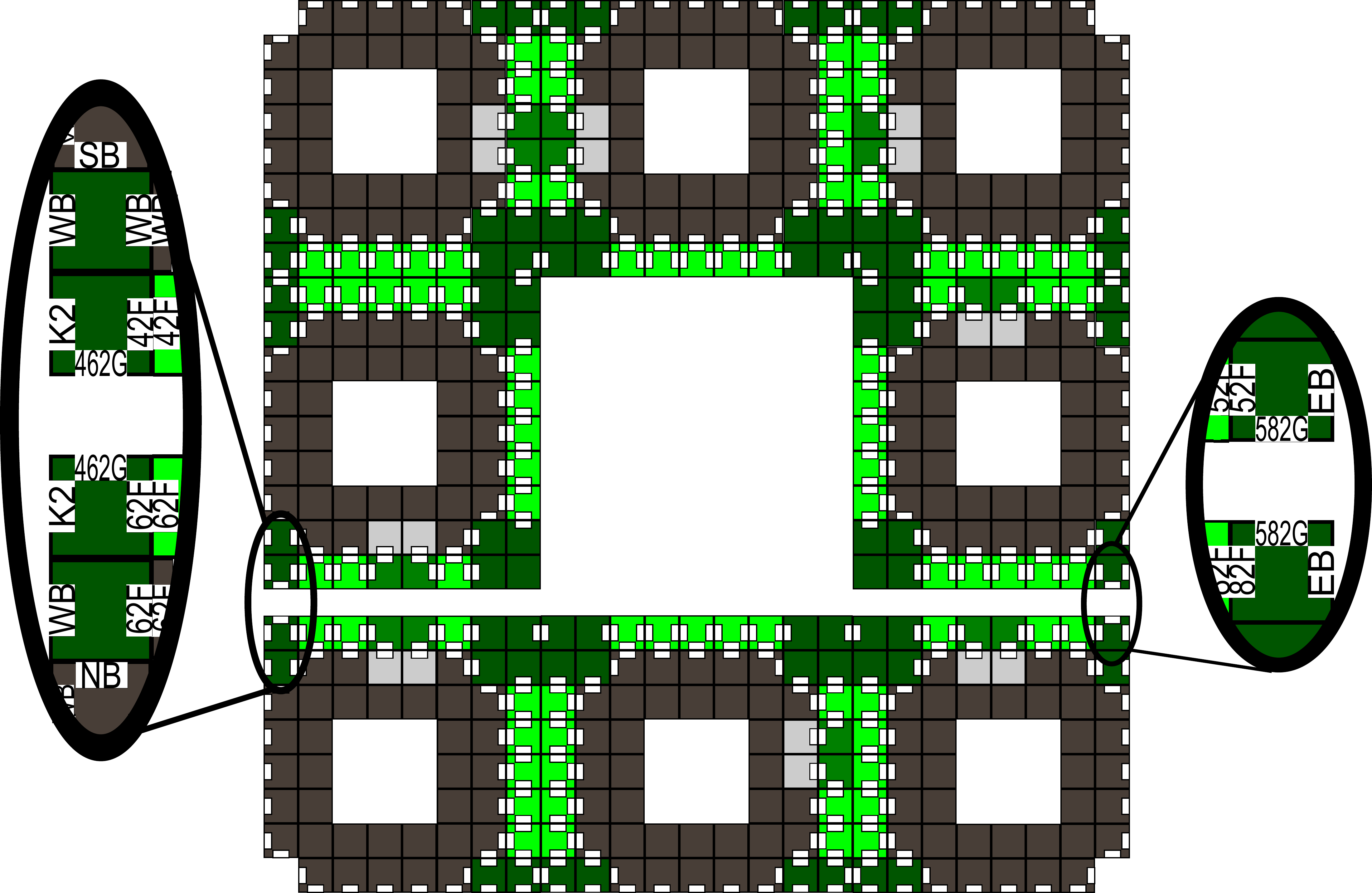}
	\caption{The creation of a base shape of $C_2^3$, showing its newly exposed keystone glues.}
	\label{coop}
\end{center}
\end{figure}

For each keystone type, there is an assembly at a position or group of positions that is designed to attach cooperatively using two adjacent positioned assemblies, or in the case of positions 2, 4, 5, and 7, using three positioned assemblies. For example, an assembly with keystone type corresponding to position 2 will attach its subassembly corresponding to positions 6, 7, and 8 cooperatively, only after placing subassemblies at positions 4 and 5, as shown in Figure \ref{coop}. Each of these assemblies imitates its base piece of corresponding position in terms of cooperative binding of the final subassembly. Note the similarities in the final attachment in the 2nd base piece in Figure \ref{base} and Figure \ref{coop}. Once these final subassemblies are attached, the resulting assemblies are base shapes of $C_2^3$. At the point of these final cooperative attachments, two keystone glues are exposed, allowing for the attachment of another keystone, repeating the keystone attachment process, followed again by the filler tiles and then their corner assemblies. In the example shown in Figure \ref{coop}, the assembly will be in position 2 in the creation of a base shape of $C_3^3$. A finite set of tiles is used to repeat this process indefinitely. We use 288 tiles for the base pieces, 128 tiles for the keystone assemblies, and 800 tiles for the filler tiles and corner assemblies, for a total of 1,216 tiles.
\end{proof}  

\section{Impossibility Results}
\label{sec:impossibility}

\subsection{The Sierpinski Triangle Does Not Strictly Assemble in the aTAM}
In this section we show that $S^c_{\infty}$ is not strictly self-assembled by any aTAM system.  Our argument is based on the constant width choke points that separate different iterations of the Sierpinski triangle.  Our proof simply pigeon-holes over these choke points to generate a reoccurring tiling (with respect to tile types placed, order in which points are tiled, and even neighbors present when a tile is attached) to generate a contradiction.  To prove our impossibility result, we first provide a formal definition of a reoccurring tiling called a \emph{window-movie} (we borrow/abuse the terminology from~\cite{MPS2014UTS}).

Our definitions of windows and window-movies are taken from \cite{MPS2014UTS}.  However, our \emph{windows} are slightly more general in that they may be any set of edges of the 2D lattice grid, even a finite set, as opposed to only allowing for sets that partition the 2D lattice into two disjoint regions.

\begin{definition}[Window] A window is any subset of edges taken from the grid graph, which is defined by vertex set $\mathbb{Z}^2$ and edge set consisting of point pairs of exactly unit distance.

The next concept used in our result is the \emph{window-movie}.  A window-movie, with respect to a given window and an assembly sequence, is simply a specification of which glues appear along the edges of the window, and in what order, according to the given assembly sequence.  Further, two window-movies are said to be \emph{equivalent} if their respective windows are translations of each other, and if the glues placed on the window are the same in type and in order of placement.  The formal definition as presented by Meunier, Patitz, Summers, Theyssier, Winslow, and Woods~\cite{MPS2014UTS} is included below.

\begin{definition}[Window-Movie~\cite{MPS2014UTS}]
Given an assembly sequence $\overrightarrow{\alpha}$ and a window $w$, the associated window movie is the maximal sequence
\begin{equation}
M_{\overrightarrow{\alpha},w} = (v_0,g_0), (v_1,g_1), (v_2,g_2), \ldots
\end{equation}
of pairs of grid graph vertices $v_i$ and glues $g_i$, given by the order of the appearance of the glues along window $w$ in the assembly sequence $\overrightarrow{\alpha}$.  Furthermore, if $k$ glues appear along $w$ at the same instant (this happens upon placement of a tile which has multiple sides touching $w$), then these $k$ glues appear contiguously and are listed in lexicographical order of the unit vectors describing their orientation in $M_{\overrightarrow{\alpha},w}$.  Two window-movies are said to be equivalent if they are equal up to translation.
\end{definition}

\end{definition}

%\begin{definition}
%For a given scale factor $c$, define $\texttt{choke}^\ell_i$ (left choke) to be the set of $c$ points in $S^c_i$ with $x$-coordinates less than 0, and that lie adjacent to points in $S^c_{i+1} - S^c_i$.  Similarly define $\texttt{choke}^r_i$ (right choke) to be the set of $c$ points in $S^c_i$ with $x$-coordinates greater or equal to 0, and that lie adjacent to $S^c_{i+1} - S^c_i$. These points are shown in figure \ref{aTAM_scaleChokes}
%\end{definition}

Our windows and window-movies will be focused on the following ``choke point" sets of edges within the Sierpinski triangle which specify constant width-$c$ regions that connect a given size iteration of the Sierpinski triangle to two equal size iterations through north/south bonding edges.

\begin{definition}
For a given scale factor $c$, define $\texttt{choke}^\ell_i$ (left choke) to be the set of $c$ edges of the 2D square lattice that connect the points of $S^c_i$ with $S^c_i +(-c2^i,c2^{i+1})$, and $\texttt{choke}^r_i$ (right choke) to be the set of $c$ edges of the 2D lattice that connect the points of $S^c_i$ with $S^c_i +(c2^i,c2^{i+1})$.  See Figure~\ref{aTAM_scaleChokes} for an example.
\end{definition}

\begin{figure}
\begin{center}
	\includegraphics[width=.75\textwidth]{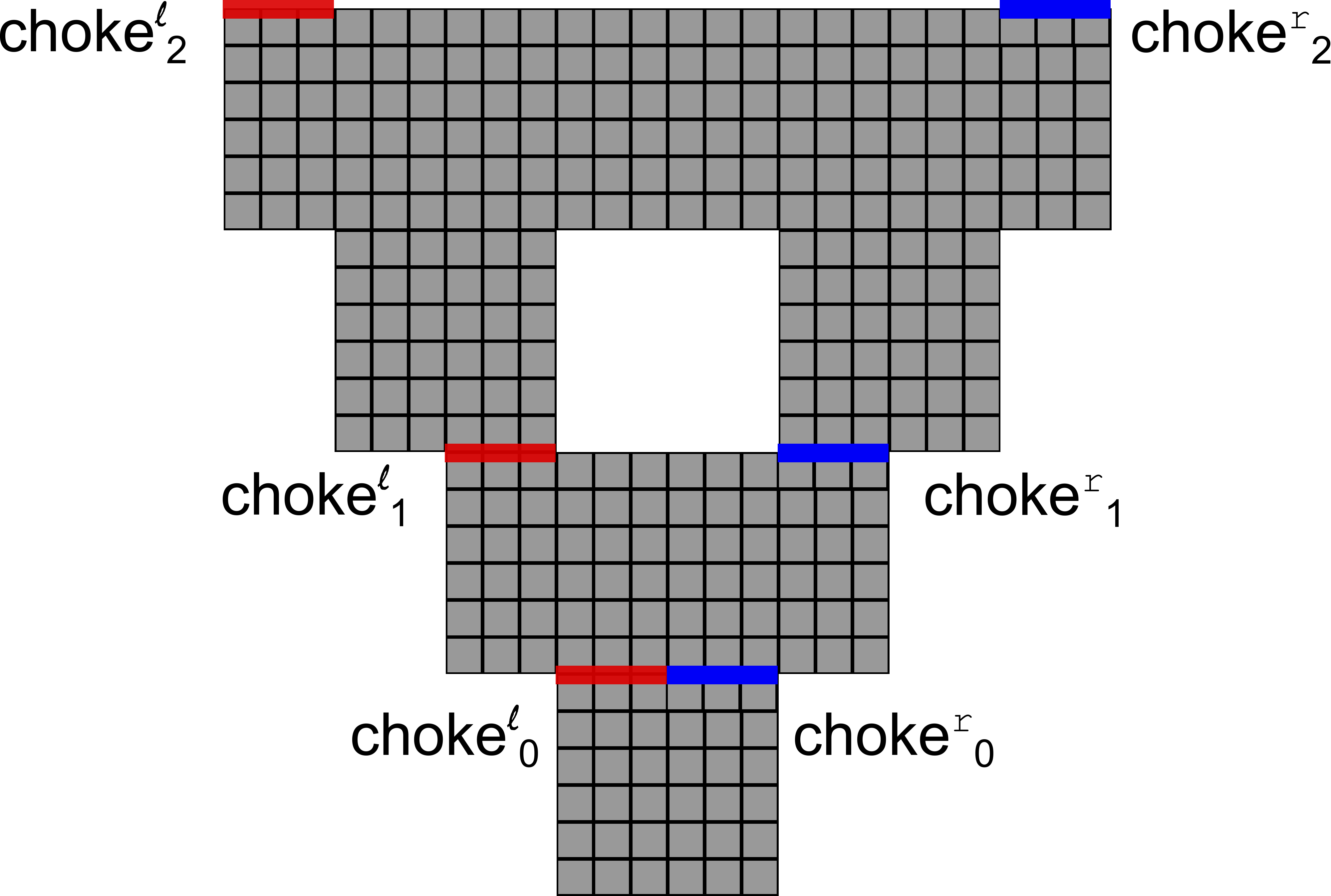}
	\caption{The left and right choke point edge sets for $S^3_{\infty}$ are shown. Edges in $\texttt{choke}^\ell_i$ are shown in red, while edges in $\texttt{choke}^r_i$ are shown in blue.}
	\label{aTAM_scaleChokes}
\end{center}
\end{figure}

Finally, our proof uses the following technical lemma which states that certain translations of pairs of points are such that they cannot both be contained within the Sierpinski triangle shape.

\begin{lemma}\label{lemma:pointlanding}
Let $c$, $i$, $j$, and $k$ be positive integers with $k,j>i\geq 1$, and let $\vec{v} = (c2^j-c2^i,c2^{j+1} - c2^{i+1})$. For any $p_y \in \{c2^{k+1} , c2^{k+1} +1, \ldots, c2^{k+1}+2c-1\}$, let $p^\ell = (-1,p_y)$ and $p^r = (0,p_y)$.  Then the translations of $p^\ell$ and $p^r$ by vector $\vec{v}$, $p^\ell_{\vec{v}}$ and $p^r_{\vec{v}}$, are such that either: 1) $p^\ell_{\vec{v}} \notin S^c_{\infty}$ or 2) $p^r_{\vec{v}} \notin S^c_{\infty}$.
\end{lemma}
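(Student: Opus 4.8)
The plan is to strip away the scale factor, replace membership in $S_\infty$ by a bitwise (``submask'') condition, and finish with a short binary computation. First I would reduce to scale $1$. Because $S^c_\infty$ is obtained from $S_\infty$ by replacing each occupied cell with a solid $c\times c$ block, we have $(X,Y)\in S^c_\infty$ iff $(\lfloor X/c\rfloor,\lfloor Y/c\rfloor)\in S_\infty$. Applying this to $p^\ell_{\vec v}$ and $p^r_{\vec v}$ and using $\lfloor -1/c\rfloor=-1$, the two points collapse to the unscaled cells $q^\ell=(2^j-2^i-1,\,Y_0+2^{j+1}-2^{i+1})$ and $q^r=(2^j-2^i,\,Y_0+2^{j+1}-2^{i+1})$, where $Y_0:=\lfloor p_y/c\rfloor\in\{2^{k+1},2^{k+1}+1\}$. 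Thus it suffices to prove the scale-$1$ claim that $q^\ell$ and $q^r$ are not both in $S_\infty$.

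Next I would prove a binary membership test for $S_\infty$. Unrolling the recursion, $S_\infty$ is exactly the union of the $2\times 2$ blocks $S_0+(\sigma_R-\sigma_L,\,2(\sigma_R+\sigma_L))$ over all disjoint finite $R,L\subseteq\mathbb{Z}_{\ge 0}$, where $\sigma_R=\sum_{m\in R}2^m$ and $\sigma_L=\sum_{m\in L}2^m$, and $R,L$ record the right/left branch chosen at each level (levels in neither set are the ``central'' copies). Since $R$ and $L$ are disjoint, the constraint $\sigma_R+\sigma_L=N$ forces $R\cup L$ to be precisely the set of one-bits of $N$. Hence $(x,y)\in S_\infty$ iff some \emph{submask} $s$ of $N:=\lfloor y/2\rfloor$ (a nonnegative integer whose one-bits form a subset of those of $N$) satisfies $2s-N\in\{x,x+1\}$. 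I would sanity-check this against $S_0,S_1,S_2$.

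I would then apply this test to $q^\ell$ and $q^r$. Both have the same $N=\lfloor Y/2\rfloor=2^k+2^j-2^i$ (the two possible values of $Y_0$ yield the same floor, which is why the width-$2c$ window is harmless), and since $i\ge 1$ both $N$ and $2^j-2^i$ are even. A one-line parity check then shows that for each point the only admissible value of $2s$ in the target pair $\{N+x,N+x+1\}$ is the even number $N+(2^j-2^i)$, so both membership questions reduce to the single condition that $s^{*}:=2^{k-1}+2^j-2^i$ be a submask of $N$. Showing it is \emph{not} a submask yields the stronger conclusion that neither $q^\ell$ nor $q^r$ lies in $S_\infty$.

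The decisive and most delicate step is the final bit bookkeeping. Writing $2^j-2^i=\sum_{m=i}^{j-1}2^m$, the support of $N=2^k+2^j-2^i$ is $\{i,\dots,j-1\}\cup\{k\}$, whereas $s^{*}=2^{k-1}+\sum_{m=i}^{j-1}2^m$ carries the bit $k-1$. When $k>j$ --- the ordering of the choke levels supplied by the application --- bit $k-1$ lies strictly above the block $\{i,\dots,j-1\}$ and strictly below $k$, so it is absent from $\mathrm{supp}(N)$; therefore $s^{*}$ cannot be a submask of $N$ and the lemma follows. I expect the main obstacle to be exactly this computation: one must verify that no carries fold bit $k-1$ into the lower block, and that the relevant order between $j$ and $k$ is the one placing $k-1$ outside $\mathrm{supp}(N)$, so that the parity reduction of the previous step genuinely fails for both candidate points.
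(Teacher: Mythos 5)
Your reduction to scale $1$, your block decomposition of $S_\infty$ (the $2\times 2$ blocks $S_0+(\sigma_R-\sigma_L,\,2(\sigma_R+\sigma_L))$ over disjoint finite $R,L$, turning membership into a submask test on $N=\lfloor y/2\rfloor$), and your parity step pinning $2s=N+(2^j-2^i)$ are all correct, and together they do prove the conclusion when $k>j$. The genuine gap is the final step: the lemma hypothesizes only $k,j>i\geq 1$, and nothing in the statement grants you $k>j$. Worse, in the regime $i<k\leq j$ your own criterion decides membership the other way: carries give $N=2^k+(2^j-2^i)=\sum_{m=i}^{k-1}2^m+2^j$ and $s^{*}=2^{k-1}+(2^j-2^i)=\sum_{m=i}^{k-2}2^m+2^j$ for $i<k<j$, and $s^{*}=\sum_{m=i}^{j-2}2^m+2^j$ with $\mathrm{supp}(N)=\{i,\dots,j\}$ for $k=j$; in both cases $s^{*}$ \emph{is} a submask of $N$, so both translated points lie in $S^c_\infty$. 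Concretely, $c=1$, $i=1$, $j=k=2$, $p_y=8$ gives $p^\ell_{\vec v}=(1,12)$ and $p^r_{\vec v}=(2,12)$, both contained in $S_0+(2,12)\subseteq S_3\subseteq S_\infty$ (go left at level $1$, right at level $2$). So the statement as written is false in exactly the cases your argument leaves open; the proof cannot be completed, only the hypothesis can be repaired by adding $k>j$. Deferring to ``the ordering supplied by the application'' does not rescue it: in the proof of Theorem~\ref{thm:aTAMImpossible} the landing height of the translated tiles only guarantees $k>i$, and nothing there forces $k>j$.

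For comparison, the paper's own proof of Lemma~\ref{lemma:pointlanding} is a short geometric argument asserting that the only translations along the $(c,2c)$ direction keeping both points inside $S^c_\infty$ land at heights $dc2^{k+1}+\Delta$, $0\leq\Delta\leq 2c-1$, for positive integers $d$, and then dismisses the landing ``as $k>i$ and $j>i$.'' That characterization is itself inaccurate --- the heights at which both columns of such a translated pair are occupied are indexed by \emph{every} power $2^m$, not by multiples of the single fixed $2^{k+1}$ --- and it silently glosses over precisely the cases $i<k\leq j$ that your bit bookkeeping exposes. So your approach is in fact sharper than the paper's: it replaces a vague geometric claim with an exact algebraic criterion and identifies exactly when the conclusion holds. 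But as a blind proof of the lemma as stated it has a real hole, and since that hole coincides with genuine counterexamples, no additional care in the carry analysis can close it.
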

\begin{proof}
First, consider the set of points defined by translating $p^\ell$ and $p^r$ by any positive integer multiple of the vector $(-c,2c)$, which includes the translation by vector $\vec{v}$.  The only such translations in which both points lie within $S^c_{\infty}$ are those that move the point to a $y$ position of height between $dc2^{k+1}$ and $dc2^{k+1} + 2c  -1$  for some positive integer $d$, i.e., any integer multiple of the $2^{k}$ high block of size $2c$.  However, any translation of $p^\ell$ and $p^{r}$ by $\vec{v} = (c2^j - c2^i,c2^{j+1} - c2^{i+1})$  will place the points at $y$-ordinate $p_y + c2^{j+1} - c2^{i+1}$.  As $k>i$ and $j>i$, we know that $p_y + c2^{j+1} - c2^{i+1}$ cannot fall within the required range, i.e., it cannot sum to an integer multiple of $c2^{k+1}$ plus some $\Delta$ between $0$ and $2c -1$.
\end{proof}

\begin{theorem}\label{thm:aTAMImpossible}
$S_{\infty}^c$ is not strictly assembled by any aTAM system.
\end{theorem}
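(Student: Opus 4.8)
The plan is to argue by contradiction using a pumping argument over the width-$c$ choke points, adapting the window-movie technique of~\cite{MPS2014UTS}. Suppose some aTAM system $\Gamma=(T,\sigma,\tau)$ strictly assembles $S^c_\infty$, and fix an assembly sequence $\overrightarrow{\alpha}$ whose result has shape $S^c_\infty$. The property I aim to contradict is that, under strict assembly, every producible assembly of $\Gamma$ has shape contained in $S^c_\infty$ (any $A\in\texttt{PROD}_\Gamma$ grows into an assembly of shape $S^c_\infty$, and growth only adds tiles). Hence it suffices to manufacture one finite producible assembly that places a tile at a point outside $S^c_\infty$, since such an assembly can never grow into $S^c_\infty$.

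First I would harvest a repeated interface. The right chokes $\texttt{choke}^r_1,\texttt{choke}^r_2,\dots$ are mutual translates, each a set of exactly $c$ edges, and the displacement carrying $\texttt{choke}^r_i$ onto $\texttt{choke}^r_j$ is exactly the vector $\vec{v}=(c2^j-c2^i,c2^{j+1}-c2^{i+1})$ of Lemma~\ref{lemma:pointlanding}. Because at most $2c$ glues are ever recorded along a fixed $c$-edge window and $T$ is finite, there are only finitely many window-movies up to equivalence. After discarding the finitely many chokes that meet the finite seed $\sigma$ (so that for the surviving chokes the seed lies on one fixed side and every glue crossing the choke is deposited by growth), the pigeonhole principle yields two levels $i<j$ whose window-movies along $\texttt{choke}^r_i$ and $\texttt{choke}^r_j$ are equivalent with displacement $\vec{v}$.

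Next I would splice. Equivalence of the window-movies means the glue interactions recorded across the level-$i$ and level-$j$ interfaces are interchangeable, so a window-movie lemma for these finite windows should let me detach the sub-assembly that $\overrightarrow{\alpha}$ grows beyond the level-$i$ interface and re-attach it, translated by $\vec{v}$, across the level-$j$ interface, obtaining a new $\beta\in\texttt{PROD}_\Gamma$. This operation translates an entire sub-assembly by $\vec{v}$, carrying a central-seam tile at $p^\ell=(-1,p_y)$ or $p^r=(0,p_y)$ (with $p_y$ in the level-$k$ band $\{c2^{k+1},\dots,c2^{k+1}+2c-1\}$ for some $k>i$ inside the translated region) to $p^\ell_{\vec v}$ or $p^r_{\vec v}$. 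Lemma~\ref{lemma:pointlanding} guarantees that $p^\ell_{\vec v}$ and $p^r_{\vec v}$ cannot both lie in $S^c_\infty$, so, arranging the splice (using the left or right choke family as appropriate) so that the displaced seam tile lands on the point the lemma rules out, $\beta$ is forced to place a tile off the fractal, the desired contradiction.

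The step I expect to be the main obstacle is making the splice rigorous, since a single choke is only a $c$-edge cut and does not disconnect the grid: the sub-triangle beyond a right choke remains joined to the rest of $S^c_\infty$ through the central seam, so the classical window-movie lemma, stated for windows that bipartition the plane, does not apply verbatim. This is exactly the reason the more general finite windows were introduced. To repair it I would enlarge each interface window to include both the choke and the $2c$ central-seam edges it meets, so that the window genuinely separates the translated sub-assembly and the displaced tiles fall at the seam columns governed by Lemma~\ref{lemma:pointlanding}. I must then verify that the enlarged windows remain mutual translates at the chosen levels, prove the corresponding window-movie lemma for finite separating windows, confirm that $\beta$ is genuinely $\tau$-stable and reachable by a legal assembly sequence, and reconcile which of $p^\ell_{\vec v},p^r_{\vec v}$ is actually displaced with the ``at least one'' conclusion of the lemma; pinning down these details, together with the exact level $k$, is where the real work lies.
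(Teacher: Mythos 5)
Your skeleton is the same as the paper's (pigeonhole over choke window-movies, splice a translated subassembly, invoke Lemma~\ref{lemma:pointlanding}), but the step you yourself flag as the main obstacle is a genuine gap, and your proposed repair cannot work as stated. You suggest enlarging each window to ``the choke plus the $2c$ central-seam edges it meets,'' but a right choke does not meet the seam locally: the seam band bounding the region beyond $\texttt{choke}^r_i$ sits at offset roughly $(-c2^i, c2^{i+1})$ from that choke, while the corresponding band for $\texttt{choke}^r_j$ sits at offset roughly $(-c2^j, c2^{j+1})$. So the enlarged windows at levels $i$ and $j$ are \emph{not translates} of one another, and window-movie equivalence --- which is defined only up to translation --- cannot even be posed between them; your pigeonhole collapses before the splice begins. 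Worse, even at a fixed level the enlarged window still fails to separate: the sub-triangle beyond $\texttt{choke}^r_i$ also meets the rest of the fractal at $\texttt{choke}^r_{i+1}$ at its top corner, and any finite cut that does isolate a seam-containing region has geometry (and enclosed shape, a copy of $S^c_i$ versus $S^c_j$) that grows with the level, so no two levels ever yield congruent separating windows. Finally, the region beyond a right choke contains only the column-$0$ half of a seam band, so your spliced $\beta$ displaces only $p^r$; the lemma's conclusion is a disjunction (``$p^\ell_{\vec v} \notin S^c_\infty$ \emph{or} $p^r_{\vec v} \notin S^c_\infty$''), which then gives no contradiction. You noticed this reconciliation problem but supplied no mechanism for it --- and the mechanism is the heart of the proof.

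The paper resolves both difficulties at once by abandoning spatial separation: it keeps the bare $c$-edge chokes and pigeonholes additionally on a ``side'' bit $\texttt{side}_i$ recording which of the two center columns ($x=-1$ or $x=0$) is tiled first above the level-$i$ chokes. With $\texttt{side}_i=\texttt{side}_j=0$, it defines $A$ \emph{temporally}: the growth north of $\texttt{choke}_j$ frozen at the moment the first tile appears in the translated-center column $x=v_x$. Connectivity of assemblies plus the side bit confine this snapshot to one side of that column (growth cannot have leaked around, since the only crossings of the choke height and of the seam would force an earlier center-column or $v_x$-column tile), and the cutoff simultaneously manufactures the horizontally adjacent pair $\texttt{right}$ (the crossing tile) and $\texttt{left}$ (its west neighbor), \emph{both} inside $A$. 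The splice then runs in the opposite direction from yours: $A_{-\vec v} \cup B$, with $B$ chosen to avoid the $x=0$ column north of $\texttt{choke}_i$, is producible; strict assembly pins the translated pair into a seam band $k>i$, and Lemma~\ref{lemma:pointlanding} applied forward shows $\texttt{left}$ and $\texttt{right}$ cannot both lie in $S^c_\infty$, contradicting that $\overrightarrow{\alpha}$ placed both. The side-bit-plus-first-crossing device is exactly the missing idea in your proposal: without it you can obtain neither the producibility of your $\beta$ nor a seam-spanning pair to feed the lemma.
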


\begin{proof}  An illustration of the following proof is provided in Figure~\ref{aTAM_basicIdea}.
Towards a contradiction, suppose $\Gamma = (T,\sigma, \tau)$ strictly assembles $S_{\infty}^c$ for some scale factor $c$.  Suppose $\sigma$ occurs at point $p_\sigma$.  Consider large enough integer $\texttt{base}$ such that $S^c_{\texttt{base}}$ contains point $p_\sigma$.  Consider a second integer $\texttt{big} >> \texttt{base}$ along with an assembly sequence $\overrightarrow{\alpha}$ of $\Gamma=(T,\sigma,\tau)$ that tiles all points in $S^c_{\texttt{big}}$.  Note that $\overrightarrow{\alpha}$ must exist for any such integer $\texttt{big}$ as $\Gamma$ strictly assembles $S_{\infty}^c$.

\begin{figure}
\begin{center}
	\includegraphics[width=.95\textwidth]{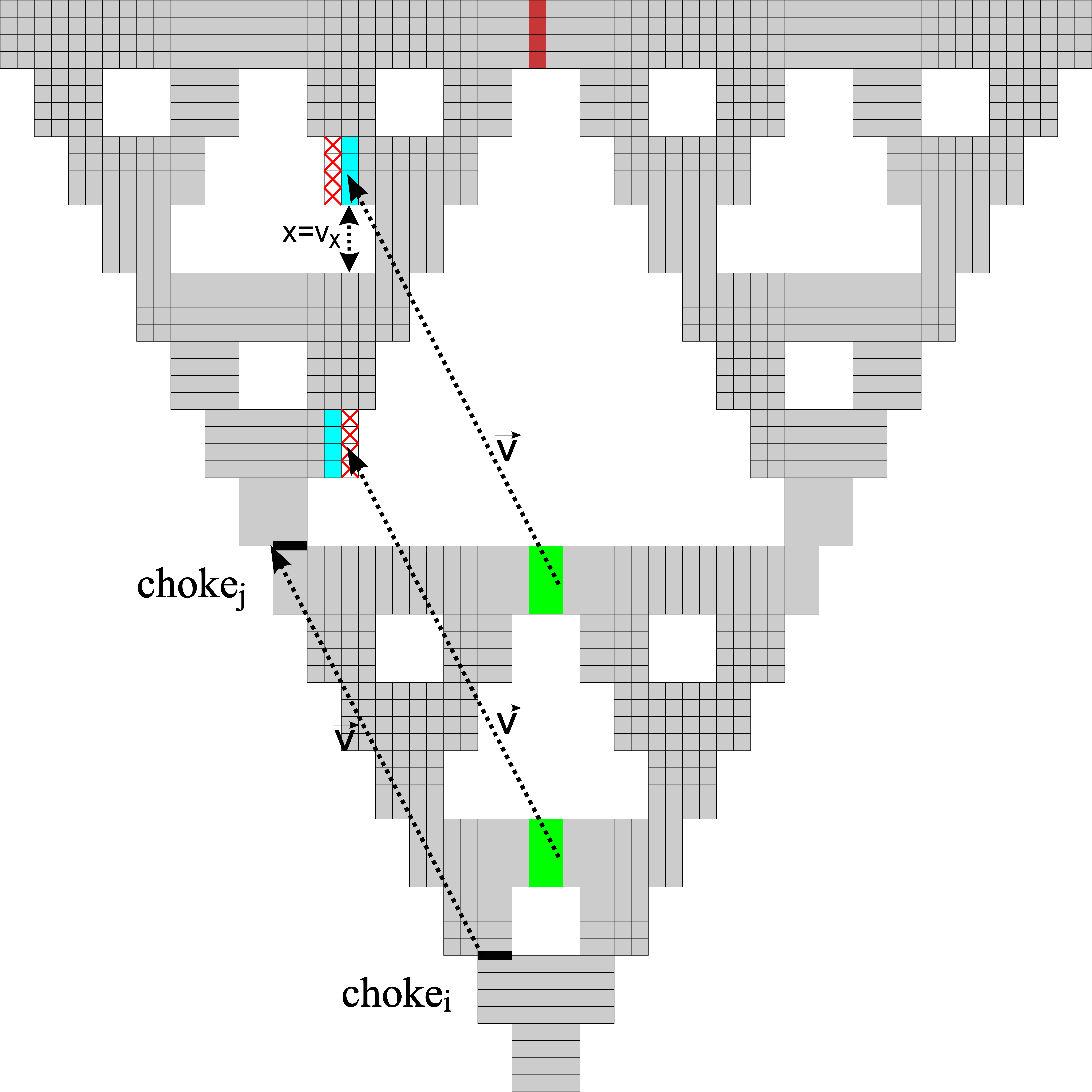}
	%\caption{Simplified illustration of the proof.  By ensuring a repeated window-movie along two choke points, we consider the tiling of the red center tiles by the higher choke point.  To tile this center requires passing the blue tiles illustrated here that represent the fractal center columns when translated to the lower choke point.  As the chokes have identical window-movies and an additional constraint, we can consider an alternate assembly sequence that tiles the region after $choke_i$ in the same manner as was done for $choke_j$.  If $S^c_{\infty}$ is self-assembled, the blue center tiles must occur at specific locations.  However, any tiles placed within these restricted regions must be such that either the left or right tiles of the region, depending on the path taken, lie outside of the fractal shape when translated back to above the top choke point, according to Lemma~\ref{lemma:pointlanding}}
    \caption{Simplified illustration of the proof. We consider the tiling of the $x=-1$ column of the fractal above $choke_j$ (shown in red). Tiling this column requires passing the $x=v_x$ column. As the chokes share equivalent window-movies, we can consider an alternate assembly sequence that tiles the region after $choke_i$ in the same manner as $choke_j$. If $S^c_{\infty}$ is self-assembled, it must be that the $x=-1$ and $x=0$ columns above $choke_i$ (shown in green) are placed with specific $y$-ordinates. According to Lemma~\ref{lemma:pointlanding}, when these tiles are translated back to $choke_j$ (shown in blue), either the left or right tiles lie outside of $S^c_{\infty}$ (shown by x'd tiles).}
	\label{aTAM_basicIdea}
\end{center}
\end{figure}

\begin{figure}
\begin{center}
	\includegraphics[width=.55\textwidth]{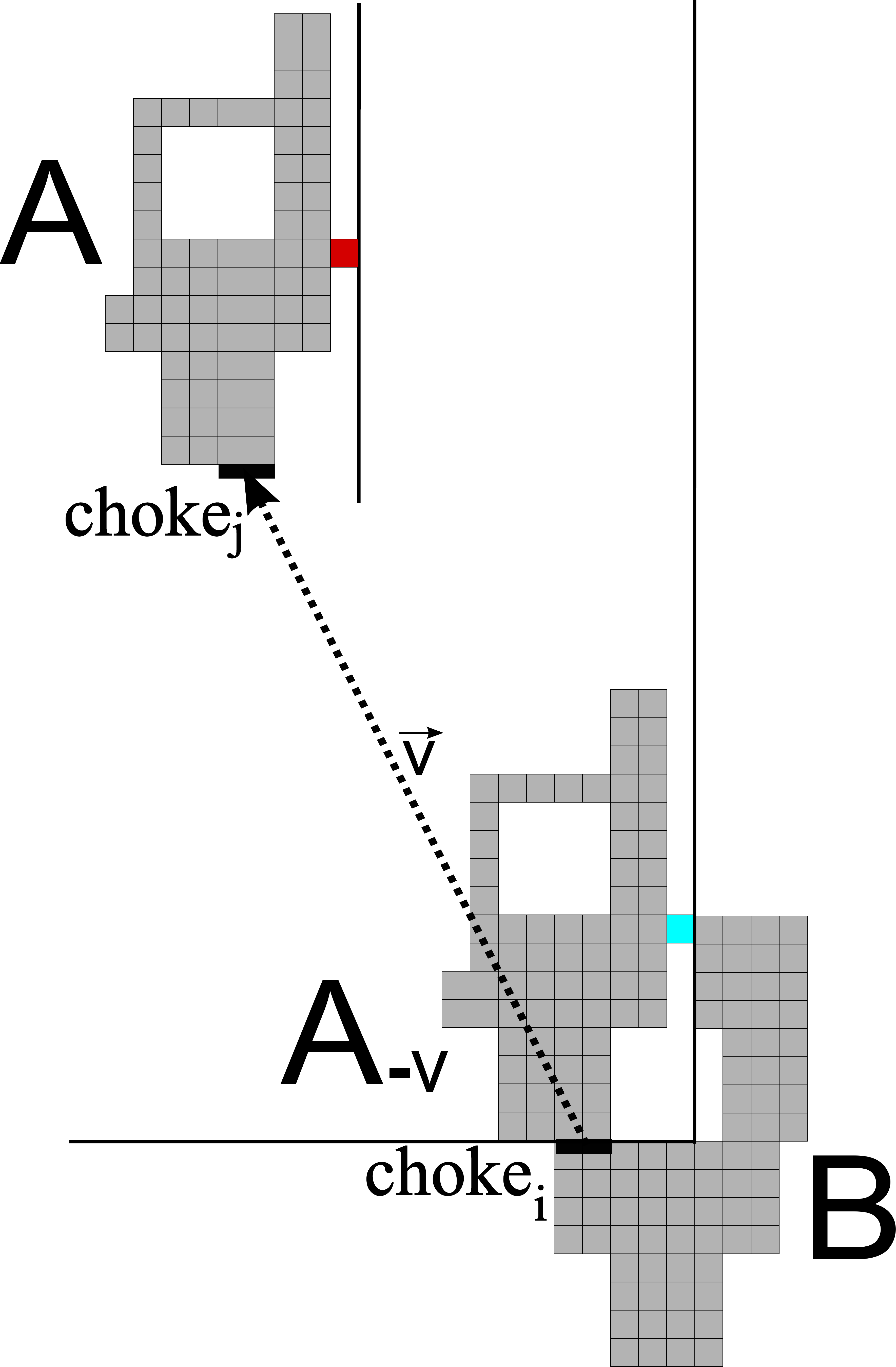}
	%\caption{Simplified illustration of the proof.  By ensuring a repeated window-movie along two choke points, we consider the tiling of the red center tiles by the higher choke point.  To tile this center requires passing the blue tiles illustrated here that represent the fractal center columns when translated to the lower choke point.  As the chokes have identical window-movies and an additional constraint, we can consider an alternate assembly sequence that tiles the region after $choke_i$ in the same manner as was done for $choke_j$.  If $S^c_{\infty}$ is self-assembled, the blue center tiles must occur at specific locations.  However, any tiles placed within these restricted regions must be such that either the left or right tiles of the region, depending on the path taken, lie outside of the fractal shape when translated back to above the top choke point, according to Lemma~\ref{lemma:pointlanding}}
    \caption{Illustrated here are examples of possible subassemblies $A, A_{-\vec{v}}$ and $B$. By definition, there are no tiles in $A$ east of the $x=v_x$ column. Thus, when we translate $A$ by vector $-\vec{v}$, there are no tiles east of the $x=0$ column. $B$ contains no tiles west of $x=1$ above $\texttt{choke}_i$, therefore ensuring that $A_{-\vec{v}}$ and $B$ contain no elements with the same coordinates. Thus, by also ensuring equivalent window-movies along $choke_i$ and $choke_j$, $A_{-\vec{v}} \bigcup B$ is producible. The single tile in this example of $A$ in the $x=v_x$ column, $\texttt{right}$, is shown in red, and its translated copy $\texttt{right}_{-\vec{v}}$ is shown in blue.}
	\label{aTAM_subassemblies}
\end{center}
\end{figure}

We now consider the window-movies for $\overrightarrow{\alpha}$ at the choke points that lie between $S^c_{\texttt{base}}$ and $S^c_{\texttt{big}}$.  Define $\texttt{choke}_i$ as follows:  Consider the first point tiled by $\overrightarrow{\alpha}$ with $x$-ordinate value of -1 or 0 (the center two columns of the fractal) that lie north of the edges of $\texttt{choke}^{\ell}_i$ and $\texttt{choke}^\texttt{r}_i$.  If this first tiled center point has $x$-ordinate value -1 (i.e., the tile is placed ``from the left"), define $\texttt{choke}_i \triangleq \texttt{choke}^{\ell}_i$, and define $\texttt{side}^{\overrightarrow{\alpha}}_i \triangleq 0$. Define $\texttt{choke}_i \triangleq \texttt{choke}^{r}_i$ symmetrically if the first tiled center column point has $x$-ordinate value $0$ (i.e., the tile is placed ``from the right"), and define $\texttt{side}^{\overrightarrow{\alpha}}_i\triangleq 1$.

Now assume $\texttt{big}$ is large enough such that there exist $i,j$, $\texttt{big} > j > i > \texttt{base}$, such that 1) $\texttt{side}_i = \texttt{side}_j$, and 2) $M_{\overrightarrow{\alpha},\texttt{choke}_i}$ is equivalent to $M_{\overrightarrow{\alpha},\texttt{choke}_j}$.  Note that such a $\texttt{big}$ must exist by the pigeon-hole principle since there are an infinite number of chokes in $S_{\infty}^c$.

We now construct a pair of points that must be placed by $\overrightarrow{\alpha}$.  Further, we utilize the established equivalent window-movies and Lemma~\ref{lemma:pointlanding} to show that one of these two points must lie outside of $S^c_{\infty}$, thus implying that $\overrightarrow{\alpha}$ does not self-assemble $S^c_{\infty}$.  The pair of tiles placed at these locations will be called \texttt{left} and \texttt{right}.

%We now generate two separate pairs of adjacent tiles that must be generated by two (different) valid assembly sequences, constrained such that all four tiles cannot be located in $S^c_{\infty}$, thus leading to a contradiction in that there exists an assembly sequence that can place any of the four tiles.  The four tiles will be called $\texttt{left}$, $\texttt{right}$, $\texttt{left}_v$, and $\texttt{right}_v$.

Without loss of generality, assume $\texttt{side}_i = \texttt{side}_j = 0$.  Let $\vec{v}=(v_x, v_y)$ denote the translation difference from $\texttt{choke}_i$ to $\texttt{choke}_j$.  Consider the subassembly of $\overrightarrow{\alpha}$'s final produced assembly consisting of the tiled points north of the edge $\texttt{choke}_j$ and west of the $x=0$ column that are elements of the largest assembly in $\overrightarrow{\alpha}$ which contains exactly one tile in the $x=v_x$ column north of $\texttt{choke}_j$.  Call this subassembly $A$, call the tile north of $\texttt{choke}_j$ in the $x=v_x$ column $\texttt{right}$, and call the tile placed west of this tile $\texttt{left}$. Note that the translations of the $\texttt{right}$ and $\texttt{left}$ tiles by vector $-\vec{v}$, $\texttt{right}_{-\vec{v}}$ and $\texttt{left}_{-\vec{v}}$, lie in the center two respective columns $x=-1$ and $x=0$ of the fractal.  Let $A_{-\vec{v}}$ denote the translation of $A$ by vector $-\vec{v}$. Examples of $A$ and $A_{-\vec{v}}$ are shown in Figure~\ref{aTAM_subassemblies}.

Now consider the subassembly $B$ defined to be the subassembly consisting of any tiles below $\texttt{choke}_i$ or east of column $x=0$ that are elements of the largest assembly in $\overrightarrow{\alpha}$ that does not contain any tiles in the $x=0$ column north of $\texttt{choke}_i$. An example of $B$ shown in Figure~\ref{aTAM_subassemblies}.  Now observe that $A_{-\vec{v}} \bigcup B$ is producible.  This is due to the equivalent window-movies that produced the original untranslated $A$, along with the constraint that $B$ does not cross the middle columns of the fractal, which allows $A_{-\vec{v}}$ to grow in isolation in the same fashion $A$ was assembled.  As $A_{-\vec{v}} \bigcup B$ is producible by a system that allegedly assembles $S^c_{\infty}$, all tiles in $A_{-\vec{v}} \bigcup B$ within the center $x=-1$ and $x=0$ columns, which includes $\texttt{right}_{-\vec{v}}$ and $\texttt{left}_{-\vec{v}}$, must occur at $y$ location between $c2^{k+1}$ and $c2^{k+1} + 2c-1$ for a positive integer $k > i$. Note that if $\texttt{right}_{-\vec{v}}$ and $\texttt{left}_{-\vec{v}}$ fall outside of this region, they are inconsistent with $S^c_{\infty}$. Thus, the location of $\texttt{right}_{-\vec{v}}$ and $\texttt{left}_{-\vec{v}}$ satisfies the constraints for points $p^\ell$ and $p^r$ from Lemma~\ref{lemma:pointlanding}, which in turn tells us that $\texttt{right}$ and $\texttt{left}$, which are both placed by a valid assembly sequence, are not both contained in $S^c_{\infty}$.  Therefore, $\Gamma$ does not self-assemble $S^c_{\infty}$.
\end{proof}

\subsection{Perfection Requires Many Hands}
In this section we show that achieving near-perfect assembly of the Sierpinski triangle and the Sierpinski carpet requires at least 3 and 8 hands respectively.

\begin{theorem}\label{thm:nearPerfectImpossible}
Consider a multiple hand system $(T,\tau,h)$.
\begin{itemize}
    \item If $h < 3$, then $\Gamma$ does not near-perfectly assemble $S^c_\infty$ for any scale factor $c$.
    \item If $h < 8$, then $\Gamma$ does not near-pefectly assemble $C^c_\infty$ for any scale factor $c$.
\end{itemize}
\end{theorem}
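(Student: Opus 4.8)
The plan is to prove both impossibility bounds by a single counting argument that exploits the \emph{self-similar branching factor} of each fractal together with the two conditions of near-perfect assembly (Definition~\ref{nearPerfectAssembly}). The Sierpinski triangle is built by tripling ($S_{i+1}$ consists of three translated copies of $S_i$), so $|S^c_i| = 4c^2 3^i$ and $|S^c_{i+1}| = 3|S^c_i|$, while the carpet has branching factor eight, with $|C^c_i| = c^2 8^i$ and $|C^c_{i+1}| = 8|C^c_i|$. Concretely, I will show that near-perfect assembly forces each ``stage jump'' to be realized by one combination step that fuses at least $b$ near-complete copies of the previous stage, where $b=3$ for the triangle and $b=8$ for the carpet; since one step combines at most $h$ assemblies, this gives $h \ge b$, and contrapositively $h<3$ (resp. $h<8$) precludes near-perfect assembly.

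First I would record the structural consequence of condition~2: there is a global constant $c_0$ such that every producible assembly $A$ is, up to translation, near some stage $X_k$ (writing $X$ for whichever of $S^c, C^c$ is under consideration), meaning $S_A \subseteq X_k$ with $|X_k \setminus S_A| \le c_0$; in particular $|X_k| - c_0 \le |A| \le |X_k|$. Fixing a target index $i+1$ and calling $A$ \emph{large} when $|A| \ge |X_{i+1}| - c_0$, this yields a clean dichotomy for all sufficiently large $i$: every producible assembly is either large or has size at most $|X_i|$, since a non-large assembly near $X_k$ forces $|X_k| < |X_{i+1}|$ and hence $k \le i$. Condition~1 guarantees that a large producible assembly exists for every $i$, giving me something concrete to analyze.

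Next, fixing a large $i$, I would take $A^\ast$ to be a large producible assembly of \emph{minimum size}. Since $|A^\ast| > 1$ it is not a single tile, so (passing to the last nontrivial combination if necessary) it is formed by combining $r$ producible pieces $A_1,\dots,A_r$ with $2 \le r \le h$, whose translated copies are disjoint and partition $A^\ast$; hence $\sum_{j} |A_j| = |A^\ast|$. Minimality is the crux: no $A_j$ can itself be large, for a large $A_j$ would be a strictly smaller large producible assembly, contradicting the choice of $A^\ast$. By the dichotomy each $A_j$ therefore has size at most $|X_i|$, so $r\,|X_i| \ge \sum_j |A_j| = |A^\ast| \ge |X_{i+1}| - c_0 = b\,|X_i| - c_0$, where $b=3$ for $S^c$ and $b=8$ for $C^c$. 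Thus $r \ge b - c_0/|X_i| > b-1$ once $i$ is large enough that $|X_i| > c_0$, and since $r$ is an integer, $r \ge b$. As $r \le h$ we obtain $h \ge b$, which is the claim for every scale factor $c$.

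The main obstacle I anticipate is not the arithmetic but excluding a ``padding'' loophole: in principle a system might reach a near-$X_{i+1}$ assembly by attaching only a bounded number of tiny pieces to something already essentially complete, so that no single step overtly fuses $b$ large copies. The minimum-size choice of $A^\ast$ is exactly what defeats this, since it forbids any large input piece and thereby caps every input at size $|X_i|$. The remaining care is routine: extracting the uniform constant $c_0$ from condition~2 independently of $i$, dispatching the degenerate $r \le 1$ combinations (mere translations) by retreating to the last genuine combination step, and confirming that the combined translated pieces are disjoint so that sizes truly add.
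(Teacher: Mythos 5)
Your proposal is correct and takes essentially the same route as the paper's proof: both arguments use the size quantization forced by the two near-perfection conditions and then analyze the combination step producing a near-stage assembly, concluding that at most $h$ pieces, each of size at most $|X_i|$, cannot sum to roughly $b\,|X_i|$ when $h<b$ (with $b=3$ for $S^c_\infty$ and $b=8$ for $C^c_\infty$). Your minimum-size choice of $A^\ast$ is a cleaner, rigorous version of the paper's informal descent (``grab the larger $B$ assembly and try again''), and your uniform branching-factor formulation makes explicit the carpet case that the paper only sketches.
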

\begin{proof}
We prove the result in the case of the Sierpinski triangle.  The result for the Sierpinski carpet is realized in a similar fashion.  Towards a contradiction, suppose $\Gamma=(T,\tau,2)$ near-perfectly assembles $S^c$ for some positive integer $c$.  Let constant $d$ be a constant at least as large as the constants $\Gamma$ uses to satisfy constraints (1) and (2) of the near-perfect self-assembly definition.  Consider $S_i$ such that $|S_i| > d$.  By constraint (1) of near-perfection, there exists an assembly $A \in \texttt{PROD}_\Gamma$ whose size is such that $|S_i| \geq |A| \geq |S_i| - d$.  As $h=2$, we know there exists $B_1, B_2 \in \texttt{PROD}_\Gamma$ such that $\{B_1,B_2\}$ is $\tau$-combinable into $A$.  Further, we know that there must exist such an $A$ such that $|B_1| < |S_i|-d$ and $|B_2| < |S_i|-d$ (if this weren't the case, then grab the larger $B$ assembly and try again, repeating until the constraint is satisfied).  Finally, as $B_1$ and $B_2$ combine to make $A$, we know that $B'\triangleq \max(|B_1|,|B_2|) > 1/2|A| \geq 1/2|S_i| -1/2d$.  By considering large enough $i$ with respect to constant $d$, we further get that $|B'| > 1/2|S_i|-1/2d >1/3|S_i|\geq |S_{i-1}|$, yielding $|S_i|-d > |B'| > |S_{i-1}|$.  As $B\in \texttt{PROD}_\Gamma$, this contradicts property (2) of near-perfect assembly as $B'$ is too small to be close to shape $S_{i}$, and too big to be a subset of shape $S_{i-1}$.  With regard to the Sierpinski carpet, this contradiction is drawn in a similar fashion as above for all $2\leq h\leq7$.
\end{proof}

\section{Future Work}
\label{sec:futureWorks}

There are a number further research directions related to this work.  First, knowledge of strictly assembled fractals within the aTAM is still incomplete.  Impossibility results have shown that a large class of ``pinch-point" fractals~\cite{jSADSSF}, ``tree" fractals~\cite{STFDNSS}, and most generally ``pier" fractals~\cite{SPFDNSS} cannot be self-assembled in the aTAM at any scale.  Extending these impossibility proofs to essentially all non-trivial fractal shapes remains open.  In particular, the Sierpinski carpet does not fall into these classes but is conjectured to not be self-assembled in the aTAM.  When combined with our positive $2$-HAM result, this proof would provide an interesting gap in power between the aTAM and the $2$-HAM for a natural and well studied shape.

The strict self-assembly of fractals in the $2$-HAM is still not well understood.  Almost none of the impossibility results for the aTAM hold within the $2$-HAM.  As an example, while it is known that the scale 1 tree Sierpinski triangle cannot be self-assembled in the 2-HAM, it is only \emph{conjectured} to be impossible at a constant scale.  Beyond the 2-HAM, we can consider the \emph{hand complexity} of an infinite shape $S$, which is the smallest number of hands needed for an $h$-HAM system to self-assemble $S$.  Is the hand complexity of the non-tree Sierpinski triangle 3?  Moreover, we see that in the case of the non-tree Sierpinski triangle, an increase in hands from 3 to 6 allows for a much simpler construction with respect to tile complexity. Is there a more general connection between tile complexity, or other efficiency metrics, and hand complexity?

The consideration of multiple handed self-assembly introduces a number of additional directions beyond fractal assembly.  For example, what finite shapes can be built more efficiently with more hands?  Can a shape built with a given number of hands be transformed into an equivalent system with fewer hands, perhaps at the cost of scale factor?  How are well studied computational complexity questions such as \emph{producibility verification} and \emph{unique assembly verification}~\cite{2HABTO} affected by the $h$ parameter in an $h$-HAM systems?  Another interesting direction is to consider $h$-HAM assembly as an error model in which an intended 2-HAM system must be designed be robust against assembly growth with up to $h$ hands.  Can general 2-HAM systems be made robust to such constraints at a reasonable cost to tile complexity and scale factor?

\bibliographystyle{amsplain}
\bibliography{tam}

\end{document}